\newcommand{\bea}{\begin{eqnarray}}
\newcommand{\eea}{\end{eqnarray}}
\newcommand{\be}{\begin{equation}}
\newcommand{\ee}{\end{equation}}
\def\IZ{\mathbb {Z}}
\def\IR{\mathbb {R}}
\newtheorem{theorem}{Theorem}[section]
\newtheorem{proposition}[theorem]{Proposition}
\newtheorem{lemma}[theorem]{Lemma}
\newtheorem{corollary}[theorem]{Corollary}
\theoremstyle{definition}
\newtheorem{definition}[theorem]{Definition}
\newtheorem{remark}[theorem]{Remark}
\renewenvironment{proof}{{\noindent\bf Proof.}}{\hfill $\Box$\par\vskip3mm}
\newcommand{\ii} {\pmb{i}}
\begin{document}

%\date{}

\author[J.E. Andersen]{J{\o}rgen Ellegaard Andersen}
\address{QGM\\
Department of Mathematics\\
Aarhus University\\
DK-8000 Aarhus C\\
Denmark}
\email{jea.qgm{\char'100}gmail.com}

\author[H. Fuji]{Hiroyuki Fuji}
\address{Faculty of Education\\ 
Kagawa University\\
Takamatsu 760-8522\\
Japan;
QGM\\
Aarhus University\\
DK-8000 Aarhus C\\
Denmark
}
\email{fuji{\char'100}ed.kagawa-u.ac.jp}

\author[M. Manabe]{Masahide Manabe}
\address{Faculty of Physics\\
University of Warsaw\\
ul. Pasteura 5, 02-093 Warsaw\\
Poland}
\email{masahidemanabe{\char'100}gmail.com}

\author[R. C. Penner]{Robert C. Penner}
\address{Institut des Hautes {\'E}tudes Scientifiques, 35 route de Chartres, 91440 Burs-sur-Yvette, France;
Division of Physics, Mathematics and Astronomy, California Institute of Technology, Pasadena, CA 91125, USA}
\email{rpenner{\char'100}caltech.edu,\hspace{0.3cm}rpenner@ihes.fr}

\author[P. Su{\l}kowski]{Piotr Su{\l}kowski}
\address{Faculty of Physics, University of Warsaw, ul. Pasteura 5, 02-093 Warsaw, Poland; Walter Burke Institute for Theoretical Physics, California Institute of Technology, Pasadena, CA 91125, USA}
\email{psulkows{\char'100}fuw.edu.pl}

\title[Partial chord diagrams and matrix models]{Partial chord diagrams and matrix models}

\thanks{{\bf Acknowledgments:}
JEA and RCP is supported in part by the center of excellence grant ``Center for Quantum Geometry of Moduli Spaces'' from the Danish National Research Foundation (DNRF95).
The research of HF is supported by the
Grant-in-Aid for Research Activity Start-up [\# 15H06453], Grant-in-Aid
for Scientific Research(C)  [\# 26400079], and Grant-in-Aid for Scientific
Research(B)  [\# 16H03927]  from the Japan Ministry of Education, Culture,
Sports, Science and Technology.
The work of MM and PS is supported by the ERC Starting Grant no. 335739 ``Quantum fields and knot homologies'' funded by the European Research Council under the European Union's Seventh Framework Programme.
PS also acknowledges the support of the Foundation for Polish Science, and RCP acknowledges the kind support of
Institut Henri Poincar\'e where parts of this manuscript were written.}
\begin{abstract}
In this article, the enumeration of partial chord diagrams is discussed via matrix model techniques.
In addition to the basic data such as the number of backbones and chords, we also consider 
the Euler characteristic, the backbone spectrum, the boundary point spectrum, and the boundary length spectrum.
Furthermore, we consider the boundary length and point spectrum that unifies the last two types of spectra. 
We introduce matrix models that encode generating functions of partial chord diagrams filtered by each of these spectra.
Using these matrix models, we derive partial differential equations -- obtained independently by cut-and-join arguments in an earlier work -- for the corresponding generating functions.
\end{abstract}

\maketitle
\tableofcontents

%%%%%%%%%%%%%%%%%%%%%%%%%%%%%%%%%%%%%%%%%
%      Section1                                                                                                         %
%%%%%%%%%%%%%%%%%%%%%%%%%%%%%%%%%%%%%%%%%
\section{\label{sec:1}Introduction}
%%%%%%%%%%%%%%%%%%%%%%%%%%%%%%%%%%%%%%%%%
%%%%%%%%%%%%%%%%%%%%%%%%%%%%%%%%%%%%%%%%%

A {\em partial chord diagram} is a special kind of graph, which is specified as follows. The graph consists of a number of line segments (which are called {\em backbones}) arranged along the real line (hence they come with an ordering), with a number of vertices on each. A number of semi-circles (called {\em chords}) arranged in the upper half plane is attached at a subset of the vertices of the line segments, in such a way that no two chords have endpoints at the same vertex. The vertices which are not attached to chord ends are called the marked points. A {\em chord diagram} is by definition a partial chord diagram with no marked points. Partial chord diagrams occur in many branches of mathematics, 
including topology \cite{Ba,Kontsevich2}, geometry \cite{AMR1,AMR2,ABMP}
and representation theory \cite{C-SM}. 

To each partial chord diagram $c$ one can associate canonically a two dimensional surface with boundary $\Sigma_c$, see Figure \ref{partial_chord}.
Moreover, as discussed in \cite{VOZ,APRWat, AAPZ, Andersen_new}, 
the notion of a {\em fatgraph} \cite{P1,P2,P3,P4} is a useful concept when studying partial chord diagrams.
A fatgraph is a graph together with a cyclic ordering on each collection of half-edges 
incident on a common vertex.
A partial linear chord diagram $c$ has a natural fatgraph structure induced from its presentation in the plane.

\begin{figure}[h]
\begin{center}
   \includegraphics[width=120mm,clip]{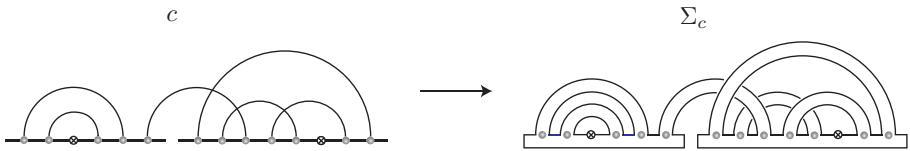}
\end{center}
\caption{\label{partial_chord} The partial chord diagram (with marked points) $c$ and the corresponding surface $\Sigma_c$.
 The type of this  partial chord diagram reads
 $\{g,k,l;\{b_i\};\{n_i\};\{p_i\}\}=\{1,6,2;\{b_6=1,b_8=1\};\{n_0=2,n_1=2\};\{p_1=1,p_2=2,p_9=1\}\}$. The boundary length and point spectrum is $\{n_{(1)}=1,n_{(0,0)}=2,n_{(0,0,0,0,0,1,0,0,0)}=1\}$.
 }
\end{figure}

The partial chord diagram $c$ is characterized by various topological data, and we will consider the following five types of data, introduced in \cite{AAPZ} and \cite{Andersen_new}.
\begin{itemize}

\item The number of chords $k$ in $c$ and the number  of backbones $b$ in $c$.

\item Euler characteristic $\chi$ and genus $g$.\\
Let $\chi$ and $g$ denote respectively the Euler characteristic and genus of $\Sigma_c$, which are related as follows
\begin{align}
\chi=2-2g.
\nonumber
\end{align}
Denoting by $n$ the number of boundary components of $\Sigma_c$,  the Euler relation can be written as
\begin{align} 
2-2g=b-k+n.
\end{align}

\item Backbone spectrum $(b_0,b_1,\ldots)$.\\
Let $b_i$ denote the number of backbones with $i$ trivalent (i.e. chord ends) or bivalent (i.e. marked points) vertices.
The total number of backbones $b$  is then
\begin{align}
b=\sum_{i\ge 0}b_i,
\end{align}
and the total number $m$ of trivalent (i.e. chord ends) and bivalent (i.e. marked points) vertices 
of the partial chord diagram $c$ is
\begin{align}
m=\sum_{i\ge 1}ib_i.
\end{align}

\item Boundary point spectrum $(n_0,n_1,\ldots)$.\\
Let $n_i$ denote the number of boundary components containing $i\ge 0$ 
marked points of $\Sigma_c$.
The total number $n$ of boundary components 
is 
\begin{align}
n=\sum_{i\ge 0}n_i,
\end{align}
and the total number $l$ of marked points is 
\begin{align}
l=\sum_{i\ge 1}i n_i.
\end{align}
These three numbers $m$, $k$ and $l$ satisfies
\begin{align}
m=2k+l .
\end{align}

\item Boundary length spectrum $(p_1,p_2,\ldots)$.\\
Define the {\em length} of a boundary component to be the sum of the number of 
 chords and the number of backbone undersides traversed by the boundary cycle. 
Let $p_i$ be the number of boundary cycles with length $i\ge 1$.
By definition,  the following two relations hold
\begin{align}
&n=\sum_{i\ge 1}p_i,\\
&2k+b=\sum_{i\ge 1}ip_i.
\end{align}

\end{itemize}
The data $\{g,k,l;\{b_i\};\{n_i\};\{p_i\}\}$  
is called the {\em type} of a partial chord diagram $c$.

As a unification of the boundary length spectrum and the boundary point spectrum, we consider the {\em boundary length and point spectrum} introduced in \cite{Andersen_new}. Let us here recall its definition.
\begin{itemize}
\item Boundary length and point spectrum.\\
We associate a $K$-tuple of numbers $\ii=(i_1,\ldots,i_K)$ with a boundary component of length $K$, where $i_L$ ($L=1,\ldots,K$) is the number of marked points between the $L$'th and $(L+1)$'th (taken modulo $K$) either chord or underpass of a backbone component (in either order) along the boundary. 

Let $n_{\ii}$ be the number of boundary components labeled in this way by $\ii$.
The total number $l$ of marked points is
\begin{align}
&l=\sum_{K\ge 1}\sum_{\ii}\sum_{L=1}^Ki_Ln_{(i_1,\ldots,i_K)},
\end{align}
and the total number $n$ of boundary cycles is
\begin{align}
n=\sum_{\ii}n_{\ii}.
\end{align}
\end{itemize}
The data $\{g,k,l;\{b_i\},\{n_{\ii}\}\}$ stores more detailed information on the distribution of  marked points on each boundary component.
One can determine the previous two kinds of spectra from  the boundary length and point spectrum by forgetting the partitions of marked points on the boundary cycles.

It is known that the enumeration of chord diagrams is intimately related to matrix models and cut-and-join equations \cite{ACPRS,ACPRS2,Andersen_new2,Dumitrescu:2012dka,MuSu}.
In this paper, the enumeration of partial chord diagrams labeled by the boundary length and point spectrum with the genus filtration is studied using matrix model techniques. 
Let ${\mathcal N}_{g,k,l}(\{b_i\},\{n_{\ii}\})$ denote the number of connected
 chord diagrams labeled by the set of parameters $(g,k,l;\{b_i\};\{n_{\ii}\})$.
We define the generating function of these numbers 
\begin{align}
\begin{split}
&
{\mathcal F}(x,y;\{s_i\};\{u_{\ii}\})=\sum_{b\ge 1}{\mathcal F}_b(x,y;\{s_i\};\{u_{\ii}\}),
\\
&
{\mathcal F}_b(x,y;\{s_i\};\{u_{\ii}\})=\frac{1}{b!}
\sum_{\sum_ib_i=b}\sum_{\{n_{\ii}\}}
%\\
%&\qquad\qquad\qquad\qquad\qquad\quad
{\mathcal N}_{g,k,l}(\{b_i\},\{n_{\ii}\})x^{2g-2}y^k
\prod_{i\ge 0}s_i^{b_i}\prod_{K\ge 1}\prod_{\{i_L\}_{L=1}^K} u_{\ii}^{n_{\ii}}.
\end{split}
\end{align}
Generating functions of disconnected and connected diagrams are related via the exponential relation
\begin{align}
&{\mathcal Z}(x,y;\{s_i\};\{u_{\ii}\})=\exp\left[{\mathcal F}(x,y;\{s_i\};\{u_{\ii}\})\right].
\end{align}
To analyze this enumeration further, we write the above generating function as a certain Hermitian matrix integral.
Let  ${\mathcal Z}_N(y;\{s_i\};\{u_{\ii}\})$ be the matrix integral over rank $N$ Hermitian matrices ${\mathcal H}_N$
\begin{align}
\begin{split}
&
{\mathcal Z}_N(y;\{s_i\};\{u_{\ii}\}) =
\\
&
=\frac{1}{\mathrm{Vol}_N}\int_{\mathcal{H}_N} dM\;\exp\bigg[
-N\mathrm{Tr}\bigg(\frac{M^2}{2}-\sum_{i\ge 0}s_i(y^{1/2}\Lambda_{\mathrm{L}}^{-1}M+\Lambda_{\mathrm{P}})^i\Lambda_{\mathrm{L}}^{-1}\bigg)
\bigg],
\end{split}
\end{align}
where $\Lambda_{\mathrm{P}}$ and $\Lambda_{\mathrm{L}}$ are {\em external  matrices} \cite{Kontsevich1} of rank $N$, and the normalization factor $\mathrm{Vol}_N$ is defined in (\ref{volN_def}).
In this matrix integral representation,
the counting parameter $u_{(i_1,\ldots,i_K)}$ is identified with the trace of the corresponding product of external matrices
\begin{align}
u_{(i_1,\ldots,i_K)}=\frac{1}{N}\mathrm{Tr}\left(
\Lambda_{\mathrm{P}}^{i_1}\Lambda_{\mathrm{L}}^{-1}\Lambda_{\mathrm{P}}^{i_2}\Lambda_{\mathrm{L}}^{-1}\cdots\Lambda_{\mathrm{P}}^{i_K}\Lambda_{\mathrm{L}}^{-1}
\right).
\end{align}
In Theorem \ref{Thm:blp_matrix}  we show that 
\begin{align}
{\mathcal Z}_N(y;\{s_i\};\{u_{\ii}\})=
{\mathcal Z}(N^{-1},y;\{s_i\};\{u_{\ii}\}).
\end{align}

\begin{figure}[h]
\begin{center}
   \includegraphics[width=120mm,clip]{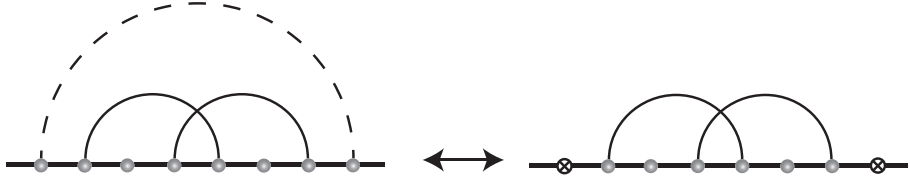}
\end{center}
\caption{\label{cut_join} The cut-and-join manipulations on chord diagrams.}
\end{figure}

This matrix integral representation provides a new, matrix model proof of the cut-and-join equation found by combinatorial means in \cite{Andersen_new}. The cut-and-join equation can be written as
\begin{align}
\frac{\partial}{\partial y} {\mathcal Z}(x,y;\{s_i\};\{u_{{\ii}}\})
={\mathcal M}{\mathcal Z}_N(x,y;\{s_i\};\{u_{{\ii}}\}),
\end{align}
where ${\mathcal M}$ is the second order partial differential operator in variables $u_{\ii}$ (see Theorem \ref{thm:cut_join_blp} for details). 
This cut-and-join equation can be regarded as the evolution equation in the variable $y$, and its formal solution reads 
\begin{align}
\begin{split}
&
{\mathcal Z}(x,y;\{s_i\};\{u_{{\ii}}\})=\mathrm{e}^{y{\mathcal M}}{\mathcal Z}(x,0;\{s_i\};\{u_{{\ii}}\}),
\\
&
{\mathcal Z}(x,0;\{s_i\};\{u_{{\ii}}\})=\mathrm{e}^{N^2\sum_{i\ge 0}s_iu_{(i)}}.
\end{split}
\end{align}

Expanding the operator $\mathrm{e}^{y{\mathcal M}}$ around $y=0$, 
one determines the number of connected partial chord diagrams ${\mathcal N}_{g,k,l}(\{b_i\},\{n_{\ii}\})$ iteratively from this formal solution.
The cut-and-join equation is a powerful method to  systematically count partial chord diagrams of a given length and point spectrum.

In this work we also generalize the above analysis to non-oriented analogues of partial chord diagrams.
By non-oriented partial chord diagrams we mean diagrams with all chords decorated by a binary variable, which indicates if they are {\em twisted} or not. When associating the surface ${\Sigma}_c$ to a non-oriented partial chord diagram, 
twisted bands are associated along the twisted chords as indicated in 
Figure \ref{partial_chord_non}.
This construction leads to $2^k$ orientable or non-orientable surfaces associated to one particular partial chord diagram with $k$ chords, if we consider all possible assignments of twisting or untwisting of $k$ bands. 
In the non-oriented case the Euler characteristic is defined as follows.
\begin{itemize}
\item Euler characteristic $\chi$.\\
The Euler characteristic of the two dimensional surface $\Sigma_c$
is defined by the formula
\begin{align}
\chi=2-h,
\nonumber
\end{align}
where $h$ is 
the number of cross-caps. The Euler relation holds
\begin{align}
 2-h=b-k+n.
\end{align}
\end{itemize}
With this setup, the enumeration of non-oriented partial chord diagrams can be considered analogously to the orientable case.

\begin{figure}[h]
\begin{center}
   \includegraphics[width=120mm,clip]{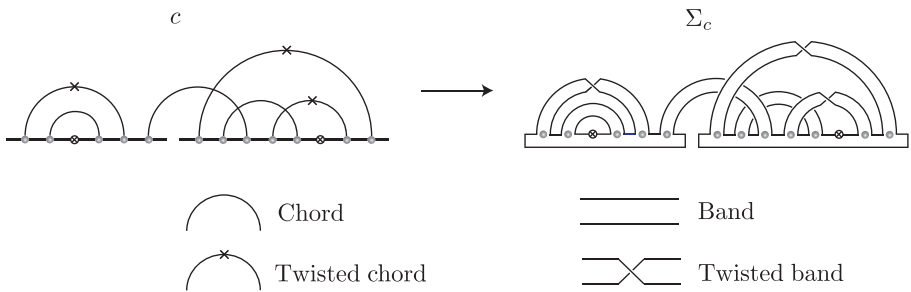}
\end{center}
\caption{\label{partial_chord_non} A non-oriented surface $\Sigma_c$ associated to a non-oriented partial chord diagram $c$.}
\end{figure}

Let $\widetilde{\mathcal N}_{h,k,l}(\{b_i\},\{n_{\ii}\})$ denote the number 
of connected non-oriented partial chord diagrams with the
cross-cap number $h$, $k$ chords, the backbone spectrum $\{b_i\}$, $l$ marked points,
 and  the boundary length and point spectrum $n_{\ii}$. The generating function
$\widetilde{\mathcal F}(x,y;\{s_i\};\{u_{\ii}\})$ is defined by
\begin{align}
\begin{split}
&
\widetilde{\mathcal F}(x,y;\{s_i\};\{u_{\ii}\})=\sum_{b\ge 1}
\widetilde{\mathcal F}_b(x,y;\{s_i\};\{u_{\ii}\}),\\
&
\widetilde{\mathcal F}_b(x,y;\{s_i\};\{u_{\ii}\})=\frac{1}{b!}
\sum_{\sum_ib_i=b}
\sum_{\{n_{\ii}\}}
 %\\
%&\qquad\qquad\qquad\qquad\qquad\quad
\widetilde{\mathcal N}_{h,k,l}(\{b_i\},\{n_{\ii}\})x^{h-2}y^k
\prod_{i\ge 0}s_i^{b_i}\prod_{K\ge 1}\prod_{\{i_L\}_{L=1}^K} u_{\ii}^{n_{\ii}}.
\end{split}
\end{align}
We also define the generating function of the numbers 
of connected and disconnected 
non-oriented partial chord diagrams
\begin{align}
\widetilde{\mathcal Z}(x,y;\{s_i\};\{u_{\ii}\})=\exp\left[\widetilde{\mathcal F}(x,y;\{s_i\};\{u_{\ii}\})\right].
\end{align}
In Theorem \ref{thm:blp_matrix_non} we show that
this generating function can be expressed as a real symmetric
matrix integral with two external symmetric matrices $\Omega_{\mathrm{P}}$ and $\Omega_{\mathrm{L}}$ 
\begin{align}
& 
\widetilde{\mathcal Z}(N^{-1},y;\{s_i\};\{u_{\ii}\})=\widetilde{\mathcal Z}_N(y;\{s_i\};\{u_{\ii}\}),
\\
&
\widetilde{\mathcal Z}_N(y;\{s_i\};\{u_{\ii}\}) =
\nonumber \\
&
=\frac{1}{\mathrm{Vol}_N(\mathbb{R})}\int_{{\mathcal H}_N(\mathbb{R})} dM\;\exp\bigg[
-N\mathrm{Tr}\bigg(\frac{M^2}{4}-\sum_{i\ge 0}s_i(y^{1/2}\Omega_{\mathrm{L}}^{-1}M+\Omega_{\mathrm{P}})^i
\Omega_{\mathrm{L}}^{-1}\bigg)
\bigg],
\end{align}
where the normalization factor $\mathrm{Vol}_N(\mathbb{R})$ is defined in (\ref{volN_def_r}), and ${\mathcal H}_N(\mathbb{R})$ is the space of real symmetric matrices of rank $N$. The parameter $u_{(i_1,\ldots,i_K)}$ is 
identified with a trace of the external matrices via the formula
\begin{align}
u_{(i_1,\ldots,i_K)}=\frac{1}{N}\mathrm{Tr}\left(
\Omega_{\mathrm{P}}^{i_1}\Omega_{\mathrm{L}}^{-1}\Omega_{\mathrm{P}}^{i_2}\Omega_{\mathrm{L}}^{-1}\cdots\Omega_{\mathrm{P}}^{i_K}\Omega_{\mathrm{L}}^{-1}
\right).
\end{align}

Using this matrix integral representation of the generating function, 
one can again prove the cut-and-join equation, established independently by combinatorial arguments in \cite{Andersen_new}
\begin{align}
&\frac{\partial}{\partial y} \widetilde{\mathcal Z}_N(y;\{s_i\};\{u_{{\ii}}\})=
\widetilde{\mathcal M}\widetilde{\mathcal Z}_N(y;\{s_i\};\{u_{{\ii}}\}),
\end{align}
where $\widetilde{\mathcal M}$ is a second order partial differential operator in the variables $u_{\ii}$.
The details of the differential operator $\widetilde{\mathcal M}$ and the matrix model derivation 
of the cut-and-join equation are presented in Theorem \ref{Thm:cut_join_non}.

\subsection{Motivation: RNA chains}

%The above study of partial chord diagrams are of course a natural extension of the cut-and-join considerations in \cite{AAPZ} and \cite{Andersen_new}, in the sense that here we provide matrix models for all these enumeration problems. 
One important motivation to study partial chord diagrams in this and the preceding work \cite{AAPZ,Andersen_new} is a complicated problem of RNA structure prediction in molecular biology, which we now shortly review.

An RNA molecule is a linear polymer, referred to as the backbone, that consists of four types of nucleotides: adenine, cytosine, guanine, and uracil, denoted respectively ${\bf A}$, ${\bf C}$, ${\bf G}$, and ${\bf U}$. The backbone is endowed with an orientation from 5'-end to 3'-end, and the primary sequence is the sequence of nucleotides read with respect to this orientation.
Between nucleotides hydrogen bonds are formed, resulting in the so-called Watson-Click pairs involving ${\bf A-U}$ or ${\bf G-C}$ nucleotides; in addition Wobble pairs ${\bf U-G}$ can be formed. The set of base pairs formed by such hydrogen bonds is referred to as the secondary structure.\footnote{There are other types of interactions in RNA secondary structure, which are however less common and we ignore them in this discussion.} Prediction of the secondary structure from the primary sequence is an outstanding problem that was initiated by the pioneering work of Michael Waterman \cite{Waterman0} and has been studied intensively for last three decades.

\begin{figure}[h]
\begin{center}
   \includegraphics[width=80mm,clip]{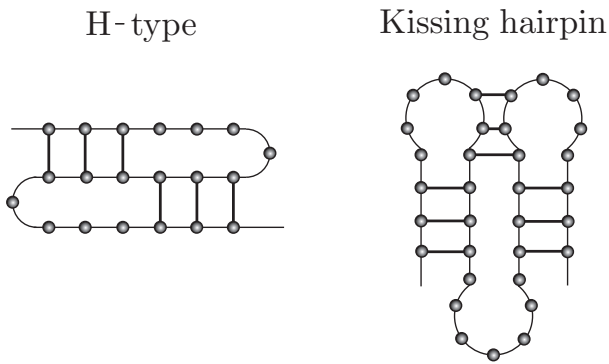}
\end{center}
\caption{\label{pseudo} Pseudoknot structures in RNA. The long curved line, blobs (i.e. marked points), and short lines represent the backbone, nucleotides, and base pairs, respectively.}
\end{figure}

Topologically, we can represent the base pairings for a given RNA structure by a partial chord diagram as follows. The backbone is represented as a disjoint union of horizontal straight line segments (arranged along the real line in the plane), one for each backbone component, and 
each nucleotide is represented as a marked point on this union of line segments. 
The base pairs are represented by chords in the upper-half plane attached at
two marked points corresponding to the bonded pair of nucleotides.

%The term {\em secondary structure} is within the field of RNA-folding actually reserved for base pairings, which have no crossing basepairs, i.e. no crossing semi-circles, which is equivalent to the partial chord diagram having genus zero. 
Note that a partial chord diagram has genus zero if no two of its chords cross each other. If however such crossings exist, then the structure is referred to as a pseudoknot, and its genus is non-zero.
%Although a good number of RNA structures are secondary structures,
Considerable number of pseudoknot structures have been observed,
e.g. tRNAs, RNAseP \cite{LoPa}, telomerase RNA \cite{SB} and ribosomal RNAs \cite{KG}. According to the online database ``RNA-strand'' half of the known structures form pseudoknots \cite{RNA-strand}. There are various kinds of pseudoknots classified by the topology of the RNA \cite{APRWat}, referred to as e.g. H-type \cite{Akutsu}, kissing hairpin \cite{CCJ,RE}, etc. 

In recent years, a combinatorial description of RNA structures in terms of linear chord diagrams has been developed in a series of works \cite{OZ,VOZ,VO, APRWat,APRWan,AHPR,ACPRS,ACPRS2,AAPZ,Reidys,P5}.
However, a large class of reasonable energy-based models that predict the secondary structure including pseudoknots are NP complete \cite{LyPe, Akutsu}, 
and a fully satisfactory energy model for RNA, including pseudoknot structures, has not been established yet.

\begin{figure}[h]
\begin{center}
   \includegraphics[width=90mm,clip]{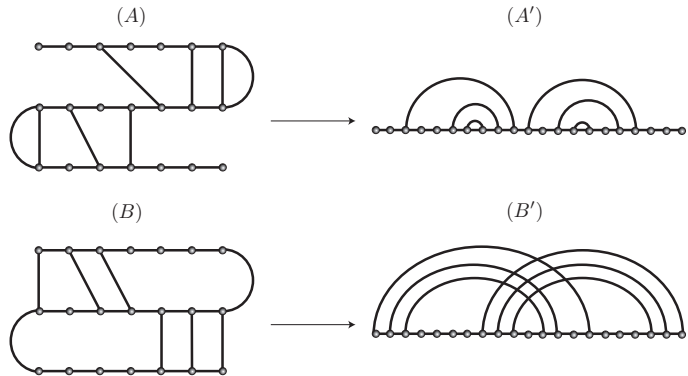}
\end{center}
\caption{\label{Cross_RNA} Partial chord diagrams unveil the difference in the topological structure of RNA molecules.}
\end{figure}

In the search of a realistic energy function for RNA structures with pseudoknots,
the boundary length and point spectrum should provide a useful tool that includes more detailed information about the location of marked points.
In standard algorithms developed by Waterman \cite{Waterman}, Nussinov et al. \cite{Nussinov}, Zucker and Stiegler \cite{Zucker}, etc., dynamic programming (DP) 
has been used to predict most likely secondary structures. 
Indeed, in famous algorithms such as \cite{Zucker0,Vienna}, the (loop-based) energy in each configuration of RNA is considered. In these algorithms, the most probable secondary structure is determined as the minimum free energy configuration, and to make them more efficient the statistical mechanical ensemble (i.e. the partition function algorithm) is implemented \cite{McCaskill}. The application of these algorithms, which include pseudoknot structures stratified by $\gamma$ {\em structures},
was studied in \cite{RHAPSN,Reidys}.
Most of the energy functions essentially respect the boundary point and length spectra independently.
In order to improve the energy model for RNA structure prediction with pseudoknots,
it would be useful to explore energy parameters for more realistic and efficient energy function 
on the basis of the boundary length and point spectrum.

\subsection{Plan of the paper}

This paper is organized as follows. In Section \ref{sec:const_mat} we
construct Hermitian matrix models with external matrices,
which encode generating functions of orientable partial chord diagrams labeled by the boundary point spectrum 
(in Subsection \ref{subsec:bp}), the boundary length spectrum (in Subsection \ref{subsec:bl}), and 
the boundary length and point spectrum (in Subsection \ref{subsec:bpl}). All these 
constructions are established by the correspondence between chord diagrams and Wick contractions via the Wick theorem. The matrix model encoding the boundary 
length and point spectrum is given in Theorem \ref{Thm:blp_matrix}.
In Section \ref{sec:mat_heat} we derive partial differential equations for 
matrix integrals found in Section \ref{sec:const_mat}. These partial 
differential equations coincide with the cut-and-join equations found 
combinatorially in \cite{AAPZ, Andersen_new}. The cut-and-join equation for 
partial chord diagrams labeled by the boundary length and point spectrum is 
determined in Theorem \ref{thm:cut_join_blp}.
Section \ref{sec:non_orient} is devoted to the analysis of non-oriented analogues of the 
results obtained in Section \ref{sec:const_mat} and \ref{sec:mat_heat}. 
In Subsection \ref{subsec:non_or_mat} we find 
real symmetric matrix models with external matrices, that encode generating functions of both orientable and non-orientable partial chord diagrams.
The non-oriented analogue of the matrix integral from Theorem \ref{Thm:blp_matrix} is given in Theorem \ref{thm:blp_matrix_non}. 
Non-oriented analogues of cut-and-join equations from Section 
\ref{sec:mat_heat} are determined in Theorem \ref{Thm:cut_join_non}.
In Appendix \ref{app:heat_no} we derive a partial differential equation
from Proposition \ref{prop:master_heat_r} for a real symmetric matrix integral 
with external matrices.
In Appendix \ref{app:miwa_no} we prove Lemma \ref{lem:ext_Miwa_non}.

%%%%%%%%%%%%%%%%%%%%%%%%%%%%%%%%%%%%%%%%%
%      Section2                                                                                                         %
%%%%%%%%%%%%%%%%%%%%%%%%%%%%%%%%%%%%%%%%%
\section{\label{sec:const_mat}Enumerating partial linear chord diagrams via matrix models}
%%%%%%%%%%%%%%%%%%%%%%%%%%%%%%%%%%%%%%%%%
%%%%%%%%%%%%%%%%%%%%%%%%%%%%%%%%%%%%%%%%%

The enumeration problem of partial chord diagrams with respect to the genus filtration has been reformulated in terms of matrix integrals. Matrix model techniques for enumeration of the RNA structures with pseudoknots have been developed in a series of papers \cite{OZ,VOZ,VO}, and independently in \cite{ACPRS,ACPRS2,Andersen_new2}. Subsequently the analysis involving boundary point and length spectra of partial linear chord diagrams has been conducted in \cite{AAPZ,Andersen_new}.
In this section we develop a new perspective on this problem and construct a matrix model that enumerates partial chord diagrams labeled by the boundary length and point spectrum.

%%%%%%%%%%%%%%%%%%%%%%%%%%%%%%%%%%%%%%%%%
%%%%%%%%%%%%%%%%%%%%%%%%%%%%%%%%%%%%%%%%%
\subsection{\label{subsec:bp}A matrix model enumerating partial chord diagrams}
%%%%%%%%%%%%%%%%%%%%%%%%%%%%%%%%%%%%%%%%%
%%%%%%%%%%%%%%%%%%%%%%%%%%%%%%%%%%%%%%%%%

In the first step we construct a matrix model that counts partial chord diagrams labeled by the boundary point spectrum $\{n_i\}$. 
\begin{definition}\label{definition_notations}
Let ${\mathcal N}_{g,k,l}(\{b_i\},\{n_i\},\{p_i\})$ denote the number of
connected partial chord diagrams of type $\{g, k, l; \{b_i\};\{n_i\};\{p_i\}\}$.
In particular, focusing on the boundary point
spectrum we define the
following number of partial chord diagrams characterized by the data
$\{g,k,l;\{b_i\},\{n_i\}\}$,
\begin{align}
{\mathcal N}_{g,k,l}(\{b_i\},\{n_i\})=\sum_{\{p_i\}}{\mathcal N}_{g,k,l}(\{b_i\},\{n_i\},\{p_i\}).
\nonumber
\end{align}
We introduce the generating function\footnote{
The parameters $s_i$ and $t_i$ in this article and in \cite{AAPZ} are related by
$s_i \leftrightarrow t_i$.}  for the numbers ${\mathcal N}_{g,k,l}(\{b_i\},\{n_i\})$
\begin{align}
\begin{split}
&
F(x,y;\{s_i\};\{t_i\})=\sum_{b\ge 1} F_b(x,y;\{s_i\};\{t_i\}),
\\
&
F_b(x,y;\{s_i\};\{t_i\})=\frac{1}{b!}\sum_{\sum_ib_i=b}\sum_{\{n_i\}}{\mathcal N}_{g,k,l}(\{b_i\},\{n_i\})x^{2g-2}y^k
\prod_{i\ge 0}s_i^{b_i}t_i^{n_i}.
\label{free_energy_cut_join2}
\end{split}
\end{align}
\end{definition}
The generating function for the numbers $\widehat{\mathcal N}_{k,b,l}(\{b_i\},\{n_i\})$ 
of connected and disconnected partial chord diagrams 
arises in the usual way from the exponent
\begin{align}
Z^{\mathrm{P}}(x,y;\{s_i\};\{t_i\})&=\exp\left[
F(x,y;\{s_i\};\{t_i\})\right]
\nonumber \\
&
=\sum_{\{b_i\}}\sum_{\{n_i\}}
\widehat{\mathcal N}_{k,b,l}(\{b_i\},\{n_i\})x^{-b+k-n}y^k
\prod_{i\ge 0}s_i^{b_i}t_i^{n_i}.
\label{partition_function_cut_join}
\end{align}
In the following we rewrite the generating function $Z^{\mathrm{P}}(x,y;\{s_i\};\{t_i\})$ as a Hermitian matrix integral. To this end, we consider first Gaussian averages over Hermitian matrices.

\begin{definition}
Let $\mathcal{O}(M)$ be a function of a rank $N$ Hermitian matrix $M$.
The Gaussian average $\langle {\mathcal O}(M)\rangle_N^{\mathrm{G}}$ is defined by the integral over the space ${\mathcal H}_N$ of rank $N$ Hermitian matrices with respect
to the Haar measure $dM$ with the Gaussian weight $\mathrm{e}^{-N\mathrm{Tr}\frac{M^2}{2}}$,
\begin{align}
\langle {\mathcal O}(M)\rangle_N^{\mathrm{G}}=\frac{1}{\mathrm{Vol}_N}
\int_{{\mathcal H}_N}dM\;{\mathcal O}(M)\,\mathrm{e}^{-N\mathrm{Tr}\frac{M^2}{2}},
\label{Gauss_average}
\end{align}
where the normalization factor $\mathrm{Vol}_N$ takes form
\begin{align}
\mathrm{Vol}_N=\int_{{\mathcal H}_N} dM\;\mathrm{e}^{-N\mathrm{Tr}\frac{M^2}{2}}
=N^{N(N+1)/2}\mathrm{Vol}({\mathcal H}_N).
\label{volN_def}
\end{align}
In particular for  $\mathcal{O}(M)=M_{\alpha\beta}M_{\gamma\epsilon}$ ($\alpha,\beta,\gamma,\epsilon=1,\ldots,N$),
the Gaussian average is 
\begin{align}
\contraction[1ex]{}{xX}{X}{}
M_{\alpha\beta}M_{\gamma\delta}
:=\langle M_{\alpha\beta}M_{\gamma\epsilon}\rangle_N^{\mathrm{G}}
=\frac{1}{N}\delta_{\alpha\epsilon}\delta_{\beta\gamma}.
\label{Wick_contr}
\end{align}
This quantity is called the {\it Wick contraction}. By definition, a multiple Wick contraction is a product of the Gaussian average of each Wick contracted pair.
\end{definition}

It follows from the definition (\ref{Gauss_average}) that Gaussian averages of an odd number of matrix elements vanish. On the other hand, Gaussian averages of an even number of matrix elements are non-zero, and can be computed using the \textit{Wick theorem} \cite{BIZ,P2,Mu}, as we now recall. Consider an ordered sequence 
\begin{align}
M_{\alpha_1\beta_1}M_{\alpha_2\beta_2}\cdots M_{\alpha_{2k}\beta_{2k}}
\nonumber
\end{align}
of $2k$ matrix elements $M_{\alpha_n\beta_n}$ ($n=1,\ldots,2k$).

Let $P_k$ denote a set of matchings by $k$ Wick contractions among the $2k$ matrix elements in the above sequence. $P_k$ is isomorphic to the following quotient of groups
\begin{align}
P_k\simeq G_H/G_E,\quad G_H=S_{2k},\quad G_E=S_k\rtimes (S_2)^k.
\nonumber
\end{align}
Here the elements of the permutation group $S_{2k}$ permute $2k$ matrix elements.
The factors $S_k$ of $G_E$ act by permuting $k$ Wick contractions and $(S_2)^k$ swaps matrix elements in each Wick contracted pair.
The Wick theorem implies the following result.
\begin{theorem}\label{thm:Wick}
The Gaussian average of $2k$ matrix elements $M_{\alpha_n\beta_n}$ ($n=1,\ldots,k$) 
equals 
\begin{align}
\begin{split}
\langle
M_{\alpha_1\beta_1}M_{\alpha_2\beta_2}\cdots M_{\alpha_{2k}\beta_{2k}}
\rangle^{\mathrm{G}}_N
&=
\sum_{\sigma\in P_k}\prod_{i=1}^k
\contraction[1ex]{}{X}{XXXXXXX}{}
M_{\alpha_{\sigma(2i-1)}\beta_{\sigma(2i-1)}}M_{\alpha_{\sigma(2i)}\beta_{\sigma(2i)}}
\\
&=\frac{1}{N^k}\sum_{\sigma\in P_k}\prod_{i=1}^k\delta_{\alpha_{\sigma(2i-1)}\beta_{\sigma(2i)}}
\delta_{\alpha_{\sigma(2i)}\beta_{\sigma(2i-1)}}.
\label{Wick}
\end{split}
\end{align}
\end{theorem}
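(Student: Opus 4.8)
The plan is to prove the formula by establishing a Schwinger--Dyson (integration by parts) recursion for the Gaussian average and then inducting on $k$. The starting point is that, since the Gaussian weight $\mathrm{e}^{-N\mathrm{Tr}\frac{M^2}{2}}$ decays rapidly at infinity, the integral over ${\mathcal H}_N$ of any total derivative vanishes. Writing $S=\frac{N}{2}\mathrm{Tr}\,M^2$ and differentiating a matrix entry, this yields an identity of the schematic form
\begin{align}
\Big\langle \frac{\partial f}{\partial M_{ij}}\Big\rangle_N^{\mathrm{G}}
=\Big\langle f\,\frac{\partial S}{\partial M_{ij}}\Big\rangle_N^{\mathrm{G}}
=N\,\langle f\,M_{ji}\rangle_N^{\mathrm{G}},
\nonumber
\end{align}
valid for any polynomial $f$ in the matrix entries. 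The one genuine subtlety here --- and the step I expect to require the most care --- is the Hermiticity constraint: the independent real coordinates are the diagonal entries together with the real and imaginary parts of the strictly upper-triangular entries, not the $M_{ij}$ individually. I would therefore either carry out the differentiation in these genuine coordinates, or adopt a fixed complex-derivative convention; in either case one verifies that the resulting two-point function reproduces exactly the defining contraction (\ref{Wick_contr}), $\langle M_{\alpha\beta}M_{\gamma\epsilon}\rangle_N^{\mathrm{G}}=\frac{1}{N}\delta_{\alpha\epsilon}\delta_{\beta\gamma}$.

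Next I would apply this identity with $f=M_{\alpha_2\beta_2}\cdots M_{\alpha_{2k}\beta_{2k}}$ and with the indices $(i,j)$ chosen so that $M_{ji}=M_{\alpha_1\beta_1}$, thereby expressing the full $2k$-point average in terms of lower ones. Since $\partial M_{\alpha_n\beta_n}/\partial M_{ji}$ produces a product of Kronecker deltas equal to $N$ times the basic contraction of $M_{\alpha_1\beta_1}$ with $M_{\alpha_n\beta_n}$, the Leibniz rule gives the recursion
\begin{align}
\langle M_{\alpha_1\beta_1}\cdots M_{\alpha_{2k}\beta_{2k}}\rangle_N^{\mathrm{G}}
=\sum_{n=2}^{2k}\frac{1}{N}\,\delta_{\alpha_1\beta_n}\delta_{\beta_1\alpha_n}\,
\Big\langle \!\!\prod_{m\neq 1,n}\!\! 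M_{\alpha_m\beta_m}\Big\rangle_N^{\mathrm{G}}.
\nonumber
\end{align}
This is precisely the statement that the first matrix element must be Wick-contracted with exactly one of the remaining $2k-1$ elements, the contraction contributing the factor $\frac{1}{N}\delta_{\alpha_1\beta_n}\delta_{\beta_1\alpha_n}$ dictated by (\ref{Wick_contr}).

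Finally I would induct on $k$, with the base case $k=1$ being the defining formula (\ref{Wick_contr}) itself. Each term in the recursion pairs element $1$ with some element $n$, and by the inductive hypothesis the remaining $(2k-2)$-point average is the sum over all matchings of the other $2k-2$ elements. Summing over the choice of $n$ and over these matchings assembles exactly the set of all $k$ Wick contractions of the $2k$ elements, which is the index set $P_k\simeq S_{2k}/(S_k\rtimes (S_2)^k)$ described just before the theorem; each matching contributes $\prod_{i=1}^k\frac{1}{N}\delta_{\alpha_{\sigma(2i-1)}\beta_{\sigma(2i)}}\delta_{\alpha_{\sigma(2i)}\beta_{\sigma(2i-1)}}$, which is (\ref{Wick}). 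As an alternative route that bypasses the recursion, one may note that each $M_{\alpha\beta}$ is a linear combination of the independent centered Gaussian coordinates, so the family $\{M_{\alpha_n\beta_n}\}$ is jointly Gaussian with mean zero; the classical Isserlis/Wick theorem for such families then yields the pairing formula directly, with the two-point function again supplied by (\ref{Wick_contr}).
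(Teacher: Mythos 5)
Your argument is correct, and it is worth noting that the paper itself does not prove Theorem \ref{thm:Wick}: the statement is recalled as the standard Wick theorem with references to \cite{BIZ,P2,Mu}, the only inputs being the two-point function (\ref{Wick_contr}) and the fact that the entries $M_{\alpha\beta}$ are jointly Gaussian --- which is essentially your closing alternative via the Isserlis theorem. Your main route is genuinely different and self-contained: the Schwinger--Dyson integration-by-parts identity gives the recursion $\langle M_{\alpha_1\beta_1}\cdots M_{\alpha_{2k}\beta_{2k}}\rangle^{\mathrm{G}}_N=\frac{1}{N}\sum_{n=2}^{2k}\delta_{\alpha_1\beta_n}\delta_{\beta_1\alpha_n}\,\langle\prod_{m\neq 1,n}M_{\alpha_m\beta_m}\rangle^{\mathrm{G}}_N$, and induction on $k$ assembles the sum over $P_k$, with the base case being (\ref{Wick_contr}) itself. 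You correctly isolate the one delicate point, the Hermiticity constraint, which is resolved by the convention $\partial M_{kl}/\partial M_{ij}=\delta_{ik}\delta_{jl}$ (so that $\partial(\tfrac{N}{2}\mathrm{Tr}\,M^2)/\partial M_{ij}=NM_{ji}$), and the combinatorics close up properly: the recursion generates exactly the $(2k-1)!!=|S_{2k}|/|S_k\rtimes (S_2)^k|$ pairings indexed by $P_k$, each weighted by $\prod_{i=1}^k\frac{1}{N}\delta_{\alpha_{\sigma(2i-1)}\beta_{\sigma(2i)}}\delta_{\alpha_{\sigma(2i)}\beta_{\sigma(2i-1)}}$. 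What your derivation buys over the paper's citation is an explicit proof from the Gaussian weight alone; what the citation buys is brevity, since the result is classical.
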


\subsubsection{Chord diagrams and Wick contractions} 

Let $c$ be a chord diagram.
We now recall the explicit relation between a surface $\Sigma_c$ associated to a chord diagram $c$ and $k$-matchings or Wick contractions in the Gaussian average. To illustrate this correspondence we depict chord ends on backbones in $\Sigma_c$
as trivalent vertices that consist of upright and horizontal line segments, see Figure \ref{rule_C}.
This correspondence is specified by the following four points \textbf{C1}--\textbf{C4}.

\begin{figure}[h]
\begin{center}
   \includegraphics[width=100mm,clip]{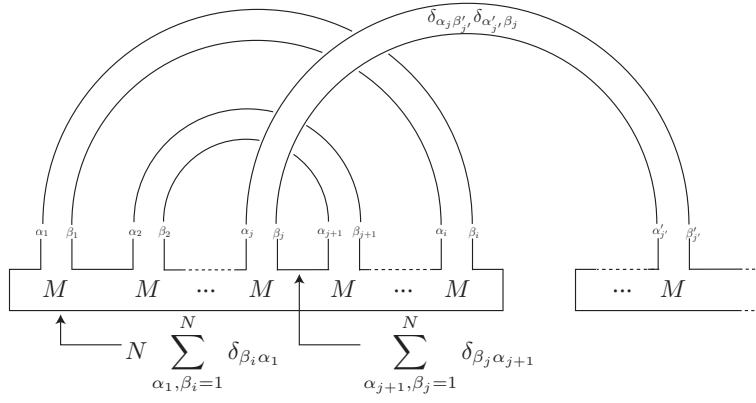}
\end{center}
\caption{\label{rule_C} Bijective correspondence between chord diagrams and Wick contractions.}
\end{figure}
\begin{description}
\item[C1] A matrix element $M_{\alpha\beta}$ corresponds to a chord end on a backbone.
Indices $\alpha,\beta(=1,\ldots,N)$ are assigned to two upright line segments on the upper edge of the backbone.

\item[C2] If two matrix elements $M_{\alpha_j\beta_j}M_{\alpha_{j+1}\beta_{j+1}}$ correspond to two adjacent chord ends on the same backbone, 
then the following quantity is assigned to the horizontal segment between these two chord ends on the upper edge of the backbone
\begin{align}
\sum_{\alpha_{j+1},\beta_j=1}^N\delta_{\beta_j\alpha_{j+1}}.
\nonumber
\end{align}
This assignment encodes matrix multiplication of matrix elements corresponding to adjacent chord ends on the backbone.

\item[C3] For the product of $i$ matrix elements $M$
\begin{align}
\sum_{\alpha_2,\ldots,\alpha_{i}=1}^N
\sum_{\beta_1,\ldots,b_{i-1}=1}^NM_{\alpha_1\beta_1}\delta_{\beta_1\alpha_2}M_{\alpha_2\beta_2}\ldots \delta_{\beta_{i-1}\alpha_i}M_{\alpha_{i}\beta_{i}}
=(M^i)_{\alpha_1\beta_i},
\nonumber
\end{align}
which corresponds to a backbone with $i$ chord ends,
the following quantity is assigned to the bottom edge of the backbone
\begin{align}
N\sum_{\alpha_1,\beta_i=1}^N\delta_{\beta_i\alpha_1}.
\nonumber
\end{align}
Thus, a backbone with $i$ chord ends corresponds to a single trace of the $i$'th power of $M$, namely $N\mathrm{Tr}M^i$.

\item[C4] The Wick contraction between $M_{\alpha_j\beta_j}$ and $M_{\alpha'_{j'}\beta'_{j'}}$ corresponds to a band connecting two chord ends. 
Each Wick contraction imposes a constraint $\delta_{\alpha_{j}\beta'_{j'}}\delta_{\alpha'_{j'}\beta_j}$
on matrix indices assigned to the edges of the chord ends 
matched by the Wick contraction.
\end{description}

The above rules imply the following bijective correspondence 
\begin{align}
W^{\rm C}_N(\{b_i\})=\Bigl\langle \prod_{i}\left(N\mathrm{Tr}M^i\right)^{b_i}\Bigr\rangle_N^{\mathrm{G}},
\qquad \sum_iib_i=2k,
\label{complete_Gauss2}
\end{align}
between matchings by $k$ Wick contractions in the Gaussian average on one hand, and chord diagrams that consist of $b_i$ backbones with $i$ chord ends on the other hand, see Figure \ref{chord_wick}.
\begin{figure}[h]
\begin{center}
   \includegraphics[width=100mm,clip]{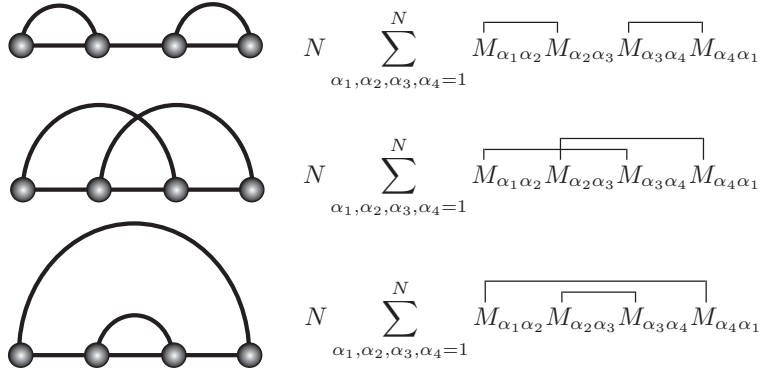}
\end{center}
\caption{\label{chord_wick}  Chord diagrams and Wick contractions for $\langle N\mathrm{Tr} M^4\rangle_{\mathrm{N}}^{\mathrm{G}}$.}
\end{figure}

The Wick contractions (\ref{Wick}) in $W^{\rm C}_N(\{b_i\})$ replace
all matrix elements $M$'s by products of $\delta$'s,
and summing over matrix indices along a boundary cycle
one finds a factor of $N$ corresponding to each boundary cycle in a chord diagram.
Therefore the overall $N$ dependence following from the above rules
amounts to assigning $N^{b-k+n}$ factor to the term $W^{\rm C}_N(\{b_i\})$, 
corresponding to a chord diagram with backbone spectrum $\{b_i\}$ and $n$ boundary cycles.
Combing the Wick theorem and this bijective correspondence between 
matchings by $k$ Wick contractions in the Gaussian average $W^{\rm C}_N(\{b_i\})$ and the set of chord
diagrams with backbone spectrum $\{b_i\}$, the following proposition follows.
\begin{proposition}\label{prop:complete}
The Gaussian average $W^{\rm C}_N(\{b_i\})$ in equation (\ref{complete_Gauss2}) agrees with the generating function of chord diagrams with backbone spectrum $\{b_i\}$
\begin{align}
W^{\rm C}_N(\{b_i\})=
\sum_{n\ge 0}\widehat{\mathcal N}_{k,b,n}(\{b_i\})N^{b-k+n}.
\end{align}
\end{proposition}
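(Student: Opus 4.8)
The plan is to assemble the three ingredients already prepared above -- the Wick theorem, the dictionary \textbf{C1}--\textbf{C4}, and the power-of-$N$ bookkeeping -- into the single identity claimed. First I would apply Theorem \ref{thm:Wick} to the Gaussian average
\[
W^{\rm C}_N(\{b_i\})=\Bigl\langle \prod_i (N\mathrm{Tr}M^i)^{b_i}\Bigr\rangle_N^{\mathrm G}.
\]
Since $\sum_i ib_i=2k$, expanding every trace turns the integrand into an ordered product of exactly $2k$ matrix elements $M_{\alpha_n\beta_n}$, accompanied by the explicit prefactor $N^{\sum_i b_i}=N^b$ from the $b$ factors of $N$ standing in front of the traces, and by the internal Kronecker deltas of rule \textbf{C2} that implement the matrix multiplication along each backbone together with the cyclic closure of each trace. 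The Wick theorem then rewrites this as a sum over the set $P_k$ of $k$-matchings of the $2k$ elements, each matching carrying the factor $N^{-k}$ and the product of $2k$ Kronecker deltas displayed in $(\ref{Wick})$.

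Next I would invoke the correspondence \textbf{C1}--\textbf{C4}, which sets up a bijection between a matching $\sigma\in P_k$ and a chord diagram $c$ built on the prescribed backbones: by \textbf{C1} each of the $2k$ matrix elements is a chord end, by \textbf{C2}--\textbf{C3} the $b_i$ traces $N\mathrm{Tr}M^i$ are precisely the $b_i$ backbones carrying $i$ chord ends (with the linear order of the $M$'s giving the order of the ends along the top edge and the trace closure giving the underside), and by \textbf{C4} the $k$ Wick contractions of $\sigma$ are the $k$ chords pairing up these ends. Thus $P_k$ is in bijection with the set of (connected or disconnected) chord diagrams of backbone spectrum $\{b_i\}$ with $k$ chords, and it remains only to identify the monomial in $N$ attached to each such diagram.

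For the $N$-count I would run the double-line argument. After all contractions are performed the matrix indices are forced to agree around closed loops: the deltas of \textbf{C2} and the trace closures propagate indices along the backbone edges, while the deltas $\delta_{\alpha_j\beta'_{j'}}\delta_{\beta_j\alpha'_{j'}}$ of \textbf{C4} carry them across the chords. Each resulting free index is summed over $1,\dots,N$, contributing one factor of $N$, and these closed index loops are exactly the boundary cycles of the associated surface $\Sigma_c$; hence their number is $n$. Collecting the three sources of $N$ -- the prefactor $N^{b}$, the $N^{-k}$ from the $k$ contractions, and the $N^{n}$ from the index loops -- each diagram contributes $N^{b-k+n}$, consistent with the Euler relation $2-2g=b-k+n$. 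Grouping the matchings by the number $n$ of boundary cycles and recalling that $\widehat{\mathcal N}_{k,b,n}(\{b_i\})$ counts exactly the diagrams with backbone spectrum $\{b_i\}$, $k$ chords and $n$ boundary cycles yields
\[
W^{\rm C}_N(\{b_i\})=\sum_{n\ge 0}\widehat{\mathcal N}_{k,b,n}(\{b_i\})\,N^{b-k+n},
\]
as claimed.

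The only genuinely delicate step I anticipate is the identification of the closed index loops with the boundary cycles of $\Sigma_c$: one must verify that the ribbon (fatgraph) structure induced by the trace orderings and by the index pattern of the Wick contractions makes each face of the double-line graph coincide with a boundary component of the surface, so that the exponent of $N$ is genuinely $b-k+n$ rather than some other combination. Everything else reduces to routine bookkeeping of the prefactors produced by the Wick theorem.
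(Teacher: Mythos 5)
Your proposal is correct and follows essentially the same route as the paper: the Wick theorem expands the average over $k$-matchings, the rules \textbf{C1}--\textbf{C4} give the bijection with chord diagrams of backbone spectrum $\{b_i\}$, and the $N$-count $N^{b}\cdot N^{-k}\cdot N^{n}$ (prefactors, contractions, closed index loops identified with boundary cycles of $\Sigma_c$) yields the exponent $b-k+n$. The ``delicate step'' you flag is exactly the standard fatgraph/double-line identification the paper also relies on, so no further work is needed.
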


Here $\widehat{\mathcal N}_{k,b,n}(\{b_i\})$ is the number of chord diagrams that consist of $b_i$ backbones with $i$ trivalent vertices 
\begin{align}
\widehat{\mathcal N}_{k,b,n}(\{b_i\})=\sum_{\{p_i\}}\widehat{\mathcal N}_{k,b,l=0}(\{b_i\},n_0=n,\{n_i=0\}_{i\ge 1},\{p_i\}).
\end{align}

\subsubsection{Partial chord diagrams and Wick contractions}

We now generalize the above bijective correspondence to partial chord diagrams.
Let $c$ be a partial chord diagram. On the boundary cycles of the surface $\Sigma_c$ we add additional marked points, which correspond to those marked points on $c$ which are not chord ends.
These marked points are represented by \textit{external matrices} $\Lambda_{\mathrm{P}}$ of rank $N$ in the Gaussian average.
The rules \textbf{P1}--\textbf{P5} below provide the correspondence between 
partial chord diagrams with backbone spectrum $\{b_i\}$
and matchings with $k$ Wick contractions in the Gaussian average.
\begin{description}
\item[P1] A matrix element $M_{\alpha\beta}$ corresponds to a chord end on a backbone.
The graphical rule is the same as the rule $\mathbf{C1}$.
 
\item[P2] A matrix element $\Lambda_{\mathrm{P}\alpha\beta}$ corresponds to a marked point on a backbone in $\Sigma_c$.
Indices $\alpha,\beta(=1,\ldots,N)$ are assigned to two upright line segments at each marked point on the upper edge of the backbone, see Figure \ref{rule_P}.

\item[P3] To a line segment (on the upper edge of the backbone) between adjacent chord ends or marked points (located on the same backbone), corresponding to matrix elements $U_{\alpha_{j}\beta_{j}}$ and $V_{\alpha_{j+1}\beta_{j+1}}$ (for $U,V=M$ or $\Lambda_{\mathrm{P}}$), we assign 
\begin{align}
\sum_{\beta_j,\alpha_{j+1}=1}^N\delta_{\beta_j\alpha_{j+1}},
\end{align}
just as in {\bf C2}.  

\item[P4] Let $v_{j},w_{j}\in \mathbb{Z}_{\ge 0}$ ($j=1,\ldots,i$) with $\sum_{j=1}^i (v_j+w_j)=i$.
For an ordered matrix product  
\begin{align}
(M^{v_1}\Lambda_{\mathrm{P}}^{w_1}M^{v_2}\Lambda_{\mathrm{P}}^{w_2}
\cdots M^{v_i}\Lambda_{\mathrm{P}}^{w_i}
)_{\alpha_1\beta_i},
\label{order_P}
\end{align}
corresponding to a backbone which is an ordered sequence of $v_j$ chord ends and $w_j$ marked points,
we assign 
\begin{align}
N\sum_{\alpha_1,\beta_i=1}^N\delta_{\beta_i\alpha_1}
\nonumber
\end{align}
to the bottom edge of this backbone. It follows that the trace
\begin{align}
N\mathrm{Tr}(M^{v_1}\Lambda_{\mathrm{P}}^{w_1}M^{v_2}\Lambda_{\mathrm{P}}^{w_2}
\cdots M^{v_i}\Lambda_{\mathrm{P}}^{w_i})
\label{backbone_P}
\end{align}
is assigned to this backbone.
\item[P5] The Wick contraction between $M_{\alpha_j\beta_j}$ and $M_{\alpha'_{j'}\beta'_{j'}}$ corresponds to a band connecting two chord ends, and it is represented in the same way as  specified in \textbf{C4}.
\end{description}

\begin{figure}[h]
\begin{center}
   \includegraphics[width=120mm,clip]{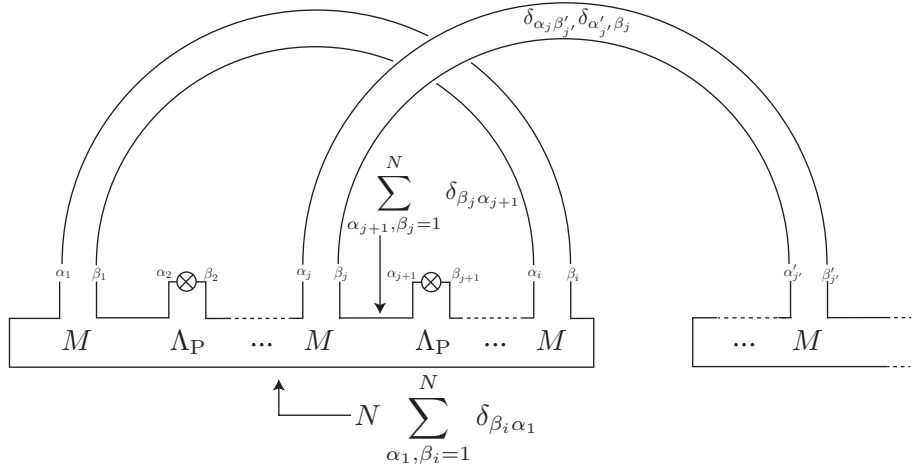}
\end{center}
\caption{\label{rule_P} Bijective correspondence between partial chord diagrams and matchings of Wick contractions}
\end{figure}

For a fixed backbone spectrum $\{b_i\}$, all possible sequences $\{\alpha_j,\beta_j\}$
in the expression (\ref{backbone_P}) are generated by the following product of traces
\begin{align}
\prod_{i\ge 0}\left(
N\mathrm{Tr}(M+\Lambda_{\mathrm{P}})^i\right)^{b_i}.
\end{align}
Hence, by the above rules, all partial chord diagrams with the backbone spectrum $\{b_i\}$ correspond bijectively to all matchings
by Wick contractions among the $M$'s in the expansion of the Gaussian average 
\begin{align}
W_N^{\mathrm{P}}(\{b_i\},\{r_i\})
=\Bigl\langle
\prod_{i\ge 0}\left(
N\mathrm{Tr}(M+\Lambda_{\mathrm{P}})^i\right)^{b_i}
\Bigr\rangle_N^{\rm G},
\label{I_N}
\end{align}
where we introduced the reverse Miwa times
\begin{align}
r_i=\frac{1}{N}\mathrm{Tr}\Lambda_{\mathrm{P}}^i.
\label{Miwa1}
\end{align}

If there are $n_i$ boundary components containing $i$ marked points, then one finds
a trace factor $(\mathrm{Tr}\Lambda_{\mathrm{P}}^i)^{n_i}$ in the corresponding term in the Gaussian average (\ref{I_N}), see Figure \ref{m4k1}.
Therefore, for partial chord diagrams with the backbone spectrum $\{b_i\}$ and the boundary point spectrum $\{n_i\}$,
the corresponding term in $W_N^{\mathrm{P}}(\{b_i\},\{r_i\})$ 
contributes the factor
\begin{align}
N^{b-k+n}\prod_{i\ge 0}r_i^{n_i}.
\nonumber
\end{align}

\begin{figure}[h]
\begin{center}
   \includegraphics[width=80mm,clip]{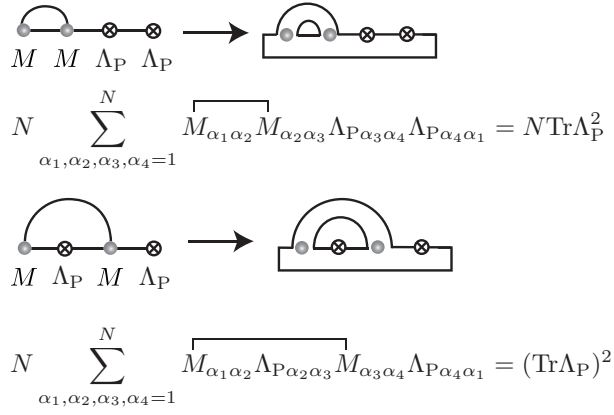}
\end{center}
\caption{\label{m4k1} Partial chord diagrams of types $\{g=0,k=1,l=2;b_4=1;n_0=1,n_2=1\}$ and $\{g=0,k=1,l=2;b_4=1;n_1=2\}$, and the corresponding Wick contractions.}
\end{figure}

Therefore, from Wick theorem and the above bijective correspondence between partial chord diagrams and matchings by Wick contractions, one finds the following proposition.
\begin{proposition}\label{prop:iI_N}
The Gaussian average  (\ref{I_N}) is the generating function 
for the numbers $\widehat{\mathcal N}_{k,b,l}({\{b_i\},\{n_i\}})$ of partial chord diagrams 
with the backbone spectrum $\{b_i\}$ and the boundary point spectrum $\{n_i\}$
\begin{eqnarray}
W_N^{\mathrm{P}}(\{b_i\},\{r_i\})=\sum_{\{n_i\}}
\widehat{\mathcal N}_{k,b,l}({\{b_i\},\{n_i\}})N^{b-k+n}\prod_{i\ge 0}r_i^{n_i},
\label{point_WN}
\end{eqnarray}
where the summation is constrained by $\sum in_i=\sum ib_i-2k$.
\end{proposition}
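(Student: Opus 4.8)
The plan is to follow the strategy of Proposition \ref{prop:complete}, now keeping track of the external-matrix insertions. First I would expand each factor $N\mathrm{Tr}(M+\Lambda_{\mathrm{P}})^i$ in (\ref{I_N}) into a sum of traces of ordered monomials in $M$ and $\Lambda_{\mathrm{P}}$, so that $W_N^{\mathrm{P}}(\{b_i\},\{r_i\})$ becomes a sum of Gaussian averages of products of such traces. Since $\Lambda_{\mathrm{P}}$ is an external (non-integrated) matrix, the Gaussian average acts only on the $M$ factors. A monomial containing an odd number of $M$'s averages to zero, while for a monomial with $2k$ factors of $M$ the Wick theorem (Theorem \ref{thm:Wick}) expresses the average as a sum over matchings of these $M$'s by $k$ Wick contractions.

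Next I would invoke the bijective correspondence established by rules \textbf{P1}--\textbf{P5}: every surviving monomial together with a choice of $k$ Wick contractions corresponds to exactly one partial chord diagram with backbone spectrum $\{b_i\}$, in which the $M$ factors become chord ends, the $\Lambda_{\mathrm{P}}$ factors become marked points, and the contractions become the chord bands; conversely every such partial chord diagram arises this way. Thus the terms of $W_N^{\mathrm{P}}(\{b_i\},\{r_i\})$ are indexed precisely by partial chord diagrams with backbone spectrum $\{b_i\}$.

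The heart of the argument is the power-counting. For a fixed diagram I would track three sources of $N$: the explicit $N$ attached to each backbone by rule \textbf{P4} (giving $N^b$), the factor $1/N$ carried by each of the $k$ Wick contractions in (\ref{Wick_contr}) (giving $N^{-k}$), and the index sums performed around the boundary cycles. The key geometric point -- visible in Figure \ref{m4k1} -- is that once the contractions are carried out, the $\Lambda_{\mathrm{P}}$ matrix elements lying on a single boundary cycle are composed by the multiplication rule \textbf{P3} and closed up into a single trace, so that a boundary cycle carrying $i$ marked points yields exactly $\mathrm{Tr}\,\Lambda_{\mathrm{P}}^i=N r_i$, where $r_i$ is the reverse Miwa time (\ref{Miwa1}); the case $i=0$ (where $\Lambda_{\mathrm{P}}^0$ is the identity) reproduces the bare factor $N$ of Proposition \ref{prop:complete}. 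Hence a diagram with $n$ boundary cycles and boundary point spectrum $\{n_i\}$ contributes $N^{b-k}\prod_{i\ge 0}(N r_i)^{n_i}=N^{b-k+n}\prod_{i\ge 0}r_i^{n_i}$, using $n=\sum_i n_i$. Collecting all diagrams sharing the same $\{n_i\}$ replaces this contribution by the multiplicity $\widehat{\mathcal N}_{k,b,l}(\{b_i\},\{n_i\})$, which is the asserted formula.

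Finally the constraint follows from elementary vertex counting: the total number of backbone vertices is $\sum_i i b_i$, of which $2k$ are chord ends (two per chord) and $\sum_i i n_i$ are marked points, giving $\sum_i i n_i=\sum_i i b_i-2k$. I expect the only genuinely delicate step to be the claim that the marked points on one boundary cycle compose into a single trace $\mathrm{Tr}\,\Lambda_{\mathrm{P}}^i$; this is precisely where the fatgraph boundary-cycle structure of $\Sigma_c$ enters, and it is exactly the refinement of Proposition \ref{prop:complete} needed to account for the external matrix $\Lambda_{\mathrm{P}}$.
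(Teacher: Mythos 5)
Your proposal is correct and follows essentially the same route as the paper: expanding the traces, invoking the Wick theorem together with the bijection from rules \textbf{P1}--\textbf{P5}, and observing that each boundary cycle with $i$ marked points closes up into a factor $\mathrm{Tr}\,\Lambda_{\mathrm{P}}^i=Nr_i$, yielding the overall weight $N^{b-k+n}\prod_i r_i^{n_i}$. Your explicit power-counting and the vertex-counting derivation of the constraint $\sum_i in_i=\sum_i ib_i-2k$ just spell out in more detail what the paper leaves implicit in the discussion preceding the proposition.
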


Using this proposition, we consider the full generating function $Z_N^{\mathrm{P}}(y;\{s_i\};\{r_i\})$ for the numbers $\widehat{\mathcal N}_{k,b,l}({\{b_i\},\{n_i\}})$ of partial chord diagrams weighted by 
\begin{align}
N^{b-k+n}y^k\prod_{i\ge 0}s_i^{b_i}r_i^{n_i}.
\nonumber
\end{align}
Since the contribution from a partial chord diagram is invariant under permutations of its backbones, the full generating function
\begin{align}
Z_N^{\mathrm{P}}(y;\{s_i\};\{r_i\})
=\sum_{\{b_i\}}\sum_{\{n_i\}}
\widehat{\mathcal N}_{k,b,l}(\{b_i\},\{n_i\})N^{b-k+n}y^k
\prod_{i\ge 0}s_i^{b_i}
r_i^{n_i}
\nonumber
\end{align}
can be rewritten as a sum over all backbone spectra $\{b_i\}$ of the terms 
$$y^{\sum_iib_i/2}W_N^{\mathrm{P}}(\{b_i\},\{y^{-i/2}r_i\})\prod_i\frac{s_i^{b_i}}{b_i!}.$$ 
It follows that
\begin{align}
Z_N^{\mathrm{P}}(y;\{s_i\};\{r_i\})
=\sum_{\{b_i\}}\prod_{i\ge 0}\frac{s_i^{b_i}y^{ib_i/2}}{b_i!}
\Big\langle
\Big(N\mathrm{Tr}(M+y^{-1/2}\Lambda_{\mathrm{P}})^i\Big)^{b_i}
\Big\rangle_N^{\mathrm{G}}.
\nonumber
\end{align}
Performing the summation over $b_i$'s,  one finds that the full generating function is given by the matrix integral
\begin{align}
\begin{split}
&
Z_N^{\mathrm{P}}(y;\{s_i\};\{r_i\}) =
\\
&
=\frac{1}{\mathrm{Vol}_N}\int_{{\mathcal H}_N} dM\;\exp\bigg[
-N\mathrm{Tr}\bigg(\frac{M^2}{2}-
\sum_{i\ge 0}s_i(y^{1/2}M+\Lambda_{\mathrm{P}})^i\bigg)
\bigg].
\label{partition_fn_pt}
\end{split}
\end{align}

This matrix integral and $Z^{\mathrm{P}}(x,y;\{s_i\};\{t_i\})$ in equation (\ref{partition_function_cut_join}) are identified  by a change of variables.
Since the reverse Miwa time for $i=0$ yields $r_0=1$ automatically, 
we need to introduce the parameter $t_0$ by the following change of variables
\begin{align}
N\to t_0 N,\quad y\to t_0y,\quad s_i\to t_0^{-1}s_i,\quad 
r_i\to t_0^{-1}t_i.
\nonumber
\end{align}
As a result, we find the main theorem in this subsection.
\begin{theorem}\label{thm:point_ori}
The generating function (\ref{partition_function_cut_join}) 
is given by the matrix integral (\ref{partition_fn_pt}),
\begin{equation}
Z^{\mathrm{P}}(N^{-1},y;\{s_i\};\{t_i\})=Z^{\mathrm{P}}_{t_0 N}(t_0 y;\{t_0^{-1}s_i\};\{t_0^{-1}t_i\}).
\label{rescale_pt}
\end{equation}
\end{theorem}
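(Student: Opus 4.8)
The plan is to prove the identity (\ref{rescale_pt}) by comparing the two sides as formal power series in $y$, $\{s_i\}$ and the counting variables, matching coefficients monomial by monomial. Everything reduces to the bookkeeping of the single variable $t_0$, which the matrix integral cannot produce directly, so that is where all the content lies.

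First I would combine Proposition \ref{prop:iI_N} with the summation over all backbone spectra carried out just before the theorem to record the explicit expansion
\[
Z_N^{\mathrm{P}}(y;\{s_i\};\{r_i\})=\sum_{\{b_i\}}\sum_{\{n_i\}}\widehat{\mathcal N}_{k,b,l}(\{b_i\},\{n_i\})\,N^{b-k+n}y^k\prod_{i\ge 0}s_i^{b_i}r_i^{n_i},
\]
and place it next to the defining expansion (\ref{partition_function_cut_join}) of $Z^{\mathrm{P}}(x,y;\{s_i\};\{t_i\})$. Upon setting $x=N^{-1}$ the weights $x^{-b+k-n}$ and $N^{b-k+n}$ coincide, so the two series would agree coefficient by coefficient provided $t_i=r_i$ for every $i$. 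The obstruction lies entirely at $i=0$: by the definition (\ref{Miwa1}) of the reverse Miwa times one has $r_0=\frac{1}{N}\mathrm{Tr}\,\Lambda_{\mathrm{P}}^0=1$ identically, so $r_0^{n_0}=1$ and the matrix integral is blind to the variable $t_0$ that records the number $n_0$ of boundary components carrying no marked points. Equivalently, $Z_N^{\mathrm{P}}$ retains only the dependence on $\{r_i\}_{i\ge 1}$, while the power of $N$ secretly encodes $n_0$ through $n=n_0+\sum_{i\ge 1}n_i$.

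The step carrying all the content is to restore $t_0$ as an independent variable by the rescaling $N\to t_0N$, $y\to t_0 y$, $s_i\to t_0^{-1}s_i$ and $r_i\to t_0^{-1}t_i$ (effective only for $i\ge 1$, since $r_0=1$). Substituting into the expansion above, I would collect the total exponent of $t_0$ carried by the monomial indexed by $(\{b_i\},\{n_i\})$:
\[
(b-k+n)+k-\sum_{i\ge 0}b_i-\sum_{i\ge 1}n_i=(b-k+n)+k-b-(n-n_0)=n_0,
\]
using $\sum_i b_i=b$ and $\sum_{i\ge 1}n_i=n-n_0$. The main thing to watch is precisely that this exponent comes out equal to $n_0$ and nothing else: the power $N^{b-k+n}$ is what the boundary-cycle count (the Euler relation) assigns, and it is exactly this relation that makes the spurious powers of $t_0$ cancel. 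Consequently $t_0^{n_0}\prod_{i\ge 1}t_i^{n_i}=\prod_{i\ge 0}t_i^{n_i}$ reproduces precisely the weight appearing in (\ref{partition_function_cut_join}), and the rescaled series equals $Z^{\mathrm{P}}(N^{-1},y;\{s_i\};\{t_i\})$ term by term, which is (\ref{rescale_pt}).

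The only point requiring care, rather than a genuine obstacle, is the meaning of $N\to t_0N$. Since by Proposition \ref{prop:iI_N} each monomial of $Z_N^{\mathrm{P}}$ carries an integer power of $N$, the substitution is a purely formal operation on the coefficients of the series and never requires $t_0N$ to be integral. I would emphasize in the write-up that the whole mechanism is the Euler relation in disguise: it is what guarantees that the rescaling inserts exactly the missing weight $t_0^{n_0}$, neither more nor less, thereby promoting the matrix integral (in which $r_0=1$ is forced) to the full generating function with an independent variable $t_0$.
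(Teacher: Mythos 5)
Your argument is correct and follows essentially the same route as the paper: Proposition \ref{prop:iI_N} plus the summation over backbone spectra gives the expansion of $Z_N^{\mathrm{P}}$ in the weights $N^{b-k+n}y^k\prod s_i^{b_i}r_i^{n_i}$, and the rescaling $N\to t_0N$, $y\to t_0y$, $s_i\to t_0^{-1}s_i$, $r_i\to t_0^{-1}t_i$ restores the missing $t_0^{n_0}$. The only difference is that you carry out explicitly the exponent count $(b-k+n)+k-b-(n-n_0)=n_0$, which the paper asserts without computation, so your write-up is if anything more complete on the one step where the content lies.
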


%%%%%%%%%%%%%%%%%%%%%%%%%%%%%%%%%%%%%%%%%
%%%%%%%%%%%%%%%%%%%%%%%%%%%%%%%%%%%%%%%%%
\subsection{\label{subsec:bl}A matrix model for the enumeration of chord diagrams}
%%%%%%%%%%%%%%%%%%%%%%%%%%%%%%%%%%%%%%%%%
%%%%%%%%%%%%%%%%%%%%%%%%%%%%%%%%%%%%%%%%%

Next we turn to the enumeration of chord diagrams
labeled by the backbone spectrum $\{b_i\}$ and the boundary length spectrum $\{p_i\}$. 
The number ${\mathcal N}_{g,k}(\{b_i\},\{p_i\})$ of connected chord diagrams is given by
\begin{align}
{\mathcal N}_{g,k}(\{b_i\},\{p_i\})
=\sum_{\{n_i\}}{\mathcal N}_{g,k,0}(\{b_i\},\{n_i\},\{p_i\}).
\nonumber
\end{align}
We introduce the following generating function of these numbers\footnote{
The parameters $q_i$'s in our paper correspond to $s_i$'s in \cite{AAPZ}.
}
\begin{definition}\label{def:generating_length_ori}
Let $G(x,y;\{s_i\};\{q_i\})$ denote the generating function of chord diagrams labeled by the boundary length spectrum
\begin{align}
\begin{split}
&
G(x,y;\{s_i\};\{q_i\})=\sum_{b\ge 1}G_b(x,y;\{s_i\};\{q_i\}),
\\
&
G_b(x,y;\{s_i\};\{q_i\})=\frac{1}{b!}\sum_{\sum b_i=b}\sum_{\{p_i\}}
{\mathcal N}_{g,k}(\{b_i\},\{p_i\})x^{2g-2}y^k\prod_{i\ge 0}s_i^{b_i}
\prod_{i\ge 1}q_i^{p_i}.
\label{free_length_ori}
\end{split}
\end{align}
In the same way as the generating function $Z^{\mathrm{P}}(x,y;\{s_i\};\{t_i\})$ in (\ref{partition_function_cut_join}), the generating function for the numbers 
$\widehat{\mathcal N}_{k,b}(\{b_i\},\{p_i\})$
of connected and disconnected chord takes form
\begin{align}
Z^{\mathrm{L}}(x,y;\{s_i\};\{q_i\})
&=\mathrm{exp}\left[G(x,y;\{s_i\};\{q_i\})\right]
\nonumber \\
&
=\sum_{\{b_i\}}\sum_{\{p_i\}}
\widehat{\mathcal N}_{k,b}(\{b_i\};\{p_i\})x^{-b+k-n}y^k
\prod_{i\ge 0}s_i^{b_i}\prod_{i\ge 1}q_i^{p_i}.
\label{partition_length_ori}
\end{align}
\end{definition}

\subsubsection{A matrix model for the boundary length spectrum}

Let $c$ be a chord diagram.
The boundary length spectrum filters chord diagrams
according to combinatorial length of each boundary cycle, i.e. the sum of the number of chords and backbone underpasses.
This length can be determined by counting marked points of a new type, which we now introduce.
We introduce marked points of a new type between all chord ends and backbone ends, see the left diagram in Figure \ref{fat_length_ext0}.
For chord diagram decorated in this way, we get new marked points on the boundaries of the surface $\Sigma_c$ by sliding each new marked point along the boundary of $\Sigma_c$ until it reaches the first chord or backbone underside midpoint, as indicated in the right hand side of Figure \ref{fat_length_ext0}.
\begin{figure}[h]
\begin{center}
   \includegraphics[width=120mm,clip]{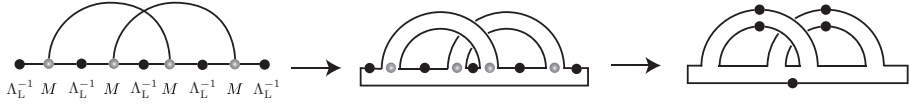}
\end{center}
\caption{\label{fat_length_ext0} Decorating a chord diagram with new marked points for partitions.}
\end{figure}

In order to construct a Gaussian matrix integral which counts this type of chord diagrams we introduce another external matrix $\Lambda_{\mathrm{L}}$, which is an invertible rank $N$ matrix that keeps track of new marked points. 
We introduce a new model model based on the following rules \textbf{L1}--\textbf{L5}, in which Wick contractions in the Gaussian average correspond bijectively to decorated chord diagrams.
\begin{description}
\item[L1] A matrix element $M_{\alpha\beta}$ corresponds to a chord end on a backbone.
This graphical rule is the same as the rule \textbf{C1}.
\item[L2]  
A matrix element $(\Lambda_{\mathrm{L}}^{-1})_{\alpha_j\beta_j}$ is adjacent to a matrix element
$M_{\alpha_{j+1}\beta_{j+1}}$ on an upper edge of a backbone in $\Sigma_c$.
Without loss of generality, we can put $\Lambda_{\mathrm{P}}^{-1}$'s on the left hand side of 
the $M$'s.
Indices $\alpha_j,\beta_j(=1,\ldots,N)$ are assigned to two upright line segments nipping a marked point
in the upper edge of the backbone, see Figure \ref{rule_L}.
\item[L3] 
If two matrix elements $U_{\alpha_{j}\beta_{j}}$ and $V_{\alpha_{j+1}\beta_{j+1}}$ ($U,V=M$ or $\Lambda_{\mathrm{L}}^{-1}$) on the same backbone are adjacent, we form a matrix product $(UV)_{\alpha_{j}\beta_{j+1}}$.
This graphical rule is the same as the rule \textbf{C2}.

\item[L4] 
If a matrix product  
\begin{align}
(\Lambda_{\mathrm{L}}^{-1}M)^i_{\alpha_1\beta_i}
\nonumber
\end{align}
corresponds to a backbone with a marked point,
we assign the expression
\begin{align}
N\sum_{\alpha_1,\beta_i=1}^N(\Lambda_{\mathrm{L}}^{-1})_{\alpha_1\beta_i}
\nonumber
\end{align}
to the bottom edge of this backbone. This gives the contribution
\begin{align}
N\mathrm{Tr}((\Lambda_{\mathrm{L}}^{-1}M)^i\Lambda_{\mathrm{L}}^{-1})
\nonumber
\end{align}
with  $i$ chord ends and therefore $i+1$ new marked points.
\item[L5] The Wick contraction between $M_{\alpha_j\beta_j}$ and $M_{\alpha'_{j'}\beta'_{j'}}$ corresponds to a band connecting two chord ends.
This graphical rule is the same as the rule \textbf{C4}.
\end{description}
\begin{figure}[h]
\begin{center}
   \includegraphics[width=120mm,clip]{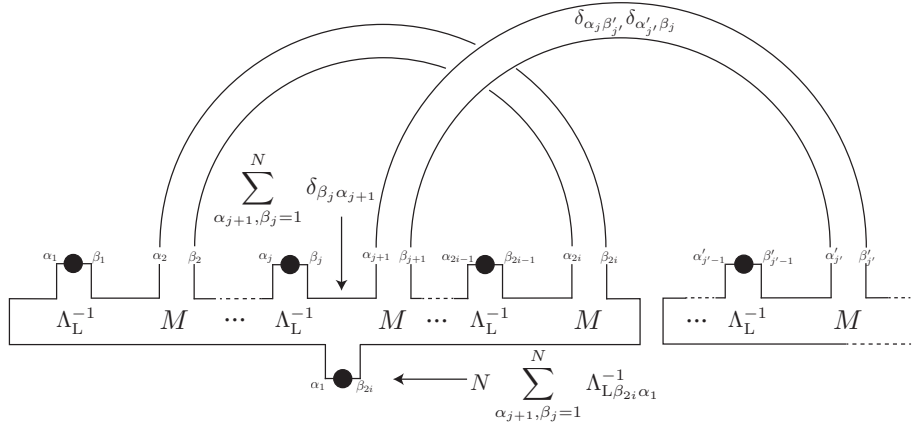}
\end{center}
\caption{\label{rule_L} Bijective correspondence between decorated chord diagrams and matchings of Wick contractions.}
\end{figure}

Repeating the same discussions as in the previous subsection, 
one finds that every chord diagram with the backbone spectrum $\{b_i\}$
corresponds to matchings with $k=\sum ib_i/2$ Wick contractions, which arise from the following Gaussian average
\begin{align}
{W}_N^{\mathrm{L}}(\{b_i\};\{q_i\})
=\Bigl\langle
\prod_{i\ge 0}\left(
N\mathrm{Tr}(\Lambda_{\mathrm{L}}^{-1}M)^{i}\Lambda_{\mathrm{L}}^{-1}
\right)^{b_{i}}
\Bigr\rangle_N^{\mathrm{G}},
\label{GaussL}
\end{align}
where we introduced Miwa times
\begin{align}
q_i=\frac{1}{N}{\mathrm{Tr}}\Lambda_{\mathrm{L}}^{-i}.
\label{Miwa2}
\end{align} 

\begin{figure}[h]
\begin{center}
   \includegraphics[width=110mm,clip]{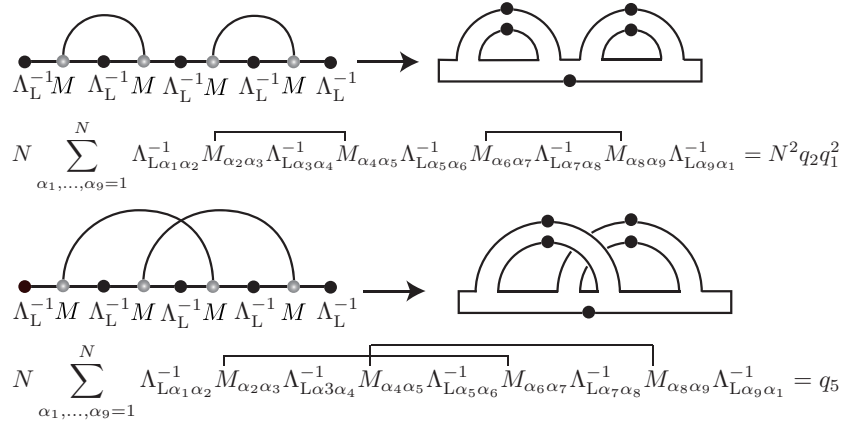}
\end{center}
\caption{\label{fat_length_ext22}  Chord diagrams of types $\{g,k;\{b_i\};\{p_i\}\}=\{0,2;b_5=1;p_1=2,p_3=1\}$ and $\{g,k;\{b_i\};\{p_i\}\}=\{1,2;b_5=1;p_5=1\}$.}
\end{figure}

It follows from the rules \textbf{L1}--\textbf{L5} that $i$ $\Lambda_{\mathrm{L}}^{-1}$'s are aligned along the boundary cycle with length $i$.
Therefore, for chord diagrams with the backbone spectrum $\{b_i\}$ and the boundary length spectrum $\{p_i\}$,
the corresponding Wick contractions in ${W}_N^{\mathrm{L}}(\{b_i\};\{q_i\})$
involve the factor
\begin{align}
N^{b-k+n}\prod_{i\ge 1}q_i^{p_i},
\nonumber
\end{align}
see Figure \ref{fat_length_ext22}. The key proposition of this subsection follows.
\begin{proposition}\label{prop:length_prop1}
The Gaussian average ${W}_N^{\mathrm{L}}(\{b_i\};\{q_i\})$ in eq.(\ref{GaussL})
is the generating function of the numbers $\widehat{N}_{k,b}(\{b_i\},\{p_i\})$ of chord diagrams with the backbone spectrum $\{b_i\}$
\begin{align}
{W}_N^{\mathrm{L}}(\{b_i\};\{q_i\})=\sum_{\{p_i\}}\widehat{N}_{k,b}(\{b_i\},\{p_i\})
N^{b-k+n}\prod_{i\ge 1}q_i^{p_i}.
\end{align}
\end{proposition}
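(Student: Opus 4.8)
The plan is to follow the same three-step pattern used in the proofs of Propositions \ref{prop:complete} and \ref{prop:iI_N}: expand the Gaussian average by the Wick theorem, interpret each matching combinatorially via the rules \textbf{L1}--\textbf{L5}, and then bookkeep the resulting powers of $N$ and of the Miwa times $q_i$. First I would apply Theorem \ref{thm:Wick} to the product of traces in (\ref{GaussL}). Writing out each factor $N\mathrm{Tr}((\Lambda_{\mathrm{L}}^{-1}M)^i\Lambda_{\mathrm{L}}^{-1})$ as an ordered word, the total number of matrix elements $M$ is $\sum_i i b_i=2k$, so the Wick theorem expresses $W_N^{\mathrm{L}}(\{b_i\};\{q_i\})$ as a sum over the $k$-matchings $\sigma\in P_k$ of the $M$'s, each contributing the product of $\delta$-factors from (\ref{Wick}) together with the $\Lambda_{\mathrm{L}}^{-1}$-entries left untouched by the contraction. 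By the bijective correspondence of \textbf{L1}--\textbf{L5}, each such matching is precisely a decorated chord diagram with backbone spectrum $\{b_i\}$: the $M$'s are the chord ends, the $k$ Wick contractions are the $k$ chords, and the $\Lambda_{\mathrm{L}}^{-1}$'s are the new marked points recording boundary length.

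Next I would carry out the index bookkeeping for a fixed diagram of boundary length spectrum $\{p_i\}$. The explicit prefactors $N\mathrm{Tr}$ supply $N^b$, and the $k$ Wick contractions each supply a factor $N^{-1}$ from (\ref{Wick_contr}), giving $N^{b-k}$. The essential point is what happens along the boundary cycles. As in the discussion preceding Proposition \ref{prop:complete}, performing the Wick contractions replaces every $M$ by Kronecker deltas, and the adjacency rule \textbf{L3} chains the indices of consecutive entries on each backbone; tracing the resulting identifications around a single boundary cycle of combinatorial length $i$ threads exactly the $i$ matrices $\Lambda_{\mathrm{L}}^{-1}$ sitting on that cycle head-to-tail, closing them into one trace $\mathrm{Tr}\,\Lambda_{\mathrm{L}}^{-i}=N q_i$ by (\ref{Miwa2}). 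Hence the $n$ boundary cycles of the diagram contribute $\prod_{i\ge 1}(Nq_i)^{p_i}=N^n\prod_{i\ge 1}q_i^{p_i}$, and multiplying by $N^{b-k}$ yields the weight $N^{b-k+n}\prod_{i\ge 1}q_i^{p_i}$ claimed. Summing over all $\{p_i\}$, weighted by the number $\widehat{N}_{k,b}(\{b_i\},\{p_i\})$ of decorated chord diagrams realizing each spectrum, gives the stated identity.

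The step I expect to be the main obstacle is verifying that the $\Lambda_{\mathrm{L}}^{-1}$ factors distributed along a boundary cycle of length $i$ genuinely close up into the single trace $\mathrm{Tr}\,\Lambda_{\mathrm{L}}^{-i}$, rather than into several shorter traces or an incorrectly ordered product. This amounts to checking that the boundary cycle of the fatgraph $\Sigma_c$ visits precisely the $i$ marked points associated with the chords and backbone undersides of that cycle, in the cyclic order dictated by the surface, so that the $\delta$-identifications coming from \textbf{L3} and from the Wick contractions (\ref{Wick}) match the right index of each $\Lambda_{\mathrm{L}}^{-1}$ to the left index of the next one around the cycle. Once this head-to-tail threading is established, the remaining $N$- and $q_i$-counting is routine and identical in spirit to Propositions \ref{prop:complete} and \ref{prop:iI_N}.
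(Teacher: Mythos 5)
Your proposal is correct and follows essentially the same route as the paper: a Wick-theorem expansion of (\ref{GaussL}), the bijection of rules \textbf{L1}--\textbf{L5} between $k$-matchings and decorated chord diagrams, and the observation that the $\delta$-identifications thread the $\Lambda_{\mathrm{L}}^{-1}$'s head-to-tail around each boundary cycle of length $i$ into a single factor $\mathrm{Tr}\,\Lambda_{\mathrm{L}}^{-i}=Nq_i$, yielding the weight $N^{b-k+n}\prod_{i\ge 1}q_i^{p_i}$. The threading step you flag as the main obstacle is exactly the point the paper asserts from \textbf{L1}--\textbf{L5} (consistent with the count $\sum_i(i+1)b_i=2k+b=\sum_i ip_i$ of $\Lambda_{\mathrm{L}}^{-1}$ insertions), so your argument matches the paper's.
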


We also consider the full generating function for the numbers $\widehat{N}_{k,b}(\{b_i\},\{p_i\})$ of chord diagrams
\begin{align}
Z_N^{\mathrm{L}}(y;\{s_i\};\{q_i\})
=\sum_{\{b_i\}}\sum_{\{p_i\}}\widehat{N}_{k,b}(\{b_i\},\{p_i\})
N^{b-k+n}y^k\prod_{i\ge 0}s_i^{b_i}\prod_{i\ge 1}q_i^{p_i}.
\nonumber
\end{align}
This full generating function is given by the sum of Gaussian averages (\ref{GaussL}),
and in consequence by the following Hermitian matrix integral
\begin{align}
\begin{split}
&
Z_N^{\mathrm{L}}(y;\{s_i\};\{q_i\})
=\sum_{\{b_i\}}\frac{1}{\prod_ib_i!}y^kW_N^{\mathrm{L}}(\{b_i\},\{y^{-ib_i/2}q_i\})\prod_is_i^{b_i}
\\
&
=\frac{1}{\mathrm{Vol}_N}\int_{{\mathcal H}_N} dM\;\exp\bigg[
-N\mathrm{Tr}\bigg(
\frac{M^2}{2}-
\sum_{i\ge 0}s_{i}y^{i/2}\left(\Lambda_{\mathrm{L}}^{-1}M\right)^{i}
\Lambda_{\mathrm{L}}^{-1}
\bigg)
\bigg].
\label{Z_bdy_length2}
\end{split}
\end{align}
Comparing this matrix integral and the generating function $Z_N^{\mathrm{L}}(y;\{s_i\};\{q_i\})$ in equation (\ref{partition_length_ori}), we arrive at the main theorem of this subsection.
\begin{theorem}\label{matrix_length_ori}
The matrix integral (\ref{Z_bdy_length2}) agrees with the generating function (\ref{partition_length_ori})
\begin{align}
Z_N^{\mathrm{L}}(y;\{s_i\};\{q_i\})=Z^{\mathrm{L}}(N^{-1},y;\{s_i\};\{q_i\}).
\end{align}
\end{theorem}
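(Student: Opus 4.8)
The plan is to obtain the identity by assembling the fixed-backbone-spectrum result of Proposition \ref{prop:length_prop1} into the full generating function and then matching it against the combinatorial definition (\ref{partition_length_ori}); the genuine content is the bookkeeping of the power of $N$ and the interpretation of the $q_i$ as formal variables. First I would recall that Proposition \ref{prop:length_prop1} identifies, for each fixed backbone spectrum $\{b_i\}$, the Gaussian average $W_N^{\mathrm{L}}(\{b_i\};\{q_i\})$ of (\ref{GaussL}) with the generating polynomial $\sum_{\{p_i\}}\widehat{N}_{k,b}(\{b_i\},\{p_i\})N^{b-k+n}\prod_{i\ge 1}q_i^{p_i}$ of chord diagrams filtered by the boundary length spectrum.

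Next I would build the full generating function from the matrix integral (\ref{Z_bdy_length2}). Expanding the exponential of the potential term as a Taylor series and using $\exp\big[\sum_{i\ge 0}s_i y^{i/2}N\mathrm{Tr}((\Lambda_{\mathrm{L}}^{-1}M)^i\Lambda_{\mathrm{L}}^{-1})\big]=\sum_{\{b_i\}}\prod_{i\ge 0}\frac{1}{b_i!}\big(s_i y^{i/2}N\mathrm{Tr}((\Lambda_{\mathrm{L}}^{-1}M)^i\Lambda_{\mathrm{L}}^{-1})\big)^{b_i}$, I would take the Gaussian average term by term. The factor $1/b_i!$ is precisely the symmetry factor for permuting the $b_i$ indistinguishable backbones with $i$ chord ends, while $\prod_i y^{ib_i/2}=y^{\sum_i ib_i/2}=y^k$ collects the correct power of $y$. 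This reproduces the sum $\sum_{\{b_i\}}\frac{y^k}{\prod_i b_i!}W_N^{\mathrm{L}}(\{b_i\};\{q_i\})\prod_i s_i^{b_i}$ displayed just before (\ref{Z_bdy_length2}), so that, after inserting Proposition \ref{prop:length_prop1}, the matrix integral equals $\sum_{\{b_i\}}\sum_{\{p_i\}}\widehat{N}_{k,b}(\{b_i\},\{p_i\})N^{b-k+n}y^k\prod_{i\ge 0}s_i^{b_i}\prod_{i\ge 1}q_i^{p_i}$.

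The final step is to compare this with the combinatorial definition (\ref{partition_length_ori}) of $Z^{\mathrm{L}}(x,y;\{s_i\};\{q_i\})$. Both are sums over the same index set $(\{b_i\},\{p_i\})$ of the same numbers $\widehat{N}_{k,b}(\{b_i\},\{p_i\})$ carrying the weight $y^k\prod_i s_i^{b_i}\prod_i q_i^{p_i}$ together with a power of the genus-tracking variable; since $x^{-b+k-n}=N^{b-k+n}$ when $x=N^{-1}$, the two weights agree under the substitution $x=N^{-1}$ and the asserted identity follows term by term. I would emphasize that, unlike in Theorem \ref{thm:point_ori}, no auxiliary rescaling of variables is required here, because the length spectrum is tracked only by the Miwa times $q_i$ with $i\ge1$, so there is no forced value analogous to $r_0=1$.

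The step demanding the most care is the identification of the formal variables $q_i$ of the combinatorial generating function with the Miwa times $q_i=\frac{1}{N}\mathrm{Tr}\Lambda_{\mathrm{L}}^{-i}$ imposed by the external matrix. Because these power sums are not independent once $N$ and $\Lambda_{\mathrm{L}}$ are fixed, the correct reading is that both sides are regarded as formal power series in independent indeterminates $q_i$; at each fixed total degree one may take $N$ large and choose the spectrum of $\Lambda_{\mathrm{L}}$ so that the relevant power sums realize any prescribed values, whereby equality of the coefficient polynomials in the $q_i$ is equivalent to the stated identity of generating functions. Together with the routine check that the Taylor expansion reproduces the symmetry factors, this completes the argument.
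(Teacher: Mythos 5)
Your proposal is correct and follows essentially the same route as the paper: it assembles Proposition \ref{prop:length_prop1} over all backbone spectra via the Taylor expansion of the exponential (with the $1/b_i!$ symmetry factors and $y^{\sum_i ib_i/2}=y^k$), and then matches the resulting sum against the definition (\ref{partition_length_ori}) using $x^{-b+k-n}=N^{b-k+n}$ at $x=N^{-1}$. Your added remarks on why no $t_0$-type rescaling is needed (the $q_i$ only occur for $i\ge 1$) and on reading the Miwa times as independent formal variables are sensible clarifications of points the paper leaves implicit.
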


\subsubsection*{\bfseries Specialization of the model}

The cut-and-join equation for the numbers of chord diagrams is discussed in Subsection \ref{subsec:bl_h}. For technical reasons, the partial differential equation for the generating function (\ref{partition_length_ori}) with general parameter $\{s_i\}$ cannot be written in a simple form.
Therefore we consider the specialization of the generating function (\ref{partition_length_ori}) defined by\footnote{
In \cite{AAPZ}, the length spectrum generating function $G_b(x,y;\{s_i\})$
is the same as in this specialized model.}
\begin{equation}
s_i=s.
\nonumber
\end{equation}
Under this specialization, the matrix integral (\ref{Z_bdy_length2}) reduces to
\begin{align}
\begin{split}
&
Z_N^{\mathrm{L}}(y;s;\{q_i\})=Z_N^{\mathrm{L}}(y;\{s_i=s\};\{q_i\})
\\
&
=\frac{1}{\mathrm{Vol}_N}\int_{{\mathcal H}_N} dM\;\exp\bigg[
-N\mathrm{Tr}\bigg(\frac{M^2}{2}
-\frac{s}{1-y^{1/2}\Lambda_{\mathrm{L}}^{-1}M}
\Lambda_{\mathrm{L}}^{-1}\bigg)
\bigg].
\label{Z_length}
\end{split}
\end{align}
For $Z^{\mathrm{L}}(x,y;s;\{q_i\})=Z^{\mathrm{L}}(x,y;\{s_i=s\};\{q_i\})$, we find
\begin{align}
Z_N^{\mathrm{L}}(y;s;\{q_i\})=Z^{\mathrm{L}}(N^{-1},y;s;\{q_i\}).
\end{align}
In Subsection \ref{subsec:bl_h} we derive the cut-and-join equation for this specialized model, and show the agreement with the cut-and-join equation found by combinatorial means in \cite{AAPZ}.

%%%%%%%%%%%%%%%%%%%%%%%%%%%%%%%%%
%%%%%%%%%%%%%%%%%%%%%%%%%%%%%%%%%
\subsection{\label{subsec:bpl}The boundary length and point spectrum and the unified model}
%%%%%%%%%%%%%%%%%%%%%%%%%%%%%%%%%
%%%%%%%%%%%%%%%%%%%%%%%%%%%%%%%%%

So far we have discussed separately the enumeration of chord diagrams and partial chord diagrams labeled by the boundary point spectrum and the boundary length spectrum. In this subsection we consider a unification of these two kinds of spectra, which is referred to as the boundary length and point spectrum. This unified spectrum was introduced and analyzed by cut-and-join methods in \cite{Andersen_new}. In what follows we construct a matrix model that encodes this new spectrum, and in Subsection \ref{subsec:bpl_h} we show how the cut-and-join equation found in \cite{Andersen_new} follows from this matrix model.

\begin{figure}[h]
\begin{center}
   \includegraphics[width=120mm,clip]{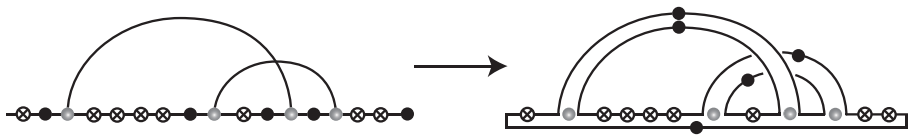}
\end{center}
\caption{\label{fat_pl} Decorating a partial chord diagram with the boundary label $\ii=(1,0,1,4,2)$ with marked points for partitions.}
\end{figure}

The boundary length and point spectrum $\{n_{\ii}\}$ is defined as follows \cite{Andersen_new}.
\begin{definition}
Let $c$ be a partial chord diagram. We associate the $K$ tuple of numbers $\ii=(i_1,i_2,\ldots,i_K)$ to a boundary component of $\Sigma_c$,  if we find the tuple $\ii$ of marked points around this boundary component, once we record different numbers of marked points in between chord ends and backbone underpasses along the boundary in the cyclic order induced from the orientation of $\Sigma_c$.
The boundary length and point spectrum $\{n_{\ii}\}$ counts the number of boundary cycles indexed by $\ii$ for the partial chord diagram $c$.
\end{definition}

To enumerate the number of partial chord diagrams labeled by $\{g,k,l;\{b_i\};\{n_{\ii}\}\}$,
we consider the generating functions introduced in \cite{Andersen_new}.
\begin{definition}
Let ${\mathcal N}_{g,k,l}(\{b_i\},\{n_{\ii}\})$ denote the number of connected chord diagrams labeled by the set of parameters $(g,k,l;\{b_i\};\{n_{\ii}\})$ in the boundary length and point spectrum. The generating function for these numbers is defined as
\begin{align}
\begin{split}
&
{\mathcal F}(x,y;\{s_i\};\{u_{\ii}\})=\sum_{b\ge 1}{\mathcal F}_b(x,y;\{s_i\};\{u_{\ii}\}),
\\
&
{\mathcal F}_b(x,y;\{s_i\};\{u_{\ii}\})=\frac{1}{b!}\sum_{\sum_ib_i=b}
\sum_{\{n_{\ii}\}}
%\\
%&\qquad\qquad\qquad\qquad\qquad\quad
{\mathcal N}_{g,k,l}(\{b_i\},\{n_{\ii}\})x^{2g-2}y^k
\prod_{i\ge 0}s_i^{b_i}\prod_{K\ge 1}\prod_{\{i_L\}_{L=1}^K} u_{\ii}^{n_{\ii}}.
\label{F_blp}
\end{split}
\end{align}
Exponentiating this generating function, one obtains the full generating function for the numbers $\widehat{\mathcal{N}}_{k,b,l}(\{b_i\},\{n_{\ii}\})$ of partial chord diagrams
\begin{align}
&
{\mathcal Z}(x,y;\{s_i\};\{u_{\ii}\})=\exp\left[{\mathcal F}(x,y;\{s_i\};\{u_{\ii}\})\right]
\nonumber \\
&
=\sum_{\{b_i\}}\sum_{\{n_{\ii}\}}
\widehat{\mathcal{N}}_{k,b,l}(\{b_i\},\{n_{\ii}\})
x^{-b+k-n}y^k\prod_{i\ge 0}s_i^{b_i}\prod_{K\ge 1}\prod_{\{i_L\}_{L=1}^K} u_{\ii}^{n_{\ii}},
\label{Z_blp}
\end{align}
where $l$, $k$, and $b$ obey
\begin{align}
l=\sum_{K\ge 1}\sum_{\{i_L\}_{L=1}^K}\sum_{L=1}^Ki_Ln_{(i_1,\ldots,i_K)},\qquad
2k+l=\sum_{i\ge 1}ib_i, \qquad b=\sum_{i\ge 0}b_i.
\nonumber
\end{align}
\end{definition}

The enumeration of partial chord diagrams decorated by the boundary length and point spectrum can also be expressed in terms of Gaussian averages over Hermitian matrices. To this end we again make use of extra marked points, just as in the previous section (concerning the length spectrum to mark the separation between marked points on the backbone, counted by the index $\ii$), see Figure \ref{fat_pl}.
Indeed, the boundary length and point spectrum also encodes the length spectrum, simply as the number $K$ of partitions of marked points on boundary cycles. 

To represent the boundary length and point spectrum,  we introduce two external matrices
$\Lambda_{\mathrm{P}}$ and $\Lambda_{\mathrm{L}}$.
In order to faithfully represent the ordering between marked points and partitions on each boundary cycle,
we assume that these two external matrices do not commute
\begin{align}
[\Lambda_{\mathrm{P}},\Lambda_{\mathrm{L}}]\ne 0.
\nonumber
\end{align}
The correspondence between partial chord diagrams with the backbone spectrum $\{b_i\}$ and matchings by Wick contractions in a Gaussian average
is given by a combination of the previous rules \textbf{C1}, \textbf{C2}, \textbf{P2}, \textbf{L2},  \textbf{P4}, \textbf{L4}, and \textbf{L5}.
We summarize this correspondence in Table \ref{table:chord_wick}.

\begin{table}[htb]
\caption{\label{table:chord_wick}The correspondence between partial chord diagrams with the backbone spectrum $\{b_i\}$ and matchings by Wick contractions in the Gaussian average.}
  \begin{tabular}{|c|c|} \hline
    A partial chord diagram &  Gaussian average  \\ \hline \hline
    A chord end on a backbone & $\Lambda_{\mathrm{L}}^{-1}M$ \\
    A marked point on a backbone & $\Lambda_{\mathrm{P}}$ \\
    An underside of a backbone & $N\Lambda_{\mathrm{L}}^{-1}$ \\
    A backbone & $N{\rm Tr}\left(\Lambda_{\mathrm{L}}^{-1}
\Lambda_{\mathrm{P}}^{\alpha_1}\Lambda_{\mathrm{L}}^{-1}\Lambda_{\mathrm{P}}^{\alpha_2}\Lambda_{\mathrm{L}}^{-1}\cdots\Lambda_{\mathrm{P}}^{\alpha_K}\Lambda_{\mathrm{L}}^{-1}
\right)$ \\
    A chord & Wick contraction $\contraction[1ex]{}{X}{X}{}MM$ \\
 \hline
  \end{tabular}
\end{table}

Based on these rules, one finds a bijective correspondence between partial chord diagrams with the backbone spectrum $\{b_i\}$
and matchings by Wick contractions in the Gaussian average  
\begin{align}
{\mathcal W}_N(\{b_i\};\{u_{\ii}\})=\Bigl\langle
\prod_{i\ge 0}\left(
N\mathrm{Tr}(\Lambda_{\mathrm{L}}^{-1}M+\Lambda_{\mathrm{P}})^{i}\Lambda_{\mathrm{L}}^{-1}
\right)^{b_{i}}
\Bigr\rangle_N^{\mathrm{G}},
\label{GaussPL}
\end{align}
where in order to represent trace factors $\Lambda_{\mathrm{P}}$ and $\Lambda_{\mathrm{L}}$ we introduced the \textit{generalized Miwa times}
\begin{align}
u_{(i_1,\ldots,i_K)}=\frac{1}{N}{\mathrm{Tr}}\left(
\Lambda_{\mathrm{P}}^{i_1}\Lambda_{\mathrm{L}}^{-1}\Lambda_{\mathrm{P}}^{i_2}\Lambda_{\mathrm{L}}^{-1}\cdots\Lambda_{\mathrm{P}}^{i_K}\Lambda_{\mathrm{L}}^{-1}
\right).
\label{blp_time}
\end{align}
If a partial chord diagram $c$ contains a boundary cycle labeled by $\ii=(i_1,\ldots,i_K)$,
one finds the following trace factor in the corresponding Wick contractions in ${\mathcal W}_N(\{b_i\};\{u_{\ii}\})$
\begin{align}
{\mathrm{Tr}}\left(
\Lambda_{\mathrm{P}}^{i_1}\Lambda_{\mathrm{L}}^{-1}\Lambda_{\mathrm{P}}^{i_2}\Lambda_{\mathrm{L}}^{-1}\cdots\Lambda_{\mathrm{P}}^{i_K}\Lambda_{\mathrm{L}}^{-1}
\right).
\nonumber
\end{align}
Finally, combining Propositions \ref{prop:iI_N} and \ref{prop:length_prop1}, we obtain the key proposition.
\begin{proposition}\label{prop:blp}
The Gaussian average ${\mathcal W}_N(\{b_i\};\{u_{\ii}\})$ in the equation (\ref{GaussPL})
is the generating function for the numbers $\widehat{\mathcal{N}}_{k,b,l}(\{b_i\},\{n_{\ii}\})$
of partial chord diagrams 
\begin{align}
{\mathcal W}_N(\{b_i\};\{u_{\ii}\})=
\sum_{\{n_{\ii}\}}
\widehat{\mathcal N}_{k,b,l}({\{b_i\},\{n_{\ii}\}})N^{b-k+n}
\prod_{K\ge 1}\prod_{\{i_L\}_{L=1}^K} u_{\ii}^{n_{\ii}}.
\end{align}
\end{proposition}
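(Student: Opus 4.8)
The plan is to re-run the bijective counting argument underlying Propositions~\ref{prop:iI_N} and \ref{prop:length_prop1}, now with both external matrices $\Lambda_{\mathrm{P}}$ and $\Lambda_{\mathrm{L}}$ inserted at once, and then to read the generalized Miwa times (\ref{blp_time}) off the boundary cycles. First I would expand the product in (\ref{GaussPL}) over the $b=\sum_i b_i$ backbones and expand each factor $N\mathrm{Tr}(\Lambda_{\mathrm{L}}^{-1}M+\Lambda_{\mathrm{P}})^i\Lambda_{\mathrm{L}}^{-1}$ into a sum of ordered traces $N\mathrm{Tr}(X_1\cdots X_i\Lambda_{\mathrm{L}}^{-1})$ with each letter $X_j\in\{\Lambda_{\mathrm{L}}^{-1}M,\,\Lambda_{\mathrm{P}}\}$. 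By the dictionary of Table~\ref{table:chord_wick} (the rules \textbf{C1}, \textbf{C2}, \textbf{P2}, \textbf{L2}, \textbf{P4}, \textbf{L4}, \textbf{L5}) each such word is precisely a backbone carrying, in cyclic order, a prescribed interleaving of chord ends (the letters $\Lambda_{\mathrm{L}}^{-1}M$, each supplying one separator $\Lambda_{\mathrm{L}}^{-1}$ and one matrix $M$ to be contracted) and marked points (the letters $\Lambda_{\mathrm{P}}$), closed up by its underside (the trailing $\Lambda_{\mathrm{L}}^{-1}$). Summing over all such words with $\sum_i i b_i=2k$ chord ends thereby enumerates every decorated backbone with backbone spectrum $\{b_i\}$.

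Next I would apply the Wick theorem (Theorem~\ref{thm:Wick}) to the Gaussian average over $M$: each matching in $P_k$ pairs the $2k$ elements $M$ into $k$ chords, so summing over $P_k$ reproduces exactly the set of partial chord diagrams $c$ with backbone spectrum $\{b_i\}$, just as in the two preceding propositions. It then remains to collect the weight attached to a fixed diagram. The $b$ explicit backbone factors contribute $N^{b}$, the $k$ Wick contractions (\ref{Wick_contr}) contribute $N^{-k}$, and carrying out the index sums around the $n$ boundary cycles replaces the surviving deltas by one external-matrix trace per cycle. Each such trace reduces to $\mathrm{Tr}(I)=N$ when the external matrices are set to the identity, so the $n$ traces supply the remaining $N^{n}$ and the pure power is $N^{b-k+n}$, consistent with the Euler relation $b-k+n=2-2g$; the non-trivial content of each trace is the factor $u_{\ii}$ discussed next.

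The genuinely new point, and the step I expect to be the main obstacle, is to identify the external-matrix trace produced by a single boundary cycle. Traversing a boundary cycle of $c$ in the cyclic order induced by the orientation of $\Sigma_c$, one meets an alternating pattern of separators (chord ends and backbone undersides, each carrying a $\Lambda_{\mathrm{L}}^{-1}$) interspersed with the marked points (each carrying a $\Lambda_{\mathrm{P}}$) lying between consecutive separators. If the cycle has length $K$ with $i_L$ marked points between its $L$-th and $(L+1)$-th separator, then composing the surviving deltas and external-matrix entries along the cycle must produce
\begin{align}
\mathrm{Tr}\big(\Lambda_{\mathrm{P}}^{i_1}\Lambda_{\mathrm{L}}^{-1}\Lambda_{\mathrm{P}}^{i_2}\Lambda_{\mathrm{L}}^{-1}\cdots\Lambda_{\mathrm{P}}^{i_K}\Lambda_{\mathrm{L}}^{-1}\big)=N\,u_{\ii},
\nonumber
\end{align}
by the definition (\ref{blp_time}). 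The delicate part is that this is an \emph{ordered} trace: because $[\Lambda_{\mathrm{P}},\Lambda_{\mathrm{L}}]\ne 0$, one must verify that the matrix multiplication inherited from \textbf{C2} and \textbf{L2} records the full cyclic word $\ii=(i_1,\ldots,i_K)$ faithfully, and not merely the total numbers of marked points and of separators, which is all that the commuting specializations in Propositions~\ref{prop:iI_N} and \ref{prop:length_prop1} could resolve. Granting this, a boundary cycle labeled by $\ii$ contributes the weight $N\,u_{\ii}$, a diagram with boundary length and point spectrum $\{n_{\ii}\}$ contributes $N^{b-k+n}\prod_{K\ge1}\prod_{\{i_L\}_{L=1}^K}u_{\ii}^{n_{\ii}}$, and summing over diagrams grouped by their spectrum yields the asserted identity.
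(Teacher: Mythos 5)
Your proposal is correct and follows essentially the same route as the paper: the paper likewise combines the rules \textbf{C1}, \textbf{C2}, \textbf{P2}, \textbf{L2}, \textbf{P4}, \textbf{L4}, \textbf{L5} (Table \ref{table:chord_wick}) with the Wick theorem and with Propositions \ref{prop:iI_N} and \ref{prop:length_prop1}, the only genuinely new observation being exactly the one you isolate, namely that the non-commuting external matrices make each boundary cycle contribute the ordered trace $N u_{\ii}$ recording the full cyclic word $\ii$. The only slip is the constraint on the word length, which for partial chord diagrams reads $\sum_i i b_i = 2k+l$ (chord ends plus marked points), not $\sum_i i b_i = 2k$; this does not affect the argument.
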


Repeating the same combinatorics as in the previous subsections, 
we find the main theorem of this section.

\begin{theorem}\label{Thm:blp_matrix}
The Hermitian matrix integral 
\begin{align}
\begin{split}
&
{\mathcal Z}_N(y;\{s_i\};\{u_{\ii}\}) =
\\
&
=\frac{1}{{\mathrm{Vol}}_N}\int dM\;\exp\bigg[
-N{\mathrm{Tr}}\bigg(\frac{M^2}{2}-\sum_{i\ge 0}s_i(y^{1/2}\Lambda_{\mathrm{L}}^{-1}M+\Lambda_{\mathrm{P}})^i\Lambda_{\mathrm{L}}^{-1}\bigg)
\bigg]
\label{blp_model}
\end{split}
\end{align}
agrees with the generating function (\ref{Z_blp})
\begin{align}
{\mathcal Z}_N(y;\{s_i\};\{u_{\ii}\})={\mathcal Z}(N^{-1},y;\{s_i\};\{u_{\ii}\}).
\label{Z_blp1}
\end{align}
\end{theorem}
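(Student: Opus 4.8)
The plan is to follow verbatim the template already established for the boundary point spectrum (Theorem \ref{thm:point_ori}) and the boundary length spectrum (Theorem \ref{matrix_length_ori}), now feeding in the combined bijection recorded in Proposition \ref{prop:blp}. First I would expand the exponential of the potential in (\ref{blp_model}) as a formal power series in the couplings $s_i$. Writing the potential as $\sum_{i\ge 0}s_i\,N\mathrm{Tr}\big((y^{1/2}\Lambda_{\mathrm{L}}^{-1}M+\Lambda_{\mathrm{P}})^i\Lambda_{\mathrm{L}}^{-1}\big)$, the exponential factorises over $i$ and, after taking the Gaussian average, produces
\begin{align}
{\mathcal Z}_N(y;\{s_i\};\{u_{\ii}\})=\sum_{\{b_i\}}\prod_{i\ge 0}\frac{s_i^{b_i}}{b_i!}\,\Big\langle\prod_{i\ge 0}\big(N\mathrm{Tr}(y^{1/2}\Lambda_{\mathrm{L}}^{-1}M+\Lambda_{\mathrm{P}})^i\Lambda_{\mathrm{L}}^{-1}\big)^{b_i}\Big\rangle_N^{\mathrm{G}}.
\nonumber
\end{align}
Each fixed backbone spectrum $\{b_i\}$ thus contributes a Gaussian average of exactly the shape appearing in (\ref{GaussPL}), apart from the insertion of $y^{1/2}$ alongside every chord-end factor $\Lambda_{\mathrm{L}}^{-1}M$.

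The second step is to isolate the $y$-dependence. Since $y^{1/2}$ multiplies only the matrix $M$ inside each factor $y^{1/2}\Lambda_{\mathrm{L}}^{-1}M$, it is a scalar that commutes past the traces and survives the Wick contractions untouched. By the Wick theorem (Theorem \ref{thm:Wick}) a surviving term with $k$ chords contains precisely $2k$ matrix elements $M$, paired into $k$ bands, so the accumulated power is $y^{2k/2}=y^k$, depending on the diagram only through its number of chords. Crucially this scalar pulls out globally and does not enter the residual traces, which are built purely from $\Lambda_{\mathrm{P}}$ and $\Lambda_{\mathrm{L}}^{-1}$ and hence reproduce the generalized Miwa times $u_{\ii}$ of (\ref{blp_time}) with no $y$ contamination. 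Consequently the $y$-rescaled average equals $y^k\,{\mathcal W}_N(\{b_i\};\{u_{\ii}\})$.

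With the $y$-weight extracted, the third step is to invoke Proposition \ref{prop:blp}, which evaluates ${\mathcal W}_N(\{b_i\};\{u_{\ii}\})$ as $\sum_{\{n_{\ii}\}}\widehat{\mathcal N}_{k,b,l}(\{b_i\},\{n_{\ii}\})\,N^{b-k+n}\prod_{K\ge1}\prod_{\{i_L\}}u_{\ii}^{n_{\ii}}$. Substituting this and carrying out the summation over $\{b_i\}$ yields
\begin{align}
{\mathcal Z}_N(y;\{s_i\};\{u_{\ii}\})=\sum_{\{b_i\}}\sum_{\{n_{\ii}\}}\widehat{\mathcal N}_{k,b,l}(\{b_i\},\{n_{\ii}\})\,N^{b-k+n}y^k\prod_{i\ge 0}\frac{s_i^{b_i}}{b_i!}\prod_{K\ge1}\prod_{\{i_L\}}u_{\ii}^{n_{\ii}}.
\nonumber
\end{align}
Setting $x=N^{-1}$, so that $x^{-b+k-n}=N^{b-k+n}$, this coincides with the expansion (\ref{Z_blp}) of ${\mathcal Z}(N^{-1},y;\{s_i\};\{u_{\ii}\})$, once the factors $1/b_i!$ generated by the exponential are reconciled with the counting convention for $\widehat{\mathcal N}$ through the exponential relation ${\mathcal Z}=\exp[{\mathcal F}]$ — the standard identification of automorphisms of the (ordered) backbones with permutations of identical factors in the Gaussian average.

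I expect the main obstacle to lie precisely in this last reconciliation: verifying that the $1/b_i!$ weights arising from $\exp$ of the potential match the connected/disconnected exponentiation underlying (\ref{Z_blp}), and, in tandem, confirming that the $y^{1/2}$ insertions collect into a clean global $y^k$ rather than distributing nontrivially across the non-commuting traces $\mathrm{Tr}\big(\Lambda_{\mathrm{P}}^{i_1}\Lambda_{\mathrm{L}}^{-1}\cdots\Lambda_{\mathrm{P}}^{i_K}\Lambda_{\mathrm{L}}^{-1}\big)$. Everything else is a direct transcription of the arguments developed in Subsections \ref{subsec:bp} and \ref{subsec:bl}.
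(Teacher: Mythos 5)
Your proposal is correct and follows essentially the same route as the paper, whose proof of Theorem \ref{Thm:blp_matrix} consists precisely of ``repeating the same combinatorics'' of Subsections \ref{subsec:bp} and \ref{subsec:bl} with Proposition \ref{prop:blp} in place of Propositions \ref{prop:iI_N} and \ref{prop:length_prop1}. The two points you flag as potential obstacles are handled exactly as you anticipate: the $y^{1/2}$ sits only on the scalar multiplying $M$, so $2k$ Wick-contracted $M$'s give a clean global $y^k$ (with no rescaling of $\Lambda_{\mathrm{P}}$ needed, unlike in Theorem \ref{thm:point_ori}), and the $1/b_i!$ weights are absorbed by the backbone-permutation symmetry underlying the exponential relation (\ref{Z_blp}).
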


%%%%%%%%%%%%%%%%%%%%%%%%%%%%%%%%%%%%%%%%%
%      Section 3                                                                                                      %
%%%%%%%%%%%%%%%%%%%%%%%%%%%%%%%%%%%%%%%%%
\section{\label{sec:mat_heat}Cut-and-join equations via matrix models}
%%%%%%%%%%%%%%%%%%%%%%%%%%%%%%%%%%%%%%%%%
%%%%%%%%%%%%%%%%%%%%%%%%%%%%%%%%%%%%%%%%%

In Section \ref{sec:const_mat} we discussed matrix models that enumerate partial chord diagrams filtered by the boundary point spectrum, the boundary length spectrum, and the boundary length and point spectrum. In this section we derive partial differential equations for these matrix models, and show that they agree with the cut-and-join equations found in \cite{AAPZ, Andersen_new}. To derive these differential equations, it is useful to introduce the following matrix integral.

\begin{definition}\label{def:master_mat}
Let $A$ and $B$ denote invertible matrices of rank $N$. We define a formal matrix integral with parameters $y$, $\{g_i\}_{i=-\infty}^{+\infty}$, and matrices $A$ and $B$, as follows
\begin{align}
\begin{split}
&
Z_N(y;\{g_i\};A;B) =
\\
&
=\frac{1}{\mathrm{Vol}_N}\int_{\mathcal{H}_N} dM\;
\exp\bigg[-N\mathrm{Tr}\bigg(\frac12M^2-\sum_{i\in {\IZ}}g_i(y^{1/2}B^{-1}M+A)^iB^{-1}\bigg)\bigg].
\label{gblp_def}
\end{split}
\end{align}
\end{definition}

By the following specializations of this matrix integral one finds matrix integrals discussed in Section \ref{sec:const_mat}
\begin{align}
\begin{split}
&
Z_N^{\mathrm{P}}(y;\{s_i\};\{r_i\}):\ \
g_{i<0}=0,\ \ \ g_{i\ge 0}=s_i,\ \ \
A=\Lambda_{\mathrm{P}},\ \ \ B=I_N,
\\
&
Z_N^{\mathrm{L}}(y;\{s_i\};\{q_i\}):\ \
g_{i<0}=0,\ \ \ g_{i\ge 0}=s_i,\ \ \
A=0,\ \ \ B=\Lambda_{\mathrm{L}},
\\
&
Z_N^{\mathrm{L}}(y;s;\{q_i\}):\ \
g_{i\neq -1}=0,\ \ \
g_{-1}=-s,\ \ \
A=\Lambda_{\mathrm{L}},\ \ \ B=-I_N,
\\
&
\mathcal{Z}_N(y;\{s_i\};\{u_{\ii}\}):\ \
g_{i<0}=0,\ \ \ g_{i\ge 0}=s_i,\ \ \
A=\Lambda_{\mathrm{P}},\ \ \ B=\Lambda_{\mathrm{L}},
\nonumber
\end{split}
\end{align}
where $I_N$ is the rank $N$ identity matrix.

The matrix integral (\ref{gblp_def}) satisfies the following partial differential equation.

\begin{proposition}\label{prop:master_heat}
The matrix integral $Z_N(y;\{g_i\};A;B)$ obeys a partial differential equation
\begin{equation}
\bigg[\frac{\partial}{\partial y}
-\frac{1}{2N}\mathrm{Tr}(B^{-1})^{\mathrm{T}}\frac{\partial}{\partial A}
(B^{-1})^{\mathrm{T}}\frac{\partial}{\partial A}\bigg]Z_N(y;\{g_i\};A;B)=0,
\label{gblp_heat}
\end{equation}
where the trace in the second term is defined, for rank $N$ matrices $X$ and $Y$, as
\begin{equation}
\mathrm{Tr}X\frac{\partial}{\partial Y}=
\sum_{\alpha,\beta=1}^NX_{\alpha\beta}\frac{\partial}{\partial Y_{\beta\alpha}}.
\nonumber
\end{equation}
\end{proposition}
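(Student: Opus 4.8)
The plan is to prove the heat-type equation (\ref{gblp_heat}) by a direct computation, exploiting the fact that the $y$-dependence of the integrand enters only through the combination $y^{1/2}B^{-1}M$ inside the matrix argument $(y^{1/2}B^{-1}M+A)$. The strategy is to show that $\partial/\partial y$ acting on the integrand can be traded, up to an integration by parts in $M$, for the second-order differential operator $\frac{1}{2N}\mathrm{Tr}(B^{-1})^{\mathrm{T}}\frac{\partial}{\partial A}(B^{-1})^{\mathrm{T}}\frac{\partial}{\partial A}$ acting on the same integrand. Concretely, denote the potential by $V=V(M,A,y)=\frac12 M^2-\sum_{i\in\IZ}g_i(y^{1/2}B^{-1}M+A)^iB^{-1}$, so that the integrand is $\e^{-N\mathrm{Tr}V}$.

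First I would compute the two elementary derivatives of the argument $X:=y^{1/2}B^{-1}M+A$. Differentiating in $y$ gives $\partial X/\partial y = \tfrac12 y^{-1/2}B^{-1}M$, while differentiating with respect to the matrix entry $A_{\beta\alpha}$ simply inserts an elementary matrix $E_{\beta\alpha}$ into $X$, since $A$ appears additively. Using the standard rule for differentiating a power of a matrix, $\frac{\partial}{\partial A_{\beta\alpha}}\mathrm{Tr}(X^iB^{-1})=\sum_{a+b=i-1}\mathrm{Tr}(X^a E_{\beta\alpha} X^b B^{-1})$, I would assemble $\frac{\partial}{\partial A}$ acting on $\mathrm{Tr}(e^{-N\mathrm{Tr}V})$ and then apply the operator a second time to produce the second-order term. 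The key algebraic observation will be that each factor $(B^{-1})^{\mathrm{T}}\frac{\partial}{\partial A}$ is engineered precisely so that, after contracting indices via the trace, it reproduces one insertion of $B^{-1}M$ (the same object appearing in $\partial X/\partial y$) together with a factor that matches the $y^{1/2}$ bookkeeping.

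The main tool bridging the first-order $y$-derivative and the second-order $A$-derivative is \emph{integration by parts} in the Gaussian measure: for any polynomial observable one has $\int_{\mathcal{H}_N} dM\,\frac{\partial}{\partial M_{\alpha\beta}}\big(\cdots\,\e^{-N\mathrm{Tr}\frac{M^2}{2}}\big)=0$, which converts a factor of $M_{\alpha\beta}$ in the integrand into $\frac{1}{N}\frac{\partial}{\partial M_{\beta\alpha}}$ acting on the remaining factors. I would apply this to the explicit $M$ produced by $\partial/\partial y$, and observe that the derivative $\partial/\partial M$ hitting the potential again produces an insertion of $B^{-1}$ times a derivative of $X$, i.e. exactly the structure generated by the second $A$-derivative. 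Matching the numerical prefactors ($\tfrac12$, powers of $y$, and the single factor of $1/N$) is what pins down the coefficient $\frac{1}{2N}$ and the transpose pattern $(B^{-1})^{\mathrm{T}}$ in (\ref{gblp_heat}).

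The step I expect to be the main obstacle is the careful tracking of index placement and transposes: because $B$ need not be symmetric and $\Lambda_{\mathrm{P}},\Lambda_{\mathrm{L}}$ do not commute, one must be scrupulous about which side each $B^{-1}$ sits on and why the transpose $(B^{-1})^{\mathrm{T}}$ is the correct object in the differential operator rather than $B^{-1}$ itself. The derivative rule $\frac{\partial}{\partial Y_{\beta\alpha}}$ (as defined in the statement) contracts indices in a fixed order, and reconciling this with the index flow coming from integration by parts in $M$ is the delicate bookkeeping. I would organize this by writing every trace explicitly in components at the decisive moment, verifying that the noncommutativity of $A$ and $B$ causes no ambiguity because $A$ enters $X$ only additively, so that all $A$-derivatives commute with one another and produce symmetric insertions. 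Once the index contractions are confirmed to agree on both sides, the two expressions cancel and (\ref{gblp_heat}) follows.
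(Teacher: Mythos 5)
Your proposal is correct, and it reaches the identity by a genuinely different route than the paper. The paper first shifts $M=X-y^{-1/2}BA$ so that the interaction depends on $X$ alone, then applies the \emph{scaling} Ward identity $X_{\alpha\beta}\to(1+\epsilon)X_{\alpha\beta}$ (the $L_0^{\mathrm{Vir}}$ generator mentioned in the remark following the proposition); this produces a constraint involving $\langle\mathrm{Tr}X^2\rangle$, $\mathrm{Tr}A^{\mathrm{T}}\partial/\partial A$ and $\sum_i ig_i\,\partial/\partial g_i$, and the final step is the observation that the latter two combine into $2y\,\partial/\partial y$ because $y$ enters the shifted integral only through $y^{i/2}g_i$ and $y^{-1/2}BA$. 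You instead differentiate in $y$ directly and use the \emph{translation} Ward identity (Gaussian integration by parts in $M$), together with the chain-rule identity that on functions of $X=y^{1/2}B^{-1}M+A$ one has $\partial/\partial M = y^{1/2}(B^{-1})^{\mathrm{T}}\partial/\partial A$, which is exactly why the operator $(B^{-1})^{\mathrm{T}}\partial/\partial A$ appears. Your route closes: $\partial_y$ brings down $\tfrac{1}{2}y^{-1/2}\mathrm{Tr}(X^aB^{-1}MX^bB^{-1})$ insertions by Euler homogeneity, integration by parts on the explicit $M$ yields both the second-derivative-of-$W$ term and the $(\partial W)^2$ term, and these two pieces are precisely what the double $A$-derivative of $Z_N$ produces when it acts once or twice on the exponent — you correctly anticipate both contributions. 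What each approach buys: the paper's derivation slots naturally into the Virasoro-constraint framework and avoids splitting the double derivative into connected and disconnected pieces, while yours is more elementary and makes the heat-equation structure manifest without the variable shift or the Euler-operator identification. The only point to be scrupulous about, as you note, is the index bookkeeping in $\mathrm{Tr}X\,\partial/\partial Y=\sum_{\alpha,\beta}X_{\alpha\beta}\,\partial/\partial Y_{\beta\alpha}$ when matching the transposes; writing everything in components confirms the coefficient $\tfrac{1}{2N}$.
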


\begin{proof}
By a shift $M=X-y^{-1/2}BA$, the matrix integral (\ref{gblp_def}) can be rewritten as
\begin{align}
&
Z_N(y;\{g_i\};A;B) =
\nonumber\\
&
=\frac{1}{\mathrm{Vol}_N}\int_{\widetilde{\mathcal{H}}_N} dX\;
\exp\bigg[-N\mathrm{Tr}\bigg(\frac12(X-y^{-1/2}BA)^2
-\sum_{i\in {\IZ}}y^{i/2}g_i(B^{-1}X)^iB^{-1}\bigg)\bigg].
\label{gblp2_def}
\end{align}
Here $\widetilde{\mathcal{H}}_N$ is the space of shifted matrices $X=M+y^{-1/2}BA$ with $M \in \mathcal{H}_N$.
The invariance of this matrix integral under the infinitesimal scaling $X_{\alpha\beta}\to (1+\epsilon)X_{\alpha\beta}$ leads to a constraint equation
\begin{equation}
\bigg<N^2-N\mathrm{Tr}X^2+y^{-1/2}N\mathrm{Tr}BAX
+N\sum_{i\in{\IZ}}iy^{i/2}g_i\mathrm{Tr}(B^{-1}X)^iB^{-1}\bigg>=0,
\label{gblp_sv0}
\end{equation}
where the first term $N^2$ comes from the measure factor, as $dX \to (1+N^2\epsilon)dX$. Here we have defined the unnormalized average for an observable $\mathcal{O}(X)$ 
\begin{equation}
\left<\mathcal{O}(X)\right>=
\int_{\widetilde{\mathcal{H}}_N} dX\; \mathcal{O}(X)
\exp\bigg[-N\mathrm{Tr}\bigg(\frac12(X-y^{-1/2}BA)^2-\sum_{i\in {\IZ}}y^{i/2}g_i(B^{-1}X)^iB^{-1}\bigg)\bigg].
\nonumber
\end{equation}
Using
\begin{align}
\begin{split}
&
\frac{1}{N}y^{1/2}\sum_{\gamma=1}^N(B^{-1})^{\mathrm{T}}_{\beta\gamma}\frac{\partial}{\partial A_{\gamma\alpha}}Z_N(y;\{g_i\};A;B)
=\left<X_{\alpha\beta}-y^{-1/2}N\sum_{\gamma=1}^NB_{\alpha\gamma}A_{\gamma\beta}\right>,\\
&
\frac{1}{N}\frac{\partial}{\partial g_i}Z_N(y;\{g_i\};A;B)
=y^{i/2}\left<\mathrm{Tr}(B^{-1}X)^iB^{-1}\right>,
\nonumber
\end{split}
\end{align}
one finds that the constraint equation (\ref{gblp_sv0}) yields
\begin{equation}
\bigg[
-\frac{1}{N}y\mathrm{Tr}(B^{-1})^{\mathrm{T}}\frac{\partial}{\partial A}(B^{-1})^{\mathrm{T}}\frac{\partial}{\partial A}
-\mathrm{Tr}A^{\mathrm{T}}\frac{\partial}{\partial A}+
\sum_{i\in{\IZ}}ig_i\frac{\partial}{\partial g_i}\bigg]
Z_N(y;\{g_i\};A;B)=0.
\nonumber
\end{equation}
It follows from (\ref{gblp2_def}) that the last two derivatives in the expression above can be replaced by $2y\partial/\partial y$, so that the partial differential equation (\ref{gblp_heat}) is obtained.
\end{proof}

\begin{remark}
In the above proof of the constraint equation (\ref{gblp_sv0}) we considered the infinitesimal scaling $X_{\alpha\beta}\to (1+\epsilon)X_{\alpha\beta}$. More generally, matrix integral (\ref{gblp2_def}) is invariant under infinitesimal shifts
\begin{align}
X_{\alpha\beta}\ \longrightarrow\ X_{\alpha\beta}+\epsilon (X^{n+1})_{\alpha\beta},\ \ \ \
n=-1,0,1,\ldots.
\nonumber
\end{align}
It is known that for the matrix integral without external matrices $A$ and $B$ this symmetry yields the Virasoro symmetry, and in particular the scaling $X_{\alpha\beta}\to (1+\epsilon)X_{\alpha\beta}$ is related to the Virasoro generator $L^{\mathrm{Vir}}_0$ \cite{Fukuma:1990jw, Dijkgraaf:1990rs}.\footnote{
In \cite{MY,CY}, the Schwinger-Dyson approach to the enumeration of chord diagrams is also discussed.
}
\end{remark}

%%%%%%%%%%%%%%%%%%%%%%%%%%%%%%%%%%%%%%%%%
%%%%%%%%%%%%%%%%%%%%%%%%%%%%%%%%%%%%%%%%%
\subsection{\label{subsec:bp_h}The boundary point spectrum}
%%%%%%%%%%%%%%%%%%%%%%%%%%%%%%%%%%%%%%%%%
%%%%%%%%%%%%%%%%%%%%%%%%%%%%%%%%%%%%%%%%%

In Subsection \ref{subsec:bp} we showed that the matrix integral $Z_N^{\mathrm{P}}(y;\{s_i\};\{r_i\})$ in (\ref{partition_fn_pt})
\begin{align}
\begin{split}
&
Z_N^{\mathrm{P}}(y;\{s_i\};\{r_i\}) =
\\
&
=\frac{1}{\mathrm{Vol}_N}\int_{{\mathcal H}_N} dM\;\exp\bigg[
-N\mathrm{Tr}\bigg(\frac{M^2}{2}
-\sum_{i\ge 0}s_iy^{i/2}(M+y^{-1/2}\Lambda_{\mathrm{P}})^i\bigg)\bigg],
\nonumber
\end{split}
\end{align}
enumerates partial chord diagrams labeled by the boundary point spectrum. 
By the specialization 
\begin{align}
g_{i<0}=0,\ \ \ \ g_{i\ge 0}=s_i,\ \ \ \
A=\Lambda_{\mathrm{P}},\ \ \ \ B=I_N=\textrm{identity matrix},
\nonumber
\end{align}
of the matrix integral $Z_N(y;\{g_i\};A;B)$ in (\ref{gblp_def}) we see that
\begin{equation}
Z_N(y;s_{i<0}=0, \{s_i\}_{i\ge 0};\Lambda_{\mathrm{P}};I_N)=Z_N^{\mathrm{P}}(y;\{s_i\};\{r_i\}),
\end{equation}
where the reverse Miwa times $r_i$ are defined in (\ref{Miwa1}). From (\ref{gblp_heat}) we obtain the partial differential equation satisfied by $Z_N^{\mathrm{P}}(y;\{s_i\};\{r_i\})$.

\begin{corollary}\label{cor:point_heat_ori}
The matrix integral $Z_N^{\mathrm{P}}(y;\{s_i\};\{r_i\})$ obeys the partial differential equation
\begin{equation}
\bigg[\frac{\partial}{\partial y}
-\frac{1}{2N}\mathrm{Tr}\frac{\partial^2}{\partial \Lambda_{\mathrm{P}}^2}\bigg]Z_N^{\mathrm{P}}(y;\{s_i\};\{r_i\})=0.
\label{bp_heat}
\end{equation}
\end{corollary}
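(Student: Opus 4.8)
The plan is to derive the corollary as a one-line specialization of Proposition \ref{prop:master_heat}, with the only real work being the bookkeeping of matrix indices. The discussion just above the statement already identifies $Z_N^{\mathrm{P}}(y;\{s_i\};\{r_i\})$ with the master integral $Z_N(y;\{g_i\};A;B)$ of Definition \ref{def:master_mat} under the substitution $g_{i<0}=0$, $g_{i\ge 0}=s_i$, $A=\Lambda_{\mathrm{P}}$, $B=I_N$. Crucially, the differential operator appearing in the master equation (\ref{gblp_heat}) depends only on $y$, $A$ and $B$, and not at all on the couplings $g_i$; hence the vanishing of the negatively-indexed couplings is irrelevant, and only the choice $A=\Lambda_{\mathrm{P}}$, $B=I_N$ enters the argument.

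First I would invoke (\ref{gblp_heat}) for this choice of data. Since $B=I_N$ gives $(B^{-1})^{\mathrm{T}}=I_N$, the second-order term collapses. Writing it out with the convention $\mathrm{Tr}\,X\,\partial/\partial Y=\sum_{\alpha,\beta}X_{\alpha\beta}\,\partial/\partial Y_{\beta\alpha}$ gives
\[
\mathrm{Tr}(B^{-1})^{\mathrm{T}}\frac{\partial}{\partial A}(B^{-1})^{\mathrm{T}}\frac{\partial}{\partial A}=\sum_{\alpha,\beta,\gamma,\delta}(B^{-1})^{\mathrm{T}}_{\alpha\gamma}(B^{-1})^{\mathrm{T}}_{\delta\beta}\frac{\partial}{\partial A_{\gamma\delta}}\frac{\partial}{\partial A_{\beta\alpha}},
\]
and setting $(B^{-1})^{\mathrm{T}}_{\alpha\gamma}=\delta_{\alpha\gamma}$ reduces the right-hand side to $\sum_{\alpha,\beta}\partial^2/(\partial A_{\alpha\beta}\,\partial A_{\beta\alpha})=\mathrm{Tr}\,\partial^2/\partial A^2$. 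With $A=\Lambda_{\mathrm{P}}$ this is precisely $\mathrm{Tr}\,\partial^2/\partial\Lambda_{\mathrm{P}}^2$, so (\ref{gblp_heat}) becomes (\ref{bp_heat}).

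The only place where care is needed — and the single point at which the reduction could be misstated — is this contraction: one must confirm that the two identity matrices glue the two derivatives into the ordinary matrix Laplacian $\mathrm{Tr}\,\partial^2/\partial\Lambda_{\mathrm{P}}^2$ rather than a product of traces. Because the two copies of $\partial/\partial A$ occupy distinct factors in the operator product, the trace convention forces exactly the intended pairing of indices, and no further analytic input beyond Proposition \ref{prop:master_heat} is required.
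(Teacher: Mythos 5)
Your proposal is correct and follows exactly the route the paper takes: the corollary is obtained as an immediate specialization of Proposition \ref{prop:master_heat} with $A=\Lambda_{\mathrm{P}}$ and $B=I_N$, under which the operator $\mathrm{Tr}(B^{-1})^{\mathrm{T}}\frac{\partial}{\partial A}(B^{-1})^{\mathrm{T}}\frac{\partial}{\partial A}$ collapses to $\mathrm{Tr}\,\partial^2/\partial\Lambda_{\mathrm{P}}^2$. Your explicit check of the index contraction is a careful supplement to what the paper leaves implicit, and your observation that the operator is independent of the couplings $g_i$ is accurate.
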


This corollary implies the following theorem.

\begin{theorem}\label{thm:cut_and_join_pt_ori}
Let $L_0$ and $L_2$ be the differential operators\footnote{In \cite{Mor_Sha} the differential operators $L_0$ and $L_2$ were denoted by $W^{(3)}$.}
\begin{align}
\begin{split}
&
L_0=\frac{1}{2}\sum_{i\ge 2}\sum_{j=0}^{i-2}ir_jr_{i-j-2}\frac{\partial}{\partial r_{i}},
\\
&
L_2=\frac{1}{2}\sum_{i\ge 2}\sum_{j=1}^{i-1}j(i-j)r_{i-2}\frac{\partial^2}{\partial r_i\partial r_{i-j}}.
\label{L02_pt}
\end{split}
\end{align}
The matrix integral $Z_N^{\mathrm{P}}(y;\{s_i\};\{r_i\})$ obeys the cut-and-join equation
\begin{align}
\frac{\partial}{\partial y}Z_N^{\mathrm{P}}(y;\{s_i\};\{r_i\})={\mathcal L}Z_N^{\mathrm{P}}(y;\{s_i\};\{r_i\}),
\label{cut_and_join_pt_ori}
\end{align}
where
\begin{align}
{\mathcal L}=L_0+\frac{1}{N^2}L_2.
\nonumber
\end{align}
The formal solution of this cut-and-join equation, which gives the matrix integral $Z_N^{\mathrm{P}}(y;\{s_i\};\{r_i\})$, is iteratively determined from the initial condition at $y=0$,
\begin{align}
Z_N^{\mathrm{P}}(y;\{s_i\};\{r_i\})=\mathrm{e}^{y\mathcal{L}}
Z_N^{\mathrm{P}}(0;\{s_i\};\{r_i\})=
\mathrm{e}^{y\mathcal{L}}\mathrm{e}^{N^2\sum_{i\ge 0}s_ir_i}.
\label{initial_pt_ori}
\end{align}
\end{theorem}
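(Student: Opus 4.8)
The plan is to start from the matrix heat equation of Corollary \ref{cor:point_heat_ori}, that is, $\frac{\partial}{\partial y}Z_N^{\mathrm{P}}=\frac{1}{2N}\mathrm{Tr}\frac{\partial^2}{\partial\Lambda_{\mathrm{P}}^2}Z_N^{\mathrm{P}}$, and to rewrite the matrix Laplacian on the right-hand side as the scalar operator $\mathcal{L}$ in the reverse Miwa times $r_i=\frac1N\mathrm{Tr}\Lambda_{\mathrm{P}}^i$. The first observation I would record is that the matrix integral (\ref{partition_fn_pt}) is invariant under $\Lambda_{\mathrm{P}}\mapsto U\Lambda_{\mathrm{P}}U^{\dagger}$: absorbing $U$ by the change of variables $M\mapsto UMU^{\dagger}$ leaves both $dM$ and the Gaussian weight unchanged. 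Hence $Z_N^{\mathrm{P}}$ depends on $\Lambda_{\mathrm{P}}$ only through the invariants $\mathrm{Tr}\Lambda_{\mathrm{P}}^i=Nr_i$, so the computation reduces to a chain-rule identity for a formal series $F=F(\{r_i\})$.

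The core step is to carry out that chain rule. From $\frac{\partial r_i}{\partial(\Lambda_{\mathrm{P}})_{\beta\alpha}}=\frac{i}{N}(\Lambda_{\mathrm{P}}^{i-1})_{\alpha\beta}$ one gets $\frac{\partial F}{\partial(\Lambda_{\mathrm{P}})_{\beta\alpha}}=\sum_{i\ge1}\frac{i}{N}(\Lambda_{\mathrm{P}}^{i-1})_{\alpha\beta}\frac{\partial F}{\partial r_i}$. Applying $\sum_{\alpha,\beta}\frac{\partial}{\partial(\Lambda_{\mathrm{P}})_{\alpha\beta}}$ and using the Leibniz rule splits the result into two pieces, according to whether the second matrix derivative hits the monomial $(\Lambda_{\mathrm{P}}^{i-1})_{\alpha\beta}$ or the factor $\partial F/\partial r_i$. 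In the first piece $\frac{\partial}{\partial(\Lambda_{\mathrm{P}})_{\alpha\beta}}(\Lambda_{\mathrm{P}}^{i-1})_{\alpha\beta}=\sum_{j=0}^{i-2}(\Lambda_{\mathrm{P}}^{j})_{\alpha\alpha}(\Lambda_{\mathrm{P}}^{i-2-j})_{\beta\beta}$, so summing over $\alpha,\beta$ factorizes the trace as $\mathrm{Tr}\Lambda_{\mathrm{P}}^{j}\cdot\mathrm{Tr}\Lambda_{\mathrm{P}}^{i-2-j}=N^2 r_j r_{i-2-j}$; together with the prefactor $\frac{1}{2N}\cdot\frac{i}{N}$ this reproduces $L_0$ exactly, at order $N^0$.

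In the second piece the inner derivative gives $\sum_{j\ge1}\frac{j}{N}(\Lambda_{\mathrm{P}}^{j-1})_{\beta\alpha}\frac{\partial^2 F}{\partial r_i\partial r_j}$, and now the trace does not factorize: $\sum_{\alpha,\beta}(\Lambda_{\mathrm{P}}^{i-1})_{\alpha\beta}(\Lambda_{\mathrm{P}}^{j-1})_{\beta\alpha}=\mathrm{Tr}\Lambda_{\mathrm{P}}^{i+j-2}=Nr_{i+j-2}$, leaving $\frac{1}{2N^2}\sum_{i,j\ge1}ij\,r_{i+j-2}\frac{\partial^2 F}{\partial r_i\partial r_j}$, which after the reindexing $i+j\to i$ is exactly $\frac{1}{N^2}L_2$, at order $N^{-2}$. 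Collecting the two pieces yields
\[
\frac{1}{2N}\mathrm{Tr}\frac{\partial^2}{\partial\Lambda_{\mathrm{P}}^2}F(\{r_i\})=L_0F+\frac{1}{N^2}L_2F=\mathcal{L}F,
\]
which, substituted into Corollary \ref{cor:point_heat_ori}, is precisely the cut-and-join equation (\ref{cut_and_join_pt_ori}).

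Finally, since $\mathcal{L}$ is built only from the $r_i$ and $\partial/\partial r_i$ and is manifestly independent of $y$, the first-order linear evolution equation integrates formally to $Z_N^{\mathrm{P}}(y)=\mathrm{e}^{y\mathcal{L}}Z_N^{\mathrm{P}}(0)$. The initial value is read off from (\ref{partition_fn_pt}) at $y=0$, where the interaction degenerates to the $M$-independent term $\sum_{i\ge0}s_i\,\mathrm{Tr}\Lambda_{\mathrm{P}}^i$; this pulls out of the integral while the remaining Gaussian integral normalizes to $1$ by (\ref{volN_def}), giving $Z_N^{\mathrm{P}}(0;\{s_i\};\{r_i\})=\exp[N\sum_{i\ge0}s_i\mathrm{Tr}\Lambda_{\mathrm{P}}^i]=\exp[N^2\sum_{i\ge0}s_ir_i]$, as in (\ref{initial_pt_ori}). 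I expect the only delicate point to be the bookkeeping in the double chain rule -- isolating the two Leibniz terms and tracking the powers of $N$, so that the factorized (double-trace) contribution lands at order $N^0$ as $L_0$ and the connected (single-trace) contribution at order $N^{-2}$ as $L_2$.
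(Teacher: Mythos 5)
Your proposal is correct and follows essentially the same route as the paper: it combines the heat equation of Corollary \ref{cor:point_heat_ori} with exactly the chain-rule computation of Lemma \ref{lem:miwa_deriv} (the two Leibniz terms giving $L_0$ at order $N^0$ from the factorized double trace and $\tfrac{1}{N^2}L_2$ from the connected single trace), and then integrates the evolution equation with the $y=0$ initial value. The additional remark on unitary invariance, justifying that $Z_N^{\mathrm{P}}$ depends on $\Lambda_{\mathrm{P}}$ only through the $r_i$, is a point the paper leaves implicit but does not change the argument.
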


This theorem follows from the lemma below by rewriting the derivative  ${\mathrm Tr}\partial^2/\partial \Lambda_{\mathrm{P}}^2$ in the partial differential equation (\ref{bp_heat}).

\begin{lemma}\label{lem:miwa_deriv}
For a function $f(\{r_i\})$ of the reverse Miwa times $r_i$,
the derivative ${\mathrm Tr}\partial^2/\partial \Lambda_{\mathrm{P}}^2$ can be rewritten as
\begin{align}
\frac{1}{2N}\mathrm{Tr}\frac{\partial^2}{\partial \Lambda_{\mathrm{P}}^2}f(\{r_i\})
=\bigg(L_0+\frac{1}{N^2}L_2\bigg)f(\{r_i\}).
\label{miwa_deriv_bp}
\end{align}
\end{lemma}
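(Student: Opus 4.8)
The plan is to translate the matrix derivative $\frac{1}{2N}\mathrm{Tr}\,\partial^2/\partial\Lambda_{\mathrm{P}}^2$ into differential operators in the reverse Miwa times $r_i = \frac{1}{N}\mathrm{Tr}\,\Lambda_{\mathrm{P}}^i$ by a careful application of the chain rule. The key point is that any function $f(\{r_i\})$ depends on $\Lambda_{\mathrm{P}}$ only through the power-sum traces, so by the chain rule the first-order derivative reads
\begin{align}
\frac{\partial}{\partial (\Lambda_{\mathrm{P}})_{\beta\alpha}} f(\{r_i\})
= \sum_{i\ge 1} \frac{\partial r_i}{\partial (\Lambda_{\mathrm{P}})_{\beta\alpha}}\frac{\partial f}{\partial r_i}.
\nonumber
\end{align}
First I would compute the elementary derivative $\frac{\partial r_i}{\partial (\Lambda_{\mathrm{P}})_{\beta\alpha}} = \frac{i}{N}(\Lambda_{\mathrm{P}}^{i-1})_{\alpha\beta}$, which follows from $\frac{\partial}{\partial (\Lambda_{\mathrm{P}})_{\beta\alpha}}\mathrm{Tr}\,\Lambda_{\mathrm{P}}^i = i(\Lambda_{\mathrm{P}}^{i-1})_{\alpha\beta}$.

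Next I would apply a second derivative $\frac{\partial}{\partial(\Lambda_{\mathrm{P}})_{\alpha\beta}}$ and take the trace $\mathrm{Tr}\,\partial^2/\partial\Lambda_{\mathrm{P}}^2 = \sum_{\alpha,\beta}\frac{\partial}{\partial(\Lambda_{\mathrm{P}})_{\alpha\beta}}\frac{\partial}{\partial(\Lambda_{\mathrm{P}})_{\beta\alpha}}$, following the trace convention fixed in Proposition \ref{prop:master_heat}. The second derivative hits two distinct types of objects: the explicit matrix factor $(\Lambda_{\mathrm{P}}^{i-1})_{\alpha\beta}$ and the Miwa-time arguments inside $\frac{\partial f}{\partial r_i}$. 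Differentiating the explicit factor $(\Lambda_{\mathrm{P}}^{i-1})_{\alpha\beta}$ and summing over $\alpha,\beta$ produces a single trace $\mathrm{Tr}\,\Lambda_{\mathrm{P}}^{j}\,\Lambda_{\mathrm{P}}^{i-2-j}$ for each split of the power, which after using the Miwa substitution $\mathrm{Tr}\,\Lambda_{\mathrm{P}}^{a}=Nr_a$ collapses to the first-derivative term and reproduces $L_0$ with the weight $\frac{1}{2}\sum_{i\ge 2}\sum_{j=0}^{i-2}ir_jr_{i-j-2}\partial/\partial r_i$ at order $N^0$. Differentiating the Miwa arguments of $\frac{\partial f}{\partial r_i}$ produces the second-derivative piece, where two matrix factors $(\Lambda_{\mathrm{P}}^{i-1})_{\alpha\beta}$ and $(\Lambda_{\mathrm{P}}^{j-1})_{\beta\alpha}$ get contracted under the double trace into $\mathrm{Tr}\,\Lambda_{\mathrm{P}}^{i+j-2} = Nr_{i+j-2}$, yielding $L_2$ with an extra factor $1/N^2$ once the prefactors of $1/N$ from each $r_i$ are accounted for.

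The main obstacle — really the only place where care is required — is the careful bookkeeping of index contractions and powers of $N$: one must track that each $r_i$ carries a $\frac{1}{N}$, each matrix derivative of $r_i$ carries $\frac{i}{N}$, and that the overall prefactor $\frac{1}{2N}$ combines with these so that the ``diagonal'' term (second derivative of the explicit matrix factor) lands at order $N^0$ giving $L_0$, while the ``cross'' term (second derivative acting on the Miwa arguments) lands at order $N^{-2}$ giving $\frac{1}{N^2}L_2$. I would also need to verify that the index ranges of the summations match: the constraint $i\ge 2$ arises because the first-order factor $(\Lambda_{\mathrm{P}}^{i-1})$ must admit a further derivative, and the splitting parameter $j$ must range over $0\le j\le i-2$ in $L_0$ (the two endpoints corresponding to the cyclic position of the removed power) and $1\le j\le i-1$ in $L_2$. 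Matching these summation ranges and symmetry factors of $\frac{1}{2}$ against the definitions in (\ref{L02_pt}) completes the proof; once the two contributions are identified with $L_0$ and $\frac{1}{N^2}L_2$ the identity (\ref{miwa_deriv_bp}) follows immediately.
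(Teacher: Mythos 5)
Your proposal is correct and follows essentially the same route as the paper: compute $\partial r_i/\partial\Lambda_{\mathrm{P}\beta\alpha}=\tfrac{i}{N}(\Lambda_{\mathrm{P}}^{i-1})_{\alpha\beta}$, apply the chain rule, and split the result into the term where the second derivative hits the explicit matrix factor (giving $L_0$ at order $N^0$) and the cross term contracting two first derivatives against $\partial^2 f/\partial r_i\partial r_j$ (giving $N^{-2}L_2$), with the same $N$-power bookkeeping. One small wording slip: differentiating $(\Lambda_{\mathrm{P}}^{i-1})_{\alpha\beta}$ and tracing produces the \emph{product of two traces} $\mathrm{Tr}\,\Lambda_{\mathrm{P}}^{j}\cdot\mathrm{Tr}\,\Lambda_{\mathrm{P}}^{i-2-j}=N^2r_jr_{i-j-2}$ for each split, not a single trace $\mathrm{Tr}(\Lambda_{\mathrm{P}}^{j}\Lambda_{\mathrm{P}}^{i-2-j})$ --- your final weight $i\,r_jr_{i-j-2}$ is the correct one and presupposes the former.
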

\begin{proof}
Consider the derivative $\partial/\partial\Lambda_{\mathrm{P}\beta\alpha}$ of $r_i$,
\begin{align}
\frac{\partial r_i}{\partial \Lambda_{\mathrm{P}\beta\alpha}}
=\frac{i}{N}\Lambda^{i-1}_{\mathrm{P}\alpha\beta},\qquad
\mathrm{Tr}\frac{\partial^2 r_i}{\partial\Lambda_{\mathrm{P}}^2}
=iN\sum_{j=0}^{i-2}r_jr_{i-j-2}.
\nonumber
\end{align}
Then the derivative $\mathrm{Tr}\partial^2/\partial \Lambda_{\mathrm{P}}^2$ of the function $f(\{r_i\})$ is re-expressed as
\begin{align}
\frac{1}{2N}\mathrm{Tr}\frac{\partial^2}{\partial \Lambda_{\mathrm{P}}^2}f(\{r_i\})
&=
\frac{1}{2N}\sum_{i\ge 0}
\mathrm{Tr}\frac{\partial^2
r_i}{\partial\Lambda_{\mathrm{P}}^2}\frac{\partial
f(\{r_i\})}{\partial r_i}
+\frac{1}{2N}\sum_{i,j\ge 0}\sum_{\alpha,\beta=1}^N\frac{\partial r_i}{\partial
\Lambda_{\mathrm{P}\beta\alpha}}\frac{\partial r_j}{\partial \Lambda_{\mathrm{P}\alpha\beta}}\frac{\partial^2
f(\{r_i\})}{\partial r_i\partial r_j}
\nonumber\\
&
=\frac{1}{2}\sum_{i\ge 2}\sum_{j=0}^{i-2}ir_jr_{i-j-2}\frac{\partial
 f(\{r_i\})}{\partial r_i}
+\frac{1}{2N^2}\sum_{i,j\ge 1}ijr_{i+j-2}\frac{\partial^2f(\{r_i\})}
{\partial r_i\partial r_j}.
\nonumber
\end{align}
This coincides with the right hand side of (\ref{miwa_deriv_bp}).
\end{proof}

The cut-and-join equation for the rescaled matrix integral (\ref{rescale_pt}) yields
\begin{align}
\frac{\partial}{\partial y}Z_{t_0 N}^{\mathrm{P}}(t_0 y;\{t_0^{-1}s_i\};\{t_0^{-1}t_i\})={\mathcal L}Z_{t_0 N}^{\mathrm{P}}(t_0 y;\{t_0^{-1}s_i\};\{t_0^{-1}t_i\}),
\end{align}
where ${\mathcal L}$ is given by
\begin{align}
\begin{split}
&
{\mathcal L}=L_0+x^2L_2, \quad x=N^{-1},\\
&
L_0=\frac{1}{2}\sum_{i\ge 2}\sum_{j=0}^{i-2}it_jt_{i-j-2}\frac{\partial}{\partial t_{i}},
\quad
L_2=\frac{1}{2}\sum_{i\ge 2}\sum_{j=1}^{i-1}j(i-j)t_{i-2}\frac{\partial^2}{\partial t_i\partial t_{i-j}}.
\end{split}
\end{align}
This cut-and-join equation agrees with the partial differential equation in Theorem 1 of \cite {AAPZ}, where it was proven combinatorially by the recursion relation for the number of partial chord diagrams.\footnote{For the Grothendieck's dessin counting, a similar cut-and-join equation was found in \cite{Kaz_Zog}.}
This  
%agreement establishing the cut-and-join equation 
completes the proof of Theorem \ref{thm:point_ori}.

%%%%%%%%%%%%%%%%%%%%%%%%%%%%%%%%%%%%%%%%%
%%%%%%%%%%%%%%%%%%%%%%%%%%%%%%%%%%%%%%%%%
\subsection{\label{subsec:bl_h}The boundary length spectrum}
%%%%%%%%%%%%%%%%%%%%%%%%%%%%%%%%%%%%%%%%%
%%%%%%%%%%%%%%%%%%%%%%%%%%%%%%%%%%%%%%%%%

In Subsection \ref{subsec:bl} we showed that the matrix integral $Z_N^{\mathrm{L}}(y;\{s_i\};\{q_i\})$ in (\ref{Z_bdy_length2}) enumerates chord diagrams labeled by the boundary length spectrum.
By the specialization 
\begin{align}
g_{i<0}=0,\ \ \ \ g_{i\ge 0}=s_i,\ \ \ \
A=0,\ \ \ \ B=\Lambda_{\mathrm{L}},
\nonumber
\end{align}
of the matrix integral $Z_N(y;\{g_i\};A;B)$ in (\ref{gblp_def}) we see that
\begin{equation}
Z_N(y;s_{i<0}=0,\{s_i\}_{i\ge 0};0;\Lambda_{\mathrm{L}})=Z_N^{\mathrm{L}}(y;\{s_i\};\{q_i\}),
\end{equation}
where the Miwa times $q_i$ are defined in equation (\ref{Miwa2}).

Obviously, for $A=0$ the partial differential equation (\ref{gblp_heat}) does not hold. Instead we consider the matrix integral (\ref{Z_length}) obtained by the specialization $s_i=s$
\begin{align}
Z_N^{\mathrm{L}}(y;s;\{q_i\})
=\frac{1}{\mathrm{Vol}_N}\int_{{\mathcal H}_N} dM\;\exp\bigg[
-N\mathrm{Tr}\bigg(\frac{M^2}{2}
+\frac{s}{y^{1/2}M-\Lambda_{\mathrm{L}}}\bigg)\bigg].
\nonumber
\end{align}
The same matrix integral can be obtained by the specialization 
\begin{align}
g_{i\neq -1}=0,\ \ \ \
g_{-1}=-s,\ \ \ \
A=\Lambda_{\mathrm{L}},\ \ \ \ B=-I_N,
\nonumber
\end{align}
and thus
\begin{equation}
Z_N(y;s_i=-\delta_{i,-1};\Lambda_{\mathrm{L}};B=-I_N)=Z_N^{\mathrm{L}}(y;s;\{q_i\}).
\end{equation}
Then from (\ref{gblp_heat}) we obtain a partial differential equation for $Z_N^{\mathrm{L}}(y;s;\{q_i\})$.

\begin{corollary}\label{cor:length_heat_ori}
The matrix integral $Z_N^{\mathrm{L}}(y;s;\{q_i\})$ obeys the partial differential equation
\begin{equation}
\bigg[\frac{\partial}{\partial y}
-\frac{1}{2N}\mathrm{Tr}\frac{\partial^2}{\partial \Lambda_{\mathrm{L}}^2}\bigg]Z_N^{\mathrm{L}}(y;s;\{r_i\})=0.
\label{bl_heat}
\end{equation}
\end{corollary}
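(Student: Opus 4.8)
The plan is to obtain the corollary as an immediate specialization of Proposition \ref{prop:master_heat}, so that no new integration or combinatorial input is required. First I would invoke the identification recorded in the list following Definition \ref{def:master_mat} and restated just above the corollary, namely that with the data
\begin{align}
g_{i\neq -1}=0,\quad g_{-1}=-s,\quad A=\Lambda_{\mathrm{L}},\quad B=-I_N
\nonumber
\end{align}
the master integral satisfies $Z_N(y;s_i=-\delta_{i,-1};\Lambda_{\mathrm{L}};-I_N)=Z_N^{\mathrm{L}}(y;s;\{q_i\})$. The single algebraic fact behind this identification is the resolvent rewriting $(1-y^{1/2}\Lambda_{\mathrm{L}}^{-1}M)^{-1}\Lambda_{\mathrm{L}}^{-1}=(\Lambda_{\mathrm{L}}-y^{1/2}M)^{-1}$, which converts the geometric resummation of the $s_i=s$ potential in (\ref{Z_length}) into the single rational term $-g_{-1}(y^{1/2}B^{-1}M+A)^{-1}B^{-1}$ attached to the data above. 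Since this has already been established in the text, I would take the identification as given.

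With the specialization in hand, the remaining step is to evaluate the second-order operator in the master equation (\ref{gblp_heat}) at $B=-I_N$. Here $B^{-1}=-I_N$ and hence $(B^{-1})^{\mathrm{T}}=-I_N$, so I expect the operator to collapse as
\begin{align}
\frac{1}{2N}\mathrm{Tr}(B^{-1})^{\mathrm{T}}\frac{\partial}{\partial A}(B^{-1})^{\mathrm{T}}\frac{\partial}{\partial A}
=\frac{1}{2N}\mathrm{Tr}(-I_N)\frac{\partial}{\partial\Lambda_{\mathrm{L}}}(-I_N)\frac{\partial}{\partial\Lambda_{\mathrm{L}}}
=\frac{1}{2N}\mathrm{Tr}\frac{\partial^2}{\partial\Lambda_{\mathrm{L}}^2},
\nonumber
\end{align}
the two factors of $-I_N$ cancelling after the substitution $A=\Lambda_{\mathrm{L}}$. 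Feeding this back into (\ref{gblp_heat}) then reproduces (\ref{bl_heat}) verbatim, which completes the argument.

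There is essentially no analytic obstacle here; the only point I would flag as requiring care is why one must route through the nonstandard data $A=\Lambda_{\mathrm{L}}$, $B=-I_N$ rather than the more natural $A=0$, $B=\Lambda_{\mathrm{L}}$ with $g_{i\ge 0}=s_i$, which also reproduces $Z_N^{\mathrm{L}}$. The proof of Proposition \ref{prop:master_heat} proceeds by differentiating the integrand with respect to the external matrix $A$, so it requires $A$ to be a genuine matrix variable; at $A=0$ the identity (\ref{gblp_heat}) degenerates, exactly as noted in the text. The specialization $s_i=s$ is precisely what allows the same integral to be presented with invertible, nonvanishing data for which Proposition \ref{prop:master_heat} applies directly. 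Thus I expect the only real content to lie in arranging this alternative presentation, and the hardest part to be the routine sign bookkeeping in the resolvent identity; the derivation of the partial differential equation itself is then a one-line specialization.
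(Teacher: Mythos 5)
Your proposal is correct and follows exactly the paper's route: the corollary is obtained by specializing Proposition \ref{prop:master_heat} to $g_{i\neq-1}=0$, $g_{-1}=-s$, $A=\Lambda_{\mathrm{L}}$, $B=-I_N$ (the resolvent presentation of the $s_i=s$ model), whereupon the two factors of $(B^{-1})^{\mathrm{T}}=-I_N$ cancel and (\ref{gblp_heat}) collapses to (\ref{bl_heat}). Your closing remark about why the naive data $A=0$, $B=\Lambda_{\mathrm{L}}$ cannot be used is precisely the point the paper makes just before introducing the specialization.
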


This corollary implies the following theorem.

\begin{theorem}\label{thm:cut_join_length}
Let $K_0$ and $K_2$ be the differential operators
\begin{align}
\begin{split}
&
K_0=\frac{1}{2}\sum_{i\ge 3}\sum_{j=1}^{i-1}(i-2)q_jq_{i-j}
\frac{\partial}{\partial q_{i-2}}, 
\\
&
K_2=\frac{1}{2}\sum_{i\ge 2}\sum_{j=1}^{i-1}j(i-j)q_{i+2}\frac{\partial^2}{\partial q_i\partial q_{i-j}}.
\label{K02}
\end{split}
\end{align}
The matrix integral $Z_N^{\mathrm{L}}(y;s;\{q_i\})$ obeys the cut-and-join equation
\begin{align}
\frac{\partial}{\partial y} Z_N^{\mathrm{L}}(y;s;\{q_i\})={\mathcal K}Z_N^{\mathrm{L}}(y;s;\{q_i\}),
\label{cut_join_length}
\end{align}
where
\begin{align}
{\mathcal K}=K_0+\frac{1}{N^2}K_2.
\nonumber
\end{align}
The formal solution of this cut-and-join equation, which gives the matrix integral $Z_N^{\mathrm{L}}(y;s;\{q_i\})$, is iteratively determined from the initial condition  at $y=0$,
\begin{align}
Z_N^{\mathrm{L}}(y;s;\{q_i\})=\mathrm{e}^{y\mathcal{K}}Z_N^{\mathrm{L}}(y=0;s;\{q_i\})
=\mathrm{e}^{y\mathcal{K}}\mathrm{e}^{N^2sq_1}.
\label{initial_length0}
\end{align}
\end{theorem}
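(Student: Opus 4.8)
The plan is to mirror the derivation of the boundary-point cut-and-join equation in Theorem~\ref{thm:cut_and_join_pt_ori}. Starting from the heat-type equation of Corollary~\ref{cor:length_heat_ori}, I would convert the matrix Laplacian $\frac{1}{2N}\mathrm{Tr}\,\partial^2/\partial\Lambda_{\mathrm{L}}^2$ into a second-order differential operator in the Miwa times $q_i=\frac{1}{N}\mathrm{Tr}\,\Lambda_{\mathrm{L}}^{-i}$. Concretely, the first step is to establish the length-spectrum analogue of Lemma~\ref{lem:miwa_deriv}, namely that for any function $f(\{q_i\})$ one has
\begin{align}
\frac{1}{2N}\mathrm{Tr}\,\frac{\partial^2}{\partial\Lambda_{\mathrm{L}}^2}f(\{q_i\})=\Bigl(K_0+\frac{1}{N^2}K_2\Bigr)f(\{q_i\}),
\nonumber
\end{align}
with $K_0,K_2$ as in \eqref{K02}. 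Granting this, the cut-and-join equation \eqref{cut_join_length} follows at once.

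The one genuinely new computational ingredient is the derivative of a single Miwa time. Because $q_i$ involves a \emph{negative} power of $\Lambda_{\mathrm{L}}$, differentiation of the inverse produces
\begin{align}
\frac{\partial q_i}{\partial\Lambda_{\mathrm{L}\beta\alpha}}=-\frac{i}{N}(\Lambda_{\mathrm{L}}^{-i-1})_{\alpha\beta},
\nonumber
\end{align}
in contrast to $\partial r_i/\partial\Lambda_{\mathrm{P}\beta\alpha}=\frac{i}{N}(\Lambda_{\mathrm{P}}^{i-1})_{\alpha\beta}$ in the point case; the sign and the shift of the exponent are the only structural differences. Applying the chain rule to $\mathrm{Tr}\,\partial^2/\partial\Lambda_{\mathrm{L}}^2$ then splits the answer into two pieces. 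The single-trace piece comes from $\mathrm{Tr}(\partial^2 q_i/\partial\Lambda_{\mathrm{L}}^2)=iN\sum_{k=0}^{i}q_{k+1}q_{i+1-k}$; setting $j=k+1$ and relabelling the outer index by $i+2$ reproduces $K_0$. The double-trace piece comes from $\sum_{\alpha,\beta}\frac{\partial q_i}{\partial\Lambda_{\mathrm{L}\beta\alpha}}\frac{\partial q_j}{\partial\Lambda_{\mathrm{L}\alpha\beta}}=\frac{ij}{N}q_{i+j+2}$, where the two inverse powers combine into the single trace $\mathrm{Tr}\,\Lambda_{\mathrm{L}}^{-(i+j+2)}=Nq_{i+j+2}$; relabelling by the merged index $i+j$ reproduces $\frac{1}{N^2}K_2$.

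For the formal solution I would treat \eqref{cut_join_length} as a first-order linear evolution equation in $y$ with the $y$-independent generator $\mathcal{K}=K_0+\frac{1}{N^2}K_2$, so that $Z_N^{\mathrm{L}}(y;s;\{q_i\})=e^{y\mathcal{K}}Z_N^{\mathrm{L}}(0;s;\{q_i\})$. The initial value is read off directly from the matrix integral \eqref{Z_length}: at $y=0$ the interaction collapses to $s\Lambda_{\mathrm{L}}^{-1}$, which is independent of $M$, so it factors out of the Gaussian integral, the remaining integral normalizes to unity, and one is left with $e^{Ns\,\mathrm{Tr}\,\Lambda_{\mathrm{L}}^{-1}}=e^{N^2 sq_1}$.

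The main obstacle is the bookkeeping in the two trace computations---correctly differentiating the inverse power $\partial(\Lambda_{\mathrm{L}}^{-i-1})/\partial\Lambda_{\mathrm{L}}$, tracking the resulting signs, and then carrying out the index shift so that the double sums align with $K_0$ and $K_2$. The characteristic feature distinguishing this case from the point case is the $+2$ shift on the subscript of $q$ (coming from $\Lambda_{\mathrm{L}}^{-(i+j+2)}$), as opposed to the $-2$ shift on the subscript of $r$ (coming from $\Lambda_{\mathrm{P}}^{i+j-2}$) in the proof of Lemma~\ref{lem:miwa_deriv}.
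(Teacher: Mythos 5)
Your proposal is correct and follows essentially the same route as the paper: the heat-type equation of Corollary \ref{cor:length_heat_ori} is converted into the operators $K_0$ and $K_2$ via exactly the Miwa-time chain-rule computation of the paper's Lemma \ref{lem:heat_Miwa_length}, with the same derivative $\partial q_i/\partial\Lambda_{\mathrm{L}}=-\tfrac{i}{N}\Lambda_{\mathrm{L}}^{-i-1}$, the same single-trace identity $\mathrm{Tr}(\partial^2q_i/\partial\Lambda_{\mathrm{L}}^2)=iN\sum_{j=1}^{i+1}q_jq_{i-j+2}$, and the same double-trace term $\tfrac{ij}{N}q_{i+j+2}$, followed by the same exponentiation and the same evaluation of the initial condition $e^{N^2sq_1}$. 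No gaps.
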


The cut-and-join equation (\ref{cut_join_length}) was combinatorially proven in Theorem 2 of \cite{AAPZ} for the generating function $Z^{\mathrm{L}}(x,y;s;\{q_i\})$ in (\ref{free_length_ori}), and thus Theorem \ref{matrix_length_ori} for $s_i=s$ is reproved.

The claim of Theorem \ref{thm:cut_join_length} is proven by rewriting the derivative $\mathrm{Tr}\partial^2/\partial \Lambda_{\mathrm{L}}^2$ in the partial differential equation (\ref{bl_heat}) using the following lemma.

\begin{lemma}\label{lem:heat_Miwa_length}
For a function $g(\{q_i\})$ of the Miwa times $\{q_i\}$, 
the derivative $\mathrm{Tr}\partial^2/\partial \Lambda_{\mathrm{L}}^2$ can be rewritten as follows
\begin{align}
\frac{1}{2N}\mathrm{Tr}\frac{\partial^2}{\partial \Lambda_{\mathrm{L}}^2}g(\{q_i\})
=\left(K_0+\frac{1}{N^2}K_2\right)g(\{q_i\}).
\label{miwa_deriv_bl}
\end{align}
\end{lemma}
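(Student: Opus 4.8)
The plan is to follow verbatim the strategy of the point-spectrum computation in Lemma \ref{lem:miwa_deriv}, replacing the reverse Miwa times $r_i=\frac1N\mathrm{Tr}\Lambda_{\mathrm{P}}^i$ by the Miwa times $q_i=\frac1N\mathrm{Tr}\Lambda_{\mathrm{L}}^{-i}$ and carefully tracking the extra signs and index shifts produced by the inverse powers. First I would apply the chain rule to $g(\{q_i\})$ and split $\frac{1}{2N}\mathrm{Tr}\,\partial^2/\partial\Lambda_{\mathrm{L}}^2$ into a first-order piece, weighted by $\mathrm{Tr}\,\partial^2 q_i/\partial\Lambda_{\mathrm{L}}^2$ and acting through a single $\partial/\partial q_i$, and a second-order piece, weighted by the contraction $\sum_{\alpha,\beta}(\partial q_i/\partial\Lambda_{\mathrm{L}\beta\alpha})(\partial q_j/\partial\Lambda_{\mathrm{L}\alpha\beta})$ and acting through $\partial^2/\partial q_i\partial q_j$. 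This is exactly the decomposition used in the point-spectrum case, so the whole argument reduces to evaluating these two weights in terms of the $q_i$.

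The only genuinely new input is the differentiation of negative matrix powers. Using the variational identity $\delta\Lambda_{\mathrm{L}}^{-1}=-\Lambda_{\mathrm{L}}^{-1}(\delta\Lambda_{\mathrm{L}})\Lambda_{\mathrm{L}}^{-1}$ I would obtain $\partial q_i/\partial\Lambda_{\mathrm{L}\beta\alpha}=-\frac{i}{N}(\Lambda_{\mathrm{L}}^{-i-1})_{\alpha\beta}$, the inverse-power analogue of $\partial r_i/\partial\Lambda_{\mathrm{P}\beta\alpha}=\frac{i}{N}\Lambda_{\mathrm{P}\alpha\beta}^{i-1}$. Differentiating once more produces a telescoping sum $\frac{\partial}{\partial\Lambda_{\mathrm{L}\alpha\beta}}(\Lambda_{\mathrm{L}}^{-i-1})_{\alpha\beta}=-\sum_{p=0}^{i}(\Lambda_{\mathrm{L}}^{-p-1})_{\alpha\alpha}(\Lambda_{\mathrm{L}}^{-(i+1-p)})_{\beta\beta}$; summing over $\alpha,\beta$ factorizes each summand into a product of two traces, i.e. into $N^2q_{p+1}q_{i+1-p}$, which yields $\mathrm{Tr}\,\partial^2 q_i/\partial\Lambda_{\mathrm{L}}^2=iN\sum_{j=1}^{i+1}q_jq_{i+2-j}$. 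The contraction in the second-order piece is easier: the two first derivatives combine into $\frac{ij}{N^2}\mathrm{Tr}\Lambda_{\mathrm{L}}^{-(i+j+2)}=\frac{ij}{N}q_{i+j+2}$, so that piece reads $\frac{1}{2N^2}\sum_{i,j\ge1}ij\,q_{i+j+2}\,\partial^2/\partial q_i\partial q_j$.

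Substituting both results, the first-order piece becomes $\frac12\sum_{i\ge1}i\big(\sum_{j=1}^{i+1}q_jq_{i+2-j}\big)\partial/\partial q_i$, and the shift $i\mapsto i-2$ identifies it immediately with $K_0$. For the second-order piece I would collect terms by the value $m=i+j+2$ of the coefficient index, turning $\frac{1}{2N^2}\sum_{i,j}ij\,q_{i+j+2}\,\partial^2/\partial q_i\partial q_j$ into $\frac{1}{N^2}K_2$. I expect the main obstacle to be precisely this bookkeeping: getting the telescoping second-derivative sum right with its single overall sign and its correct trace factorization, and then reindexing so that the manifestly symmetric double sum over $(i,j)$ matches the asymmetric-looking ranges $\sum_{i\ge3}\sum_{j=1}^{i-1}$ and $\sum_{i\ge2}\sum_{j=1}^{i-1}$ in the stated forms of $K_0$ and $K_2$. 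Everything else is a direct transcription of the already-established point-spectrum argument.
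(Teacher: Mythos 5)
Your proposal is correct and follows essentially the same route as the paper's proof: a chain-rule decomposition into first- and second-order pieces, the computation $\partial q_i/\partial\Lambda_{\mathrm{L}\beta\alpha}=-\tfrac{i}{N}(\Lambda_{\mathrm{L}}^{-i-1})_{\alpha\beta}$ and $\mathrm{Tr}\,\partial^2 q_i/\partial\Lambda_{\mathrm{L}}^2=iN\sum_{j=1}^{i+1}q_jq_{i-j+2}$ via the variation of the inverse, and the final reindexing to match $K_0$ and $K_2$. The extra care you flag about the telescoping sum and the reindexing of the symmetric double sum is exactly where the (rather terse) published proof does the same bookkeeping.
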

\begin{proof}
By acting $\partial/\partial \Lambda_{\mathrm{L}}$ on the Miwa time $q_i$ one obtains
\begin{align}
\frac{\partial q_i}{\partial \Lambda_{\mathrm{L}\alpha\beta}}=-\frac{i}{N}\Lambda^{-i-1}_{\mathrm{L}\beta\alpha},
\qquad
\mathrm{Tr}\frac{\partial^2 q_i}{\partial\Lambda_{\mathrm{L}}^2}
=iN\sum_{j=1}^{i+1}q_jq_{i-j+2}.
\nonumber
\end{align}
Adopting this relation via the chain rule applied to the $\Lambda_{\mathrm{L}}$ derivatives, one finds that 
\begin{align}
\frac{1}{2N}\mathrm{Tr}\frac{\partial^2g(\{q_i\})}{\partial \Lambda_{\mathrm{L}}^2}
&=
\frac{1}{2N}\sum_{i\ge 0}\mathrm{Tr}\frac{\partial^2 q_i}{\partial \Lambda_{\mathrm{L}}^2}
\frac{\partial g(\{q_i\})}{\partial q_i}
+\frac{1}{2N}\sum_{i,j\ge 0}\sum_{\alpha,\beta=1}^N\frac{\partial q_i}{\partial\Lambda_{\mathrm{L}\alpha\beta}}
\frac{\partial
q_j}{\partial\Lambda_{\mathrm{L}\beta\alpha}}\frac{\partial^2g(\{q_i\})}{\partial q_i\partial q_j}
\nonumber\\
&=
\frac{1}{2}\sum_{i\ge 1}iq_jq_{i-j+2}\frac{\partial g(\{q_i\})}{\partial q_i}
+\frac{1}{2N^2}\sum_{i,j\ge 1}ijq_{i+j+2}\frac{\partial^2g(\{q_i\})}{\partial q_i\partial q_j}.
\nonumber
\end{align}
This coincides with the right hand side of (\ref{miwa_deriv_bl}).
\end{proof}

%%%%%%%%%%%%%%%%%%%%%%%%%%%%%%%%%
%%%%%%%%%%%%%%%%%%%%%%%%%%%%%%%%%
\subsection{\label{subsec:bpl_h}The boundary length and point spectrum}
%%%%%%%%%%%%%%%%%%%%%%%%%%%%%%%%%
%%%%%%%%%%%%%%%%%%%%%%%%%%%%%%%%%

In Subsection \ref{subsec:bpl} we showed that the matrix integral 
$\mathcal{Z}_N(y;\{s_i\};\{u_{\ii}\})$ in (\ref{Z_blp})
\begin{align}
\begin{split}
&
\mathcal{Z}_N(y;\{s_i\};\{u_{\ii}\}) =
\\
&
=\frac{1}{{\mathrm{Vol}}_N}\int_{\mathcal{H}_N} dM\;\exp\bigg[
-N{\mathrm{Tr}}\bigg(\frac{M^2}{2}-\sum_{i\ge 0}s_i(y^{1/2}\Lambda_{\mathrm{L}}^{-1}M+\Lambda_{\mathrm{P}})^i\Lambda_{\mathrm{L}}^{-1}\bigg)\bigg]
\nonumber
\end{split}
\end{align}
enumerates partial chord diagrams labeled by the boundary length and point spectrum. By the specialization 
\begin{align}
g_{i<0}=0,\ \ \ \ g_{i\ge 0}=s_i,\ \ \ \
A=\Lambda_{\mathrm{P}},\ \ \ \ B=\Lambda_{\mathrm{L}},
\nonumber
\end{align}
of the matrix integral $Z_N(y;\{g_i\};A;B)$ in (\ref{gblp_def}) we see that
\begin{equation}
Z_N(y;s_{i<0}=0,\{s_i\}_{i\ge 0};\Lambda_{\mathrm{P}};\Lambda_{\mathrm{L}})=\mathcal{Z}_N(y;\{s_i\};\{u_{\ii}\}),
\end{equation}
where the generalized Miwa times $u_{(i_1,\ldots,i_K)}$ are defined in (\ref{blp_time})  
\begin{align}
u_{(i_1,\ldots,i_K)}=\frac{1}{N}{\mathrm{Tr}}\left(
\Lambda_{\mathrm{P}}^{i_1}\Lambda_{\mathrm{L}}^{-1}\Lambda_{\mathrm{P}}^{i_2}\Lambda_{\mathrm{L}}^{-1}\cdots\Lambda_{\mathrm{P}}^{i_K}\Lambda_{\mathrm{L}}^{-1}
\right).
\nonumber
\end{align}
From (\ref{gblp_heat}) we obtain a partial differential equation for $\mathcal{Z}_N(y;\{s_i\};\{u_{\ii}\})$.

\begin{corollary}\label{cor:blp_heat_ori}
The matrix integral $\mathcal{Z}_N(y;\{s_i\};\{u_{\ii}\})$ obeys the partial differential equation
\begin{equation}
\bigg[\frac{\partial}{\partial y}
-\frac{1}{2N}\mathrm{Tr}(\Lambda_{\mathrm{L}}^{-1})^{\mathrm{T}}\frac{\partial}{\partial \Lambda_{\mathrm{P}}}
(\Lambda_{\mathrm{L}}^{-1})^{\mathrm{T}}\frac{\partial}{\partial \Lambda_{\mathrm{P}}}\bigg]\mathcal{Z}_N(y;\{s_i\};\{u_{\ii}\})=0.
\label{blp_heat}
\end{equation}
\end{corollary}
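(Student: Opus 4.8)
The plan is to obtain this corollary as an immediate specialization of the master partial differential equation (\ref{gblp_heat}) established in Proposition \ref{prop:master_heat}, in exactly the same manner as the boundary point spectrum was treated in Corollary \ref{cor:point_heat_ori}. The only input required is the identification, already recorded just above the statement, that under the choice $g_{i<0}=0$, $g_{i\ge 0}=s_i$, $A=\Lambda_{\mathrm{P}}$, and $B=\Lambda_{\mathrm{L}}$, the general integral $Z_N(y;\{g_i\};A;B)$ of (\ref{gblp_def}) coincides with $\mathcal{Z}_N(y;\{s_i\};\{u_{\ii}\})$, with the generalized Miwa times $u_{(i_1,\ldots,i_K)}$ of (\ref{blp_time}) serving as the coordinates built from the external matrices.

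First I would check that this specialization meets the hypotheses of Proposition \ref{prop:master_heat}, namely that $A$ and $B$ be invertible of rank $N$. Here $B=\Lambda_{\mathrm{L}}$ is invertible by the standing assumption under which $\Lambda_{\mathrm{L}}$ was introduced, and $A=\Lambda_{\mathrm{P}}$ is the same admissible external matrix already used in Corollary \ref{cor:point_heat_ori}. In particular, unlike the length spectrum situation of Corollary \ref{cor:length_heat_ori}, where the naive assignment $A=0$ is not invertible and forced a detour through the reparametrized integral (\ref{Z_length}), here no degeneration arises and the master equation applies directly.

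It then remains only to substitute the specialized matrices into (\ref{gblp_heat}). Replacing $B^{-1}$ by $\Lambda_{\mathrm{L}}^{-1}$ and $\partial/\partial A$ by $\partial/\partial\Lambda_{\mathrm{P}}$ turns the second-order operator $\mathrm{Tr}(B^{-1})^{\mathrm{T}}\frac{\partial}{\partial A}(B^{-1})^{\mathrm{T}}\frac{\partial}{\partial A}$ into precisely $\mathrm{Tr}(\Lambda_{\mathrm{L}}^{-1})^{\mathrm{T}}\frac{\partial}{\partial\Lambda_{\mathrm{P}}}(\Lambda_{\mathrm{L}}^{-1})^{\mathrm{T}}\frac{\partial}{\partial\Lambda_{\mathrm{P}}}$, so that (\ref{gblp_heat}) becomes (\ref{blp_heat}). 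There is no genuine obstacle to overcome: all of the analytic content — the shift of the integration variable, the scaling Ward identity, and the conversion of the two Euler-type derivative terms into $2y\,\partial/\partial y$ — has already been discharged in the proof of Proposition \ref{prop:master_heat}, and this corollary is purely a matter of reading off the correct external matrices. The one point meriting a moment's attention is simply verifying that the transpose structure $(\Lambda_{\mathrm{L}}^{-1})^{\mathrm{T}}$ is transcribed unchanged under the substitution, a subtlety worth stating explicitly because $\Lambda_{\mathrm{P}}$ and $\Lambda_{\mathrm{L}}$ are taken not to commute.
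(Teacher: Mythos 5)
Your proposal is correct and follows exactly the paper's own route: the corollary is obtained purely by specializing Proposition \ref{prop:master_heat} to $g_{i<0}=0$, $g_{i\ge 0}=s_i$, $A=\Lambda_{\mathrm{P}}$, $B=\Lambda_{\mathrm{L}}$, using the identification $Z_N(y;s_{i<0}=0,\{s_i\}_{i\ge 0};\Lambda_{\mathrm{P}};\Lambda_{\mathrm{L}})=\mathcal{Z}_N(y;\{s_i\};\{u_{\ii}\})$, with all the analytic work already contained in the proof of the master equation (\ref{gblp_heat}). Your remarks on invertibility and on why this case avoids the degeneration that occurs for the length spectrum ($A=0$) are accurate and consistent with the paper's treatment.
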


This corollary implies the following main theorem of this section.

\begin{theorem}\label{thm:cut_join_blp}
Let $M_0$ and $M_2$ be the following differential operators with respect to parameters $u_{\ii}$ 
\begin{align}
\begin{split}
M_0&=\frac{1}{2}\sum_{K\ge 1}\sum_{\{i_1,\ldots,i_K\}}\sum_{1\le I\ne M\le
 K}\sum_{\ell=0}^{i_I-1}\sum_{m=0}^{i_M-1}
\\
&\ \ \ \
u_{(i_I-\ell-1,i_{I+1},\ldots,i_{M-1},m)}
u_{(i_M-m-1,i_{M+1},\ldots,i_{I-1},\ell)}\frac{\partial}{\partial
u_{(i_1,\ldots,i_K)}}
\\
&\ \ \ \
+\sum_{K\ge 1}\sum_{\{i_1,\ldots,i_K\}}
\sum_{I=0}^K\sum_{\ell+m\le i_I-2}
u_{(\ell,m,i_{I+1},\ldots,i_{I-1})}u_{(i_I-\ell-m-2)}
\frac{\partial}{\partial
u_{(i_1,\ldots,i_K)}},
\\
M_2&=\frac{1}{2}\sum_{K,L\ge 0}\sum_{\{i_1,\ldots,i_K\}}
\sum_{\{j_1,\ldots,j_L\}}
\sum_{I=0}^K\sum_{J=0}^{L}\sum_{\ell=0}^{i_I-1}\sum_{m=0}^{j_J-1}
\\
&\ \ \ \
u_{(i_I-\ell-1,i_{I+1},\ldots,i_{I-1},\ell,j_J-m-1,j_{J+1},\ldots,j_{J-1},m)}
\frac{\partial^2}{\partial u_{(i_1,\ldots,i_K)}\partial u_{(j_1,\ldots,j_{L})}},
\label{M02}
\end{split}
\end{align}
where labels $I,M$'s are defined modulo $K$, and the label $J$ is defined modulo $L$.
The matrix integral ${\mathcal Z}_N(y;\{s_i\};\{u_{{\ii}}\})$ obeys the cut-and-join equation
\begin{align}
\frac{\partial}{\partial y} {\mathcal Z}_N(y;\{s_i\};\{u_{{\ii}}\})
={\mathcal M}{\mathcal Z}_N(y;\{s_i\};\{u_{{\ii}}\}),
\label{cut_join_bpl_ori}
\end{align}
where
\begin{align}
{\mathcal M}=M_0+\frac{1}{N^2}M_2.
\nonumber
\end{align}
The formal solution of this cut-and-join equation, which gives the matrix integral ${\mathcal Z}_N(y;\{s_i\};\{u_{{\ii}}\})$, is iteratively determined from the initial condition at $y=0$,
\begin{align}
{\mathcal Z}_N(y;\{s_i\};\{u_{{\ii}}\})=\mathrm{e}^{y\mathcal{M}}
{\mathcal Z}_N(y=0;\{s_i\};\{u_{{\ii}}\})
=\mathrm{e}^{y\mathcal{M}}\mathrm{e}^{N^2\sum_{i\ge 0}s_iu_{(i)}}.
\label{init_blp}
\end{align}
\end{theorem}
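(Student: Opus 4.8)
The plan is to deduce the cut-and-join equation (\ref{cut_join_bpl_ori}) from the heat-type equation (\ref{blp_heat}) of Corollary \ref{cor:blp_heat_ori}, in exact parallel with the way Theorem \ref{thm:cut_and_join_pt_ori} and Theorem \ref{thm:cut_join_length} were obtained from their corollaries via the change-of-variable Lemmas \ref{lem:miwa_deriv} and \ref{lem:heat_Miwa_length}. Concretely, it suffices to establish the operator identity
\[
\frac{1}{2N}\mathrm{Tr}(\Lambda_{\mathrm{L}}^{-1})^{\mathrm{T}}\frac{\partial}{\partial \Lambda_{\mathrm{P}}}(\Lambda_{\mathrm{L}}^{-1})^{\mathrm{T}}\frac{\partial}{\partial \Lambda_{\mathrm{P}}}\,f(\{u_{\ii}\})=\Big(M_0+\frac{1}{N^2}M_2\Big)f(\{u_{\ii}\})
\]
for every function $f$ of the generalized Miwa times (\ref{blp_time}); applying it to $f=\mathcal{Z}_N$ converts (\ref{blp_heat}) into (\ref{cut_join_bpl_ori}). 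The essential difference from the two earlier cases is that the $u_{\ii}$ are now traces of \emph{non-commutative cyclic words} in $\Lambda_{\mathrm{P}}$ and $\Lambda_{\mathrm{L}}^{-1}$, so the chain rule must track \emph{positions} inside each word rather than merely total powers.

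First I would compute the single derivative. Using $\partial(\Lambda_{\mathrm{P}}^{i})_{ab}/\partial\Lambda_{\mathrm{P}\,\beta\alpha}=\sum_{\ell=0}^{i-1}(\Lambda_{\mathrm{P}}^{\ell})_{a\beta}(\Lambda_{\mathrm{P}}^{i-1-\ell})_{\alpha b}$ together with cyclicity of the trace in (\ref{blp_time}), one finds that $\partial u_{(i_1,\ldots,i_K)}/\partial\Lambda_{\mathrm{P}\,\beta\alpha}$ is a sum over a block index $I$ and a power $0\le\ell\le i_I-1$ of the \emph{open} matrix word obtained by cutting the cyclic word at the chosen slot, namely $\tfrac{1}{N}(\Lambda_{\mathrm{P}}^{i_I-1-\ell}\Lambda_{\mathrm{L}}^{-1}\Lambda_{\mathrm{P}}^{i_{I+1}}\cdots\Lambda_{\mathrm{L}}^{-1}\Lambda_{\mathrm{P}}^{i_{I-1}}\Lambda_{\mathrm{L}}^{-1}\Lambda_{\mathrm{P}}^{\ell})_{\alpha\beta}$.

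Next I would apply the second operator $(\Lambda_{\mathrm{L}}^{-1})^{\mathrm{T}}\partial/\partial\Lambda_{\mathrm{P}}$ and organise the outcome according to where the second derivative lands, reclosing the resulting open strings into traces by absorbing the two explicit $(\Lambda_{\mathrm{L}}^{-1})^{\mathrm{T}}$ factors of the heat operator. Three cases arise. (i) The second derivative hits a $\Lambda_{\mathrm{P}}$ in a block $M\ne I$ of the \emph{same} word $u_{\ii}$; the open string is then severed into two, which close up into $u_{(i_I-\ell-1,i_{I+1},\ldots,i_{M-1},m)}u_{(i_M-m-1,i_{M+1},\ldots,i_{I-1},\ell)}$ and reproduce the first sum of $M_0$. (ii) The second derivative hits the same block $I$ already cut; one block $i_I$ is split into three pieces $\ell,m,i_I-\ell-m-2$, giving $u_{(\ell,m,i_{I+1},\ldots,i_{I-1})}u_{(i_I-\ell-m-2)}$ with the constraint $\ell+m\le i_I-2$, reproducing the second sum of $M_0$. (iii) The two derivatives act on \emph{distinct} times $u_{\ii}$ and $u_{\jj}$ (available since $\mathcal{Z}_N$ is an exponential); the two independently cut open strings are then spliced into a \emph{single} cyclic word by the two $(\Lambda_{\mathrm{L}}^{-1})^{\mathrm{T}}$ factors, yielding the single time $u_{(i_I-\ell-1,\ldots,\ell,j_J-m-1,\ldots,m)}$ together with $\partial^2/\partial u_{\ii}\partial u_{\jj}$, which is precisely $M_2$. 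Counting powers of $N$ — each closed trace contributing one factor $N$ against the $1/N$ prefactor — places cases (i),(ii) at order $N^0$ and case (iii) at order $N^{-2}$, matching $\mathcal{M}=M_0+N^{-2}M_2$.

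Finally, the formal solution (\ref{init_blp}) follows by exponentiating the $y$-evolution equation, with the initial value read off at $y=0$: there all $M$-dependence drops out, each backbone with $i$ marked points contributes $N\mathrm{Tr}(\Lambda_{\mathrm{P}}^{i}\Lambda_{\mathrm{L}}^{-1})=N^2u_{(i)}$, and summing over backbones yields $\mathcal{Z}_N(0;\{s_i\};\{u_{\ii}\})=\exp[N^2\sum_{i\ge0}s_iu_{(i)}]$. The main obstacle will be the combinatorial bookkeeping in the second differentiation, specifically disentangling case (i) from case (ii) and getting the cyclic relabellings right (indices modulo $K$, the $I,M,J$ summation conventions including the $I=0$ boundary term) as well as the overall symmetry factor $\tfrac{1}{2}$; once the positions inside the cyclic words are tracked faithfully, the recombination into the $u_{\ii}$ and the $N$-counting are mechanical.
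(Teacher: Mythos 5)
Your proposal is correct and follows essentially the same route as the paper: the theorem is reduced to the heat-type equation of Corollary \ref{cor:blp_heat_ori} together with a chain-rule identity (the paper's Lemma \ref{lem:M0_M2}) rewriting $\frac{1}{2N}\mathrm{Tr}\,(\Lambda_{\mathrm{L}}^{-1})^{\mathrm{T}}\partial_{\Lambda_{\mathrm{P}}}(\Lambda_{\mathrm{L}}^{-1})^{\mathrm{T}}\partial_{\Lambda_{\mathrm{P}}}$ as $M_0+N^{-2}M_2$ on functions of the generalized Miwa times, and your three cases match exactly the paper's split of the first- and second-derivative coefficients. The only cosmetic remark is that the cross term (your case (iii)) comes from the second-order chain rule applied to an arbitrary function $h(\{u_{\ii}\})$, not from $\mathcal{Z}_N$ being an exponential.
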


The partial differential equation (\ref{cut_join_bpl_ori}) agrees with 
the cut-and-equation obtained combinatorially in Theorem 1.1 of \cite{Andersen_new}.
Here we prove this theorem by rewriting the derivative
in the second term of the partial differential equation (\ref{blp_heat}), taking advantage of the following lemma.

\begin{lemma}\label{lem:M0_M2}
For a function $h(\{u_{\ii}\})$ of the generalized Miwa times $u_{\ii}$, the derivative in the second term of the partial differential equation (\ref{blp_heat}) can be rewritten as follows
\begin{align}
\frac{1}{2N}\mathrm{Tr}
\left[(\Lambda_{\mathrm{L}}^{-1})^{\mathrm{T}}\frac{\partial}{\partial\Lambda_{\mathrm{P}}}(\Lambda_{\mathrm{L}}^{-1})^{\mathrm{T}}\frac{\partial}{\partial\Lambda_{\mathrm{P}}}\right]h(\{u_{\ii}\})=
\left(M_0+\frac{1}{N^2}M_2\right)h(\{u_{\ii}\}).
\label{blp_op_h}
\end{align}
\end{lemma}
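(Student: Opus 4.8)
The plan is to treat the $u_{\ii}$ as the independent variables and apply the chain rule to $h(\{u_{\ii}\})$, exactly as in the proofs of Lemmas \ref{lem:miwa_deriv} and \ref{lem:heat_Miwa_length}; the only new feature is the non-commuting factors $\Lambda_{\mathrm{L}}^{-1}$ interspersed between the powers of $\Lambda_{\mathrm{P}}$. Writing the operator in components and letting the two copies of $\partial/\partial\Lambda_{\mathrm{P}}$ act, we obtain two families of terms: those in which both derivatives fall on a single Miwa time, producing a factor $\partial h/\partial u_{\ii}$, and those in which the two derivatives fall on distinct Miwa times, producing $\partial^2 h/\partial u_{\ii}\partial u_{\jj}$. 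The first family will reproduce $M_0$ and the second $\tfrac{1}{N^2}M_2$. The single computational input is the first derivative of a Miwa time: differentiating $u_{(i_1,\ldots,i_K)}=\tfrac1N\mathrm{Tr}(\Lambda_{\mathrm{P}}^{i_1}\Lambda_{\mathrm{L}}^{-1}\cdots\Lambda_{\mathrm{P}}^{i_K}\Lambda_{\mathrm{L}}^{-1})$ inserts a matrix unit at each of the $i_I$ positions inside every block $\Lambda_{\mathrm{P}}^{i_I}$, cutting the cyclic word open and giving, after the cyclic rotation induced by the outer trace,
\[
\frac{\partial u_{(i_1,\ldots,i_K)}}{\partial\Lambda_{\mathrm{P}\alpha\beta}}=\frac{1}{N}\sum_{I=1}^{K}\sum_{\ell=0}^{i_I-1}\big(\Lambda_{\mathrm{P}}^{i_I-1-\ell}\Lambda_{\mathrm{L}}^{-1}\Lambda_{\mathrm{P}}^{i_{I+1}}\Lambda_{\mathrm{L}}^{-1}\cdots\Lambda_{\mathrm{P}}^{i_{I-1}}\Lambda_{\mathrm{L}}^{-1}\Lambda_{\mathrm{P}}^{\ell}\big)_{\beta\alpha}.
\]
The split of the block into a leading $\Lambda_{\mathrm{P}}^{i_I-1-\ell}$ and a trailing $\Lambda_{\mathrm{P}}^{\ell}$ is precisely what produces the index entries $i_I-\ell-1$ and $\ell$ occurring throughout (\ref{M02}), and the sum over $I$ and the cut position $\ell$ is the origin of the inner sums in $M_0$ and $M_2$.

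For the join terms I would apply the second $\partial/\partial\Lambda_{\mathrm{P}}$ to the expression above, after which the two intervening $(\Lambda_{\mathrm{L}}^{-1})^{\mathrm T}$ factors together with the outer trace reconnect the pieces produced by the two cuts. Two cases arise. If the cuts lie in distinct blocks $\Lambda_{\mathrm{P}}^{i_I}$ and $\Lambda_{\mathrm{P}}^{i_M}$ with $I\neq M$, the cyclic word separates into two traces, yielding the product $u_{(i_I-\ell-1,i_{I+1},\ldots,m)}\,u_{(i_M-m-1,i_{M+1},\ldots,\ell)}$, i.e. the first sum of $M_0$ (with labels read modulo $K$, accounting for the range $1\le I\neq M\le K$). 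If both cuts lie in the same block $\Lambda_{\mathrm{P}}^{i_I}$, one obtains a short closed piece $u_{(i_I-\ell-m-2)}$ together with the remaining cyclic word $u_{(\ell,m,i_{I+1},\ldots,i_{I-1})}$, which is the second sum of $M_0$; here the two cut positions must satisfy $\ell+m\le i_I-2$, reproducing that constraint. In both cases the reconnection yields two traces, hence two factors of $N$ which, against the prefactor $\tfrac1{2N}$ and the $\tfrac1N$ carried by the differentiated $u_{\ii}$, leave the $N^0$ normalization of $M_0$.

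For the split terms each derivative opens one of two distinct words $W_{\ii}$, $W_{\jj}$, and the two $(\Lambda_{\mathrm{L}}^{-1})^{\mathrm T}$ factors with the outer trace splice the two open words into a single cyclic word. This merging is exactly the combined index $(i_I-\ell-1,i_{I+1},\ldots,i_{I-1},\ell,\,j_J-m-1,j_{J+1},\ldots,j_{J-1},m)$ of $M_2$, with $J$ read modulo $L$. Now only one trace survives the contraction, so a single factor of $N$ offsets $\tfrac1{2N}$ together with the two factors of $\tfrac1N$ coming from $u_{\ii}$ and $u_{\jj}$, which is what produces the prefactor $\tfrac1{N^2}$ in front of $M_2$. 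Collecting the join and split contributions then gives the right-hand side of (\ref{blp_op_h}).

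The routine parts are the chain rule and the power-of-$N$ counting; the genuinely delicate step is the index bookkeeping when the $(\Lambda_{\mathrm{L}}^{-1})^{\mathrm T}$ factors reconnect the cut pieces. The hard part will be to check that the cyclic relabelling after each cut is internally consistent (all indices read modulo $K$ or $L$), that the position ranges and the constraint $\ell+m\le i_I-2$ emerge correctly, and that the symmetry factors ($\tfrac12$ versus $1$) come out right once the interchange of the two derivatives is taken into account. In short, verifying that the noncommutative splicing reproduces \emph{precisely} the index strings written in (\ref{M02}), rather than some cyclic variant of them, is where essentially all the care is required.
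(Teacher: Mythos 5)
Your proposal is correct and follows essentially the same route as the paper: apply the chain rule to separate first- and second-derivative terms, compute the coefficient of each via the explicit derivative of a generalized Miwa time (your formula for $\partial u_{(i_1,\ldots,i_K)}/\partial\Lambda_{\mathrm{P}\alpha\beta}$ is exactly right, including the split into $\Lambda_{\mathrm{P}}^{i_I-\ell-1}$ and $\Lambda_{\mathrm{P}}^{\ell}$), and then identify the distinct-block and same-block cuts with the two sums of $M_0$ and the spliced single trace with $M_2$; your power-of-$N$ bookkeeping also matches. The ``delicate'' index verification you defer is precisely what the paper's proof writes out term by term, and it confirms your predicted strings.
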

\begin{proof}
By the chain rule, the derivative on the left hand side of (\ref{blp_op_h}) is rewritten as follows
\begin{align}
&
{\mathrm{Tr}}\left[
(\Lambda_{\mathrm{L}}^{-1})^{\mathrm{T}}\frac{\partial}{\partial\Lambda_{\mathrm{P}}}(\Lambda_{\mathrm{L}}^{-1})^{\mathrm{T}}\frac{\partial}{\partial\Lambda_{\mathrm{P}}}\right]h(\{u_{\ii}\}) =
\nonumber \\
&
=\sum_{K\ge 0}\sum_{(i_1,\ldots,i_K)}
{\mathrm{Tr}}\left[
(\Lambda_{\mathrm{L}}^{-1})^{\mathrm{T}}\frac{\partial}{\partial\Lambda_{\mathrm{P}}}(\Lambda_{\mathrm{L}}^{-1})^{\mathrm{T}}\frac{\partial}{\partial\Lambda_{\mathrm{P}}}
u_{(i_1,\ldots,i_K)}\right]\frac{\partial}{\partial
 u_{(i_1,\ldots,i_K)}}h(\{u_{\ii}\})
\nonumber \\
&\quad
+\sum_{K,L\ge 0}\sum_{(i_1,\ldots,i_K)}\sum_{(j_1,\ldots,j_L)}
\mathrm{Tr}\left[
(\Lambda_{\mathrm{L}}^{-1})^{\mathrm{T}}\frac{\partial}{\partial\Lambda_{\mathrm{P}}}u_{(i_1,\ldots,i_K)}
(\Lambda_{\mathrm{L}}^{-1})^{\mathrm{T}}\frac{\partial}{\partial\Lambda_{\mathrm{P}}}
u_{(j_1,\ldots,j_L)}\right]
\nonumber \\
&\quad\quad\quad\quad\quad\quad\quad\quad\quad\quad\quad\quad
\times\frac{\partial^2}{\partial u_{(i_1,\ldots,i_K)}\partial u_{(j_1,\ldots,j_{L})}}h(\{u_{\ii}\}).
\nonumber
\end{align}
Each of the coefficients yields
\begin{align}
&
{\mathrm{Tr}}\left[
(\Lambda_{\mathrm{L}}^{-1})^{\mathrm{T}}\frac{\partial}{\partial\Lambda_{\mathrm{P}}}(\Lambda_{\mathrm{L}}^{-1})^{\mathrm{T}}\frac{\partial}{\partial\Lambda_{\mathrm{P}}}
u_{(i_1,\ldots,i_K)}\right] =
\nonumber \\
&
=\sum_{1\le I\ne M\le K}\sum_{\ell=0}^{i_I-1}\sum_{m=0}^{i_M-1}
\frac{1}{N}\mathrm{Tr}(
\Lambda_{\mathrm{P}}^{i_I-\ell-1}\Lambda_{\mathrm{L}}^{-1}\Lambda_{\mathrm{P}}^{i_{I+1}}\Lambda_{\mathrm{L}}^{-1}\cdots
\Lambda_{\mathrm{P}}^{i_{M-1}}\Lambda_{\mathrm{L}}^{-1}\Lambda_{\mathrm{P}}^{m}\Lambda_{\mathrm{L}}^{-1})
\nonumber \\
&\qquad\qquad\qquad\qquad\qquad
\times\mathrm{Tr}(
\Lambda_{\mathrm{P}}^{i_M-m-1}\Lambda_{\mathrm{L}}^{-1}\Lambda_{\mathrm{P}}^{i_{M+1}}\Lambda_{\mathrm{L}}^{-1}\cdots
\Lambda_{\mathrm{P}}^{i_{I-1}}\Lambda_{\mathrm{L}}^{-1}\Lambda_{\mathrm{P}}^{\ell}\Lambda_{\mathrm{L}}^{-1})
\nonumber \\
&\ \ \
+2\sum_{L=0}^K\sum_{\ell+m\le i_I-2}
\frac{1}{N}\mathrm{Tr}(\Lambda_{\mathrm{P}}^{\ell}\Lambda_{\mathrm{L}}^{-1}\Lambda_{\mathrm{P}}^{m}\Lambda_{\mathrm{L}}^{-1}\Lambda_{\mathrm{P}}^{i_{I+1}}\Lambda_{\mathrm{L}}^{-1}\cdots\Lambda_{\mathrm{P}}^{i_{L-1}}\Lambda_{\mathrm{L}}^{-1})
\mathrm{Tr}(\Lambda_{\mathrm{P}}^{i_I-\ell-m-2}\Lambda_{\mathrm{L}}^{-1})
\nonumber \\
&
=N\sum_{1\le I\ne M\le K}\sum_{\ell=0}^{i_I-1}\sum_{m=0}^{i_M-1}
u_{(i_I-\ell-1,i_{I+1},\ldots,i_{M-1},m)}
u_{(i_M-m-1,i_{M+1},\ldots,i_{I-1},\ell)}
\nonumber \\
&\ \ \
+2N\sum_{L=0}^K\sum_{\ell+m\le i_I-2}
u_{(\ell,m,i_{I+1},\ldots,i_{I-1})}u_{(i_I-\ell-m-2)},
\nonumber
\end{align}
and
\begin{align}
&
\mathrm{Tr}\left[
(\Lambda_{\mathrm{L}}^{-1})^{\mathrm{T}}\frac{\partial}{\partial\Lambda_{\mathrm{P}}}u_{(i_1,\ldots,i_K)}
(\Lambda_{\mathrm{L}}^{-1})^{\mathrm{T}}\frac{\partial}{\partial\Lambda_{\mathrm{P}}}
u_{(j_1,\ldots,j_L)}\right] =
\nonumber \\
&
=\sum_{I=1}^K\sum_{J=1}^L\sum_{\ell=0}^{i_I-1}\sum_{m=0}^{j_J-1}\frac{1}{N^2}
{\mathrm{Tr}}(\Lambda_{\mathrm{P}}^{i_I-\ell-1}\Lambda_{\mathrm{L}}^{-1}\Lambda_{\mathrm{P}}^{i_{I+1}}\Lambda_{\mathrm{L}}^{-1}\cdots\Lambda^{i_{I-1}}\Lambda_{\mathrm{L}}^{-1}\Lambda_{\mathrm{P}}^{\ell}\Lambda_{\mathrm{L}}^{-1}
\nonumber \\
&\qquad\qquad\qquad\qquad\qquad\quad\quad
\cdot
\Lambda_{\mathrm{P}}^{j_J-m-1}\Lambda_{\mathrm{L}}^{-1}\Lambda_{\mathrm{P}}^{j_{J+1}}\Lambda_{\mathrm{L}}^{-1}\cdots\Lambda_{\mathrm{P}}^{j_{J-1}}\Lambda_{\mathrm{L}}^{-1}\Lambda_{\mathrm{P}}^m\Lambda_{\mathrm{L}}^{-1})
\nonumber \\
&
=\frac{1}{N}\sum_{I=1}^K\sum_{J=1}^L\sum_{\ell=0}^{i_I-1}\sum_{m=0}^{j_J-1}
u_{(i_I-\ell-1,i_{I+1},\ldots,i_{I-1},\ell,j_J-m-1,j_{J+1},\ldots,j_{J-1},m)}.
\nonumber
\end{align}
In this way one obtains the right hand side of (\ref{blp_op_h}).
\end{proof}

As a corollary of Theorem \ref{thm:cut_join_blp}, one finds the cut-and-join equation for the 1-backbone generating function.\footnote{For $\Lambda_{\mathrm{L}}=I_N$ (or $s_i=s$ and $\Lambda_{\mathrm{P}}=0$) the cut-and-join equation for the 1-backbone generating function labeled by the 
boundary point spectrum (or boundary length spectrum) was proven combinatorially in \cite{AAPZ}.}
\begin{corollary}
The 1-backbone generating function $\mathcal{F}_1(x,y;\{s_i\};\{u_{\ii}\})$ obtained by picking up the $\mathcal{O}(s_i^{1})$ terms in $\mathcal{Z}_N(y;\{s_i\};\{u_{\ii}\})$ as follows
\begin{align}
\begin{split}
&
\mathcal{F}_1(N^{-1},y;\{s_i\};\{u_{\ii}\})
\\
&
=\frac{1}{{\mathrm{Vol}}_N}\int_{\mathcal{H}_N} dM\;\mathrm{e}^{-N\mathrm{Tr}\frac{M^2}{2}}
N\sum_{i\ge 0}s_i\mathrm{Tr}
(y^{1/2}\Lambda_{\mathrm{L}}^{-1}M+\Lambda_{\mathrm{P}})^i
\Lambda_{\mathrm{L}}^{-1},
\end{split}
\end{align}
obeys the cut-and-join equation
\begin{align}
\frac{\partial}{\partial y}\mathcal{F}_1(x,y;\{s_i\};\{u_{\ii}\})={\mathcal M}\mathcal{F}_1(x,y;\{s_i\};\{u_{\ii}\}),
\end{align}
where ${\mathcal M}=M_0+x^2M_2$. The solution is iteratively determined by
\begin{align}
\mathcal{F}_1(x,y;\{s_i\};\{u_{\ii}\})
=\mathrm{e}^{y\mathcal{M}}\mathcal{F}_1(x,y=0;\{s_i\};\{u_{\ii}\})
=\mathrm{e}^{y\mathcal{M}}\bigg(x^{-2}\sum_{i\ge 0}s_iu_{(i)}\bigg).
\end{align}
\end{corollary}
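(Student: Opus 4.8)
The plan is to obtain this statement directly from the cut-and-join equation for the full partition function established in Theorem \ref{thm:cut_join_blp}, by projecting onto the one-backbone sector. The crucial observation is that the operator $\mathcal{M}=M_0+x^2M_2$ defined in (\ref{M02}) consists solely of multiplications and derivatives in the variables $u_{\ii}$ and carries no dependence whatsoever on the backbone-counting variables $s_i$; consequently $\mathcal{M}$ preserves the total degree in the collection $\{s_i\}$. This single fact is what makes the corollary a genuine corollary rather than a new computation.

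First I would record that, because the power of $s_i$ in (\ref{Z_blp}) is precisely the backbone number $b_i$, the component $\mathcal{F}_b$ in (\ref{F_blp}) is homogeneous of total degree $b$ in $\{s_i\}$. Writing $\mathcal{Z}_N=\exp[\mathcal{F}]=1+\mathcal{F}+\tfrac12\mathcal{F}^2+\cdots$ with $\mathcal{F}=\sum_{b\ge 1}\mathcal{F}_b$, every term of total $s$-degree one comes exclusively from $\mathcal{F}_1$: the constant term contributes nothing, and all products $\mathcal{F}_{b_1}\mathcal{F}_{b_2}\cdots$ of two or more factors have degree at least two. Hence, using Theorem \ref{Thm:blp_matrix} to write $\mathcal{Z}_N=\mathcal{Z}(N^{-1},y;\ldots)$, the part of $\mathcal{Z}_N$ linear in $s$ is exactly $\mathcal{F}_1(N^{-1},y;\ldots)$. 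The stated integral representation of $\mathcal{F}_1$ then follows by expanding the exponential in (\ref{blp_model}): factoring out the Gaussian weight $\mathrm{e}^{-N\mathrm{Tr}M^2/2}$ and retaining only the linear term of $\exp\!\big[N\sum_i s_i\,\mathrm{Tr}(y^{1/2}\Lambda_{\mathrm{L}}^{-1}M+\Lambda_{\mathrm{P}})^i\Lambda_{\mathrm{L}}^{-1}\big]$ reproduces the displayed formula verbatim.

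Next I would let $[\,\cdot\,]_1$ denote the projection onto total $s$-degree one and apply it to both sides of the cut-and-join equation (\ref{cut_join_bpl_ori}). Since $\partial/\partial y$ acts on a variable independent of the $s_i$, it commutes with $[\,\cdot\,]_1$, giving $[\partial_y\mathcal{Z}_N]_1=\partial_y\mathcal{F}_1$. Likewise, because $\mathcal{M}$ contains no $s_i$, it too commutes with $[\,\cdot\,]_1$, so $[\mathcal{M}\mathcal{Z}_N]_1=\mathcal{M}[\mathcal{Z}_N]_1=\mathcal{M}\mathcal{F}_1$. Combining the two yields $\partial_y\mathcal{F}_1=\mathcal{M}\mathcal{F}_1$. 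For the initial data, I would read off from (\ref{init_blp}) that $\mathcal{Z}_N(y=0)=\exp[N^2\sum_i s_iu_{(i)}]$, whose degree-one part in $s$ is $N^2\sum_{i\ge 0}s_iu_{(i)}=x^{-2}\sum_{i\ge 0}s_iu_{(i)}$; this is the advertised value $\mathcal{F}_1(x,0;\ldots)$, consistent with the fact that a single backbone with no chords bounds a genus-zero surface with one boundary component (so $x^{2g-2}=x^{-2}$) labeled by $\ii=(i)$. Integrating the linear evolution equation then gives the formal solution $\mathcal{F}_1=\mathrm{e}^{y\mathcal{M}}\big(x^{-2}\sum_{i\ge 0}s_iu_{(i)}\big)$.

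There is essentially no hard computational step; the entire content is this grading argument. The only point I expect to require a word of care is the passage from the identity at $x=N^{-1}$ to a formal identity in the variable $x$: both sides are Laurent series in $x$ bounded below (beginning at $x^{-2}$), and the equation holds for every positive integer $N$, so it must hold coefficientwise in $x$. I anticipate that this bookkeeping, rather than any genuine analytic or combinatorial difficulty, is the only thing that needs to be made explicit.
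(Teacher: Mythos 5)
Your argument is correct and is precisely the (unwritten) reasoning the paper relies on when it presents this statement as a corollary of Theorem \ref{thm:cut_join_blp}: since $\mathcal{M}=M_0+x^2M_2$ involves only the $u_{\ii}$ and therefore preserves the total $s$-degree, projecting the cut-and-join equation (\ref{cut_join_bpl_ori}) and the initial condition (\ref{init_blp}) onto the degree-one sector, where $\exp[\mathcal{F}]$ reduces to $\mathcal{F}_1$, yields everything claimed. No gaps; your closing remark about passing from $x=N^{-1}$ for all $N$ to a formal identity in $x$ is the right point to make explicit.
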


%%%%%%%%%%%%%%%%%%%%%%%%%%%%%%%%%
%         Section 4                                                                           %
%%%%%%%%%%%%%%%%%%%%%%%%%%%%%%%%%
\section{\label{sec:non_orient}Non-oriented analogues}
%%%%%%%%%%%%%%%%%%%%%%%%%%%%%%%%%
%%%%%%%%%%%%%%%%%%%%%%%%%%%%%%%%%

In this section we consider the  enumeration of both orientable and non-orientable (jointly called non-oriented) partial chord diagrams \cite{AAPZ, Andersen_new}. To this end we generalize the matrix models introduced in Section \ref{sec:const_mat}. In Subsection \ref{subsec:non_or_mat}, matrix models for the boundary point spectrum, the boundary length spectrum, and the boundary length and point spectrum are introduced, based on the corresponding Gaussian matrix integrals over the space of rank $N$ real symmetric matrices. Subsequently, in Subsection \ref{subsec:non_ori_caj}, we derive cut-and-join equations for the generating functions of non-oriented partial chord diagrams, using analogous methods as those discussed in Section \ref{sec:mat_heat}.

%%%%%%%%%%%%%%%%%%%%%%%%%%%%%%%%%%%%%%%%%
%%%%%%%%%%%%%%%%%%%%%%%%%%%%%%%%%%%%%%%%%
\subsection{\label{subsec:non_or_mat}Non-oriented analogues of the matrix models}
%%%%%%%%%%%%%%%%%%%%%%%%%%%%%%%%%%%%%%%%%
%%%%%%%%%%%%%%%%%%%%%%%%%%%%%%%%%%%%%%%%%

In this subsection we generalize matrix models found in Section \ref{sec:const_mat}, in order to enumerate both orientable and non-orientable partial chord diagrams \cite{AAPZ, Andersen_new}.

\begin{definition}
Let $\widetilde{\mathcal N}_{h,k,l}(\{b_i\},\{n_i\},\{p_i\})$ denote the number of connected non-oriented partial chord diagrams of type $\{h, k, l; \{b_i\};\{n_i\};\{p_i\}\}$. Analogously as in the orientable case, we define
\begin{align}
&
\widetilde{\mathcal N}_{h,k,l}(\{b_i\},\{n_i\})
=\sum_{\{p_i\}}\widetilde{\mathcal N}_{h,k,l}(\{b_i\},\{n_i\},\{p_i\}),
\nonumber
\\
&
\widetilde{\mathcal N}_{h,k}(\{b_i\},\{p_i\})
=\sum_{\{n_i\}}\widetilde{\mathcal N}_{h,k,l=0}(\{b_i\},\{n_i\},\{p_i\}),
\nonumber
\end{align}
and introduce generating functions 
\begin{align}
\begin{split}
&
\widetilde{F}(x,y;\{s_i\};\{t_i\})
=\sum_{b\ge 1}\widetilde{F}_b(x,y;\{s_i\};\{t_i\}),
\\
&
\widetilde{F}_b(x,y;\{s_i\};\{t_i\})
=\frac{1}{b!}
\sum_{\sum_{i} b_i=b}\sum_{\{n_i\}}\widetilde{\mathcal N}_{h,k,l}(\{b_i\},\{n_i\})
x^{h-2}y^k
\prod_{i\ge 0}s_i^{b_i}t_i^{n_i},
\label{free_point_non}
\end{split}
\end{align}
and 
\begin{align}
\begin{split}
&
\widetilde{G}(x,y;\{s_i\};\{q_i\})
=\sum_{b\ge 1}\widetilde{G}_b(x,y;\{s_i\};\{q_i\}),
\\
&
\widetilde{G}_b(x,y;\{s_i\};\{q_i\})
=\frac{1}{b!}
\sum_{\sum_{i} b_i=b}\sum_{\{p_i\}}
\widetilde{\mathcal N}_{h,k}(\{b_i\},\{p_i\})x^{h-2}y^k
\prod_{i\ge 0}s_i^{b_i}\prod_{i\ge 1}q_i^{p_i}.
\label{free_length_non}
\end{split}
\end{align}
Generating functions of connected and disconnected partial chord diagrams are related by
\begin{align}
&
\widetilde{Z}^{\mathrm{P}}(x,y;\{s_i\};\{t_i\})=\exp\Big[\widetilde{F}(x,y;\{s_i\};\{t_i\})\Big],
\label{partition_point_non}
\\
&
\widetilde{Z}^{\mathrm{L}}(x,y;\{s_i\};\{q_i\})=\exp\Big[\widetilde{G}(x,y;\{s_i\};\{q_i\})\Big].
\label{partition_length_non}
\end{align}
\end{definition}

%Recall from the introduction that we also consider the non-oriented analogue of the enumeration of partial chord diagrams labeled by the boundary length and point spectrum.

Furthermore, we introduce generating functions of non-oriented partial chord diagrams labeled by the boundary length and point spectrum.

\begin{definition}\label{def:blp_non}
Let $\widetilde{\mathcal N}_{h,k,l}(\{b_i\},\{n_{\ii}\})$ denote the 
number of connected orientable and non-orientable partial chord diagrams of type $\{h, k, l; \{b_i\};\{n_{\ii}\}\}$ with the boundary length and point spectrum $n_{\ii}$. We define the generating functions
\begin{align}
\begin{split}
&
\widetilde{\mathcal F}(x,y;\{s_i\};\{u_{\ii}\})=
\sum_{b\ge 1}\widetilde{\mathcal F}_b(x,y;\{s_i\};\{u_{\ii}\}),
\\
&
\widetilde{\mathcal F}_b(x,y;\{s_i\};\{u_{\ii}\})=\frac{1}{b!}
\sum_{\sum_ib_i=b}
\sum_{\{n_{\ii}\}}
%\\
%&\qquad\qquad\qquad\qquad\qquad\quad
\widetilde{\mathcal N}_{h,k,l}(\{b_i\},\{n_{\ii}\})x^{h-2}y^k
\prod_{i\ge 0}s_i^{b_i}\prod_{K\ge 1}\prod_{\{i_L\}_{L=1}^K} u_{\ii}^{n_{\ii}}.
\end{split}
\end{align}
As usual, generating functions of connected and disconnected partial chord diagrams are related by
\begin{align}
\widetilde{\mathcal Z}(x,y;\{s_i\};\{u_{\ii}\})=
\exp\Big[\widetilde{\mathcal F}(x,y;\{s_i\};\{u_{\ii}\})\Big].
\label{partition_bpl_non}
\end{align}
\end{definition}

\subsubsection*{\bfseries Non-oriented analogue of partial chord diagrams and Wick contractions}
A non-oriented partial chord diagram is a partial chord diagrams with each chord decorated by a binary variable, which indicates if it is twisted or not.
Such non-oriented partial chord diagrams are enumerated by real symmetric\footnote{
The Gaussian matrix integral over the space of real symmetric matrix is also referred to as the {\em Gaussian orthogonal ensemble} \cite{Dyson,Mehta}.
} matrix integrals. 
The Gaussian average $\langle{\mathcal O}(M)\rangle_N^{\mathrm{\widetilde{G}}}$ over the space ${\mathcal H}_N(\mathbb{R})$ of real symmetric matrices is defined by
\begin{align}
\langle{\mathcal O}(M)\rangle_N^{\mathrm{\widetilde{G}}}=\frac{1}{\mathrm{Vol}_N(\mathbb{R})}\int_{{\mathcal H}_N(\mathbb{R)}}dM\;{\mathcal O}(M)\,\mathrm{e}^{-N\mathrm{Tr}\frac{M^2}{4}},
\end{align}
where
\begin{align}
\mathrm{Vol}_N(\mathbb{R})=\int_{{\mathcal H}_N(\mathbb{R)}}dM\;\mathrm{e}^{-N\mathrm{Tr}\frac{M^2}{4}}
=N^{N(N+1)/2}\mathrm{Vol}({\mathcal H}_N(\mathbb{R})),
\label{volN_def_r}
\end{align}
For the choice of $\mathcal{O}(M)=M_{\alpha\beta}M_{\gamma\epsilon}$ ($\alpha,\beta,\gamma,\epsilon=1,\ldots,N$),
the Wick contraction is defined as
\begin{align}
\contraction[1ex]{}{X}{X}{}
M_{\alpha\beta}M_{\gamma\epsilon}
:=\langle M_{\alpha\beta}M_{\gamma\epsilon}\rangle_N^{\mathrm{\widetilde{G}}}
=\frac{1}{N}(\delta_{\alpha\epsilon}\delta_{\beta\gamma}+\delta_{\alpha\gamma}\delta_{\beta\epsilon}).
\label{Wick_contr_non}
\end{align} 
This Wick contraction consists of two terms, which encode the corresponding fatgraph as follows.
The first term $\frac{1}{N}\delta_{\alpha\epsilon}\delta_{\beta\gamma}$ is the same as in the Hermitian matrix integral (\ref{Wick_contr}), and it can be identified with an untwisted band in the two dimensional surface $\Sigma_c$ associated to the partial chord diagram $c$. 
The second term $\frac{1}{N}\delta_{\alpha\gamma}\delta_{\beta\epsilon}$ 
in (\ref{Wick_contr_non}) 
%Focusing on the index structure $\alpha,\beta,\gamma,\epsilon=1,\ldots,N$, we find that this term 
relates opposite matrix indices compared to the first term and 
 can be identified with the twisted band in $\Sigma_c$, see Figure \ref{chord_wick_non2}. 
Hence, for the real symmetric Gaussian average, the correspondence rules \textbf{C4, P5, L5} in Section \ref{sec:const_mat} are replaced by the following rules  \cite{GHJ,VWZ,Si,JNPI,MW,GM}.
\begin{description}
\item[N5] The Wick contraction between $M_{\alpha_j\beta_j}$ and $M_{\alpha'_{j'}\beta'_{j'}}$ corresponds to a band or a twisted band connecting two chord ends. 
Each Wick contraction imposes either the constraint $\delta_{\alpha_{j}\beta'_{j'}}\delta_{\alpha'_{j'}\beta_j}$ or the constraint $\delta_{\alpha_{j}\alpha'_{j'}}\delta_{\beta_{j}\beta'_{j'}}$ for matrix indices assigned to edges of chord ends matched by Wick contractions.
\end{description}

\begin{figure}[h]
\begin{center}
   \includegraphics[width=70mm,clip]{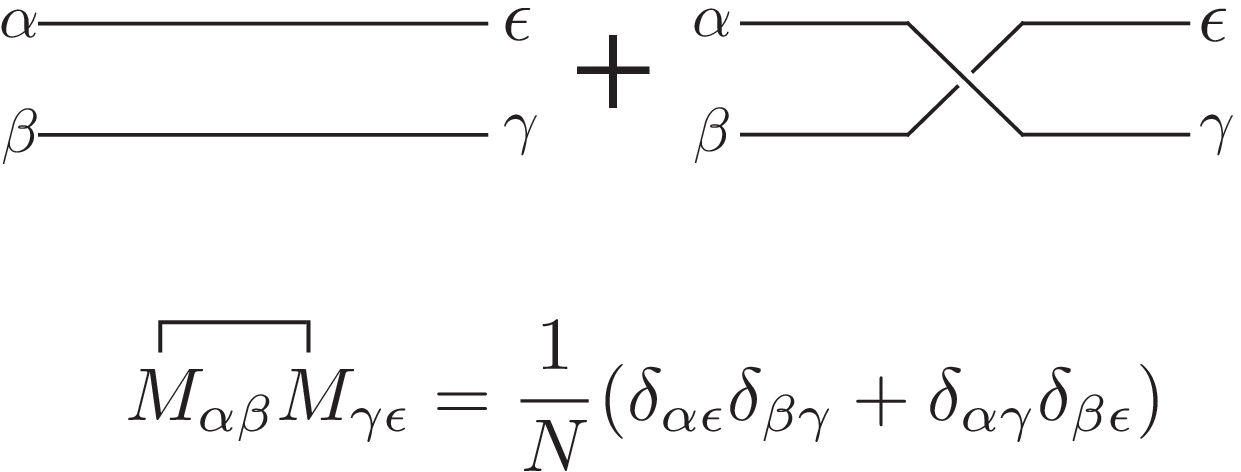}
\end{center}
\caption{\label{chord_wick_non2} Wick contraction and the untwisted / twisted bands.}
\end{figure}

In order to construct matrix models that enumerate non-oriented partial chord diagrams, we introduce two external real symmetric matrices matrices $\Omega_{\mathrm{P}}$ and $\Omega_{\mathrm{L}}$ 
\begin{align}
\Omega_{\mathrm{P}}=\Omega_{\mathrm{P}}^{\mathrm{T}},\quad 
\Omega_{\mathrm{L}}=\Omega_{\mathrm{L}}^{\mathrm{T}},
\nonumber
\end{align}
which take account of the fact that boundary cycles of non-oriented partial chord diagrams 
are not endowed with a specific orientation. To model the index structure $\ii$ of the boundary length and point spectrum correctly, we assume these two matrices do not commute
\begin{align}
[\Omega_{\mathrm{P}},\Omega_{\mathrm{L}}]\ne 0.
\nonumber
\end{align}
Furthermore, we introduce corresponding generalized Miwa times   
\begin{align}
&u_{(i_1,\ldots,i_K)}=\frac{1}{N}\mathrm{Tr}\left(
\Omega_{\mathrm{P}}^{i_1}\Omega_{\mathrm{L}}^{-1}\Omega_{\mathrm{P}}^{i_2}\Omega_{\mathrm{L}}^{-1}\cdots\Omega_{\mathrm{P}}^{i_K}\Omega_{\mathrm{L}}^{-1}
\right),
\label{blp_time_non}
\end{align}
which are invariant under the symmetry
\begin{align}
&u_{(i_1,\ldots,i_K)}=u_{(i_K,i_{K-1},\ldots,i_1)}.
\nonumber
\end{align}
This assignment implies the bijective correspondence (analogous to the orientable case discussed earlier) between non-oriented partial chord diagrams and Wick contractions, which is summarized in Table \ref{table:chord_wick_non}.

\begin{table}[htb]
\caption{\label{table:chord_wick_non} The correspondence between partial chord diagrams and operator products in the real symmetric matrix integral.}
  \begin{tabular}{|c|c|} \hline
    Partial chord diagram &   Gaussian average  \\ \hline \hline
    A chord end on a backbone & $\Omega_{\mathrm{L}}^{-1}M$ \\
    A marked point on a backbone & $\Omega_{\mathrm{P}}$ \\
    An underside of a backbone & $N\Omega_{\mathrm{L}}^{-1}$ \\
    A backbone & $N\mathrm{Tr}\left(
\Omega_{\mathrm{P}}^{\alpha_1}\Omega_{\mathrm{L}}^{-1}\Omega_{\mathrm{P}}^{\alpha_2}\Omega_{\mathrm{L}}^{-1}\cdots\Omega_{\mathrm{P}}^{\alpha_K}\Omega_{\mathrm{L}}^{-1}
\right)$ \\ 
    A Chord & A Wick contraction $\contraction[1ex]{}{X}{X}{}MM$ \\
\hline
  \end{tabular}
\end{table}
Using this correspondence,
generating functions $\widetilde{Z}^{\mathrm{P}}(x,y;\{s_i\};\{r_i\})$, $\widetilde{Z}^{\mathrm{L}}(x,y;\{s_i\};\{q_i\})$, and $\widetilde{\mathcal Z}(x,y;\{s_i\};\{u_{\ii}\})$
can be re-expressed in terms of matrix integrals. 
Repeating the same combinatorial arguments as for the orientable case in Section \ref{sec:const_mat}, we obtain the following three theorems.

\begin{theorem}\label{thm:matrix_point_non}
Let $\widetilde{Z}_N^{\mathrm{P}}(y;\{s_i\};\{r_i\})$ be the real symmetric matrix integral with the external symmetric matrix $\Omega_{\mathrm{P}}$ of rank $N$ 
\begin{align}
&
\widetilde{Z}_N^{\mathrm{P}}(y;\{s_i\};\{r_i\}) =
\nonumber \\
&
=\frac{1}{\mathrm{Vol}_N(\mathbb{R})}\int_{{\mathcal
 H}_N(\mathbb{R})} dM\;\exp\bigg[
-N\mathrm{Tr}\bigg(\frac{M^2}{4}-\sum_{i\ge 0}s_iy^{i/2}(M+y^{-1/2}\Omega_{\mathrm{P}})^i\bigg)
\bigg],
\label{partition_pt_non}
\end{align}
where $r_i$ are reverse Miwa times
\begin{align}
r_i=\frac{1}{N}\mathrm{Tr}\Omega_{\mathrm{P}}^i.
\label{reverse2}
\end{align}
This matrix integral agrees with the generating function (\ref{partition_point_non})
\begin{align}
\widetilde{Z}^{\mathrm{P}}_N(y;\{s_i\};\{r_i\})=\widetilde{Z}^{\mathrm{P}}(N^{-1},y;\{s_i\},t_0=1,\{t_i=r_i\}_{i\ge 1}).
\end{align}
The $t_0$-dependence can be implemented by the following rescaling of parameters
\begin{align}
\widetilde{Z}^{\mathrm{P}}_{t_0 N}(t_0y;\{t_0^{-1}s_i\};\{t_0^{-1}t_i\})=
\widetilde{Z}^{\mathrm{P}}(N^{-1},y;\{s_i\};\{t_i=r_i\}).
\end{align}
\end{theorem}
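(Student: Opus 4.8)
The plan is to transcribe the proof of the orientable point-spectrum result (Theorem \ref{thm:point_ori}) into the real symmetric setting, the only structural change being the replacement of the Wick contraction rule \textbf{P5} by its non-oriented counterpart \textbf{N5}. The rules \textbf{C1}, \textbf{C2}, \textbf{P2}, and \textbf{P4}, which encode the backbone as an ordered word in chord ends ($M$) and marked points ($\Omega_{\mathrm{P}}$) and assign to a backbone with $i$ vertices the trace $N\mathrm{Tr}(M+\Omega_{\mathrm{P}})^i$, apply verbatim. Thus the relevant object is the real symmetric Gaussian average
\begin{align}
\widetilde{W}_N^{\mathrm{P}}(\{b_i\},\{r_i\})
=\Bigl\langle\prod_{i\ge 0}\left(N\mathrm{Tr}(M+\Omega_{\mathrm{P}})^i\right)^{b_i}\Bigr\rangle_N^{\mathrm{\widetilde{G}}},
\nonumber
\end{align}
with reverse Miwa times $r_i=\frac1N\mathrm{Tr}\,\Omega_{\mathrm{P}}^i$ as in (\ref{reverse2}).

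The essential new feature is that the contraction (\ref{Wick_contr_non}) is a sum of two terms: by rule \textbf{N5} each contracted pair of $M$'s is glued by either an untwisted band (the $\frac1N\delta_{\alpha\epsilon}\delta_{\beta\gamma}$ term, identical to the Hermitian case) or a twisted band (the $\frac1N\delta_{\alpha\gamma}\delta_{\beta\epsilon}$ term). Consequently a single matching by $k$ contractions expands into $2^k$ contributions, one for each assignment of twists to the $k$ chords, which is exactly the combinatorial data of a non-oriented partial chord diagram. I would then prove the non-oriented analogue of Proposition \ref{prop:iI_N}: contracting all the $M$'s leaves products of Kronecker deltas whose summation along each boundary cycle yields one factor of $N$ per boundary component, and tracking the total power of $N$ by means of the non-orientable Euler relation $2-h=b-k+n$ shows that a diagram with backbone spectrum $\{b_i\}$ and boundary point spectrum $\{n_i\}$ contributes $N^{b-k+n}\prod_{i\ge 0}r_i^{n_i}$. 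Hence $\widetilde{W}_N^{\mathrm{P}}(\{b_i\},\{r_i\})$ is the generating function of non-oriented partial chord diagrams with the prescribed backbone and boundary point spectra.

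With this proposition established, the remainder is purely formal and identical to Subsection \ref{subsec:bp}: summing over all backbone spectra with the weights $y^{k}\prod_i s_i^{b_i}/b_i!$ and carrying out the summation over each $b_i$ exponentiates the single-backbone traces and produces the matrix integral (\ref{partition_pt_non}). At this step one checks the elementary Gaussian computation showing that the quadratic weight $e^{-N\mathrm{Tr}\,M^2/4}$ over real symmetric matrices is precisely the one for which the propagator reduces to (\ref{Wick_contr_non}); this is the origin of the coefficient $\frac14$ in place of the Hermitian $\frac12$. Finally, because $r_0=1$ is automatic, the dependence on $t_0$ is restored by the change of variables $N\to t_0N$, $y\to t_0y$, $s_i\to t_0^{-1}s_i$, $r_i\to t_0^{-1}t_i$, which yields both displayed identities of the theorem.

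The step I expect to be the main obstacle is the non-oriented analogue of Proposition \ref{prop:iI_N}, specifically verifying that the twisted-band term in (\ref{Wick_contr_non}) reverses the local gluing orientation so that the boundary-cycle count, and therefore the exponent $b-k+n$, is correctly governed by the cross-cap Euler characteristic $\chi=2-h$ rather than the orientable $2-2g$. Once this index bookkeeping is confirmed, every other step is a direct transcription of the orientable argument.
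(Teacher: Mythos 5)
Your proposal is correct and follows essentially the same route as the paper: the paper itself establishes Theorem \ref{thm:matrix_point_non} by declaring that the orientable combinatorial argument of Subsection \ref{subsec:bp} carries over verbatim once the Wick contraction rule is replaced by \textbf{N5}, with the two terms of (\ref{Wick_contr_non}) realizing untwisted and twisted bands, the boundary-cycle $N$-counting now governed by $2-h=b-k+n$, and the $t_0$-dependence restored by the same rescaling. The point you flag as the main obstacle (the twisted term reversing the local gluing so that the exponent of $N$ is the non-orientable Euler characteristic) is exactly the content the paper delegates to rule \textbf{N5} and its cited references, so your bookkeeping matches theirs.
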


\begin{theorem}\label{thm:matrix_length_non}
Let $\widetilde{Z}_N^{\mathrm{L}}(y;\{s_i\};\{q_i\})$ be the real symmetric matrix integral with the external invertible symmetric matrix $\Omega_{\mathrm{L}}$ of rank $N$ 
\begin{align}
&
\widetilde{Z}_N^{\mathrm{L}}(y;\{s_i\};\{q_i\}) =
\nonumber \\
&
=\frac{1}{\mathrm{Vol}_N(\mathbb{R})}\int_{{\mathcal H}_N(\mathbb{R})}
dM\;\mathrm{exp}\bigg[
-N\mathrm{Tr}\bigg(
\frac{M^2}{4}-\sum_{i\ge 0}s_{i}y^{i/2}\left(\Omega_{\mathrm{L}}^{-1}M\right)^{i}\Omega_{\mathrm{L}}^{-1}
\bigg)
\bigg],
\label{Z_bdy_length3}
\end{align}
where $q_i$ are Miwa times
\begin{align}
q_i=\frac{1}{N}\mathrm{Tr}\Omega_{\mathrm{L}}^{-i}.
\label{Miwa2_non}
\end{align}
This matrix integral agrees with the generating function (\ref{partition_length_non})
\begin{align}
&
\widetilde{Z}^{\mathrm{L}}_N(y;\{s_i\};\{q_i\})=
\widetilde{Z}^{\mathrm{L}}(N^{-1},y;\{s_i\};\{q_i\}).
\end{align}
\end{theorem}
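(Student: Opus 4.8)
The plan is to mirror the orientable argument of Subsection~\ref{subsec:bl} almost verbatim, with three replacements: the Hermitian Gaussian average becomes the real symmetric one $\langle\,\cdot\,\rangle_N^{\widetilde{\mathrm{G}}}$, the external matrix $\Lambda_{\mathrm{L}}$ is replaced by $\Omega_{\mathrm{L}}$, and the contraction rule \textbf{L5} is replaced by \textbf{N5}. The only genuinely new structural feature is that the real symmetric Wick contraction (\ref{Wick_contr_non}) splits into two terms, so the combinatorial bookkeeping must now track twisted bands in addition to untwisted ones.

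First I would establish the non-oriented analogue of Proposition~\ref{prop:length_prop1}. Using the structural rules \textbf{L1}--\textbf{L4} (with $\Omega_{\mathrm{L}}$ in place of $\Lambda_{\mathrm{L}}$) together with \textbf{N5}, each chord diagram with backbone spectrum $\{b_i\}$, decorated by new marked points as in Figure~\ref{fat_length_ext0}, corresponds to a matching by $k=\sum_i ib_i/2$ Wick contractions arising from the real symmetric Gaussian average
\begin{align}
\widetilde{W}_N^{\mathrm{L}}(\{b_i\};\{q_i\})
=\Bigl\langle
\prod_{i\ge 0}\bigl(
N\mathrm{Tr}(\Omega_{\mathrm{L}}^{-1}M)^{i}\Omega_{\mathrm{L}}^{-1}
\bigr)^{b_{i}}
\Bigr\rangle_N^{\widetilde{\mathrm{G}}},
\nonumber
\end{align}
with $q_i=\tfrac1N\mathrm{Tr}\,\Omega_{\mathrm{L}}^{-i}$. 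The crucial new point is that, by (\ref{Wick_contr_non}), each of the $k$ contractions expands into an untwisted band $\tfrac1N\delta_{\alpha\epsilon}\delta_{\beta\gamma}$ and a twisted band $\tfrac1N\delta_{\alpha\gamma}\delta_{\beta\epsilon}$; expanding the product over all $k$ contractions therefore generates all $2^k$ assignments of twists, i.e.\ exactly the set of non-oriented (orientable and non-orientable) decorated chord diagrams with this backbone spectrum.

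Next I would verify the power of $N$. As in the orientable case, summing the matrix indices around each boundary cycle of $\Sigma_c$ produces one factor of $N$, while $i$ copies of $\Omega_{\mathrm{L}}^{-1}$ aligned along a boundary cycle of combinatorial length $i$ produce a factor $q_i$; hence a diagram with $n$ boundary cycles and length spectrum $\{p_i\}$ contributes $N^{b-k+n}\prod_{i\ge 1}q_i^{p_i}$. Invoking the non-oriented Euler relation $2-h=b-k+n$ recorded in Section~\ref{sec:1}, this weight equals $x^{h-2}\prod_{i\ge1} q_i^{p_i}$ with $x=N^{-1}$, matching the exponent in (\ref{free_length_non}). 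This yields the analogue of Proposition~\ref{prop:length_prop1}, namely $\widetilde{W}_N^{\mathrm{L}}(\{b_i\};\{q_i\})=\sum_{\{p_i\}}\widehat{\widetilde{\mathcal N}}_{k,b}(\{b_i\},\{p_i\})\,N^{b-k+n}\prod_{i\ge1}q_i^{p_i}$, where $\widehat{\widetilde{\mathcal N}}_{k,b}$ denotes the corresponding disconnected count. I would then assemble the full generating function exactly as in (\ref{Z_bdy_length2}): weighting each backbone spectrum by $y^k\prod_i s_i^{b_i}/\prod_i b_i!$, absorbing $y^k$ through the rescaling $q_i\mapsto y^{-i/2}q_i$, and summing the resulting exponential series over all $\{b_i\}$. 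Since the real symmetric Gaussian weight $\mathrm{e}^{-N\mathrm{Tr}\,M^2/4}$ is precisely the normalization compatible with (\ref{Wick_contr_non}), the Gaussian resummation reconstitutes the interaction term and produces the matrix integral (\ref{Z_bdy_length3}); comparing with the exponential relation (\ref{partition_length_non}) between connected and disconnected counts gives the claimed identity.

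The main obstacle, and the only place where genuine verification is needed beyond transcribing the orientable proof, is the second and third steps: confirming that the two-term contraction (\ref{Wick_contr_non}) bijectively reproduces the untwisted/twisted band structure underlying \textbf{N5}, and that tracing boundary cycles on a \emph{possibly non-orientable} surface $\Sigma_c$ still yields exactly one factor of $N$ per cycle, hence the power $N^{b-k+n}=N^{2-h}$. Once this is checked, every remaining manipulation is identical to the orientable case and the proof concludes.
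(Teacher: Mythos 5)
Your proposal is correct and follows essentially the same route as the paper, which itself establishes Theorem \ref{thm:matrix_length_non} by ``repeating the same combinatorial arguments as for the orientable case'' after introducing the real symmetric Gaussian average with weight $\mathrm{e}^{-N\mathrm{Tr}\,M^2/4}$, the two-term Wick contraction (\ref{Wick_contr_non}) encoding untwisted and twisted bands via rule \textbf{N5}, and the external matrix $\Omega_{\mathrm{L}}$. Your explicit attention to the $2^k$ twist assignments and to the factor $N^{b-k+n}=N^{2-h}$ from summing indices along boundary cycles of a possibly non-orientable $\Sigma_c$ is exactly the verification the paper leaves implicit.
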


As considered in (\ref{Z_length}) and Subsection \ref{subsec:bl_h},  the specialization
$s_i=s$
of the matrix integral (\ref{Z_bdy_length3}) gives the following reduced model
\begin{align}
\widetilde{Z}_N^{\mathrm{L}}(y;s;\{q_i\})&=
\widetilde{Z}^{\mathrm{L}}_N(y;\{s_i=s\};\{q_i\}) =
\nonumber\\
&
=\frac{1}{\mathrm{Vol}_N({\IR})}\int_{{\mathcal H}_N({\IR})} dM\;\exp\bigg[
-N\mathrm{Tr}\bigg(\frac{M^2}{4}
+\frac{s}{y^{1/2}M-\Omega_{\mathrm{L}}}\bigg)
\bigg].
\label{Z_bdy_length3sp}
\end{align}
The cut-and-join equation that follows from this reduced model is derived in the next subsection.

\begin{theorem}\label{thm:blp_matrix_non}
Let $\widetilde{\mathcal Z}_N(y;\{s_i\};\{u_{\ii}\})$ be the real symmetric matrix integral with the external invertible symmetric matrices $\Omega_{\mathrm{P}}$ and $\Omega_{\mathrm{L}}$ of rank $N$ 
\begin{align}
&
\widetilde{\mathcal Z}_N(y;\{s_i\};\{u_{\ii}\}) =
\nonumber \\
&
=\frac{1}{\mathrm{Vol}_N(\mathbb{R})}\int_{{\mathcal H}_N(\mathbb{R})} dM\;\exp\bigg[
-N\mathrm{Tr}\bigg(\frac{M^2}{4}-\sum_{i\ge 0}s_i(y^{1/2}\Omega_{\mathrm{L}}^{-1}M+\Omega_{\mathrm{P}})^i\Omega_{\mathrm{L}}^{-1}\bigg)\bigg],
\label{blp_model_non}
\end{align}
and $u_{\ii}$ be the generalized Miwa times defined in (\ref{blp_time_non}).
This matrix integral agrees with the generating function (\ref{partition_bpl_non})
\begin{align}
\widetilde{\mathcal Z}_N(y;\{s_i\};\{u_{\ii}\})=
\widetilde{\mathcal Z}(N^{-1},y;\{s_i\};\{u_{\ii}\}).
\end{align}
\end{theorem}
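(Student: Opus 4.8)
The plan is to follow the orientable argument culminating in Theorem \ref{Thm:blp_matrix} step for step, replacing the Hermitian Gaussian average by the real symmetric one $\langle\,\cdot\,\rangle_N^{\mathrm{\widetilde{G}}}$ equipped with the two-term Wick contraction (\ref{Wick_contr_non}) and the correspondence rule \textbf{N5}. Concretely, I would first introduce the real symmetric Gaussian average
\begin{align}
\widetilde{\mathcal W}_N(\{b_i\};\{u_{\ii}\})=\Bigl\langle
\prod_{i\ge 0}\left(
N\mathrm{Tr}(\Omega_{\mathrm{L}}^{-1}M+\Omega_{\mathrm{P}})^{i}\Omega_{\mathrm{L}}^{-1}
\right)^{b_{i}}
\Bigr\rangle_N^{\mathrm{\widetilde{G}}},
\nonumber
\end{align}
which is the exact analogue of the Gaussian average (\ref{GaussPL}) with $\Lambda_{\mathrm P},\Lambda_{\mathrm L}$ replaced by $\Omega_{\mathrm P},\Omega_{\mathrm L}$. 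The backbone weights assigned in Table \ref{table:chord_wick_non} coincide with those in Table \ref{table:chord_wick}, so the only structural change enters through the contractions of the $M$'s.

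Next I would establish the non-oriented analogue of Proposition \ref{prop:blp}. By the Wick theorem for the real symmetric ensemble, each contraction (\ref{Wick_contr_non}) expands into two terms; rule \textbf{N5} identifies the first with an untwisted band and the second with a twisted band glued to the pair of chord ends. Hence a matching of the $2k$ matrix elements together with a choice of one of the two terms for each of the $k$ contractions is in bijection with a non-oriented partial chord diagram carrying backbone spectrum $\{b_i\}$, reproducing exactly the $2^k$ possible band-twistings described in the introduction. Summing matrix indices along each boundary cycle then produces, as in the orientable case, one generalized Miwa time $u_{\ii}$ (defined in (\ref{blp_time_non})) for each boundary cycle labeled by $\ii$, yielding
\begin{align}
\widetilde{\mathcal W}_N(\{b_i\};\{u_{\ii}\})=
\sum_{\{n_{\ii}\}}
\widehat{\widetilde{\mathcal N}}_{k,b,l}(\{b_i\},\{n_{\ii}\})N^{b-k+n}
\prod_{K\ge 1}\prod_{\{i_L\}_{L=1}^K} u_{\ii}^{n_{\ii}},
\nonumber
\end{align}
where $\widehat{\widetilde{\mathcal N}}_{k,b,l}$ counts the (possibly disconnected) non-oriented diagrams.

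The main obstacle is the $N$-power bookkeeping. In the orientable case the exponent $b-k+n$ equals $2-2g$ via the Euler relation; here I must verify that the same index summation still produces $N^{b-k+n}$ even when twisted bands are inserted, and that this exponent now equals $2-h$ through the non-orientable Euler relation $2-h=b-k+n$. The delicate point is that a twisted band reverses the orientation of the index loop running along its two sides, so one has to check that closing up $\Sigma_c$ still contributes exactly one free index sum (one factor of $N$) per boundary component, i.e. that $\Sigma_c$ has precisely $n$ boundary cycles regardless of the twisting pattern. This is also where the normalization $\mathrm{e}^{-N\mathrm{Tr}M^2/4}$ is essential: the factor $\tfrac14$ tunes the variance of the off-diagonal entries so that each of the two terms in (\ref{Wick_contr_non}) carries weight $1/N$, matching the single-band weight of the Hermitian model and keeping the power count uniform across all twistings.

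Finally, I would assemble the generating function exactly as in Subsection \ref{subsec:bpl}. Summing $\widetilde{\mathcal W}_N(\{b_i\};\{u_{\ii}\})$ over all backbone spectra with the weights $y^{k}\prod_i s_i^{b_i}/b_i!$, where $y^{k}$ arises from the rescaling $M\to y^{1/2}M$ that tags each chord end and the $1/b_i!$ from the permutation symmetry of identical backbones, the product of traces re-exponentiates into the action $-N\mathrm{Tr}\bigl(\tfrac{M^2}{4}-\sum_{i\ge 0}s_i(y^{1/2}\Omega_{\mathrm L}^{-1}M+\Omega_{\mathrm P})^i\Omega_{\mathrm L}^{-1}\bigr)$, giving precisely the integral (\ref{blp_model_non}). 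Setting $x=N^{-1}$ and using $N^{b-k+n}=x^{h-2}$ identifies the result with the generating function (\ref{partition_bpl_non}), completing the proof. I expect the combinatorial re-exponentiation and the Miwa-time identification to be routine transcriptions of the orientable computation, with all genuinely new content concentrated in the twisted-band power count described above.
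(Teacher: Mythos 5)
Your proposal is correct and follows essentially the same route as the paper, which itself establishes Theorem \ref{thm:blp_matrix_non} by setting up the real symmetric Gaussian average, the two-term Wick contraction (\ref{Wick_contr_non}) with rule \textbf{N5}, and Table \ref{table:chord_wick_non}, and then ``repeating the same combinatorial arguments as for the orientable case.'' Your explicit attention to the twisted-band index-loop count and to the role of the $\tfrac14$ in the Gaussian weight correctly fills in the details the paper leaves implicit.
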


%%%%%%%%%%%%%%%%%%%%%%%%%%%%%%%%%%%%%%%%%
%%%%%%%%%%%%%%%%%%%%%%%%%%%%%%%%%%%%%%%%%
\subsection{\label{subsec:non_ori_caj}Non-oriented analogues of cut-and-join equations}
%%%%%%%%%%%%%%%%%%%%%%%%%%%%%%%%%%%%%%%%%
%%%%%%%%%%%%%%%%%%%%%%%%%%%%%%%%%%%%%%%%%

We derive now non-oriented analogues of cut-and-join equations discussed in Section \ref{sec:mat_heat}. Analogously to the Hermitian matrix integral in (\ref{gblp_def}), 
we introduce the following matrix integral.

\begin{definition}\label{def:master_mat_r}
Let $U=U^{\mathrm{T}}$ and $V=V^{\mathrm{T}}$ be rank $N$ invertible symmetric matrices.
We define a formal real symmetric matrix integral with parameters $y$, $\{g_i\}_{i=-\infty}^{+\infty}$ as follows
\begin{align}
\begin{split}
&
\widetilde{Z}_N(y;\{g_i\};U;V) =
\\
&
=\frac{1}{\mathrm{Vol}_N({\IR})}\int_{\mathcal{H}_N({\IR})} dM\;
\exp\bigg[-N\mathrm{Tr}\bigg(\frac14M^2-\sum_{i\in {\IZ}}g_i(y^{1/2}V^{-1}M+U)^iV^{-1}\bigg)\bigg].
\label{o_gblp_def}
\end{split}
\end{align}
\end{definition}

The matrix integrals discussed in the previous subsection follow from this matrix integral by specializations
\begin{align}
&
\widetilde{Z}_N^{\mathrm{P}}(y;\{s_i\};\{r_i\}):\ \
g_{i<0}=0,\ \ \ g_{i\ge 0}=s_i,\ \ \
U=\Omega_{\mathrm{P}},\ \ \ V=I_N,
\label{sp_bp_no}
\\
&
\widetilde{Z}_N^{\mathrm{L}}(y;\{s_i\};\{q_i\}):\ \
g_{i<0}=0,\ \ \ g_{i\ge 0}=s_i,\ \ \
U=0,\ \ \ V=\Omega_{\mathrm{L}},
\label{sp_bl_no}
\\
&
\widetilde{Z}_N^{\mathrm{L}}(y;s;\{q_i\}):\ \
g_{i\neq -1}=0,\ \ \
g_{-1}=-s,\ \ \
U=\Omega_{\mathrm{L}},\ \ \ V=-I_N,
\label{sp_bl_sp_no}
\\
&
\widetilde{\mathcal{Z}}_N(y;\{s_i\};\{u_{\ii}\}):\ \
g_{i<0}=0,\ \ \ g_{i\ge 0}=s_i,\ \ \
U=\Omega_{\mathrm{P}},\ \ \ V=\Omega_{\mathrm{L}},
\label{sp_blp_no}
\end{align}
where $I_N$ is the rank $N$ identity matrix.

In Appendix \ref{app:heat_no} we prove the following proposition.

\begin{proposition}\label{prop:master_heat_r}
The matrix integral $\widetilde{Z}_N(y;\{g_i\};U;V)$ in (\ref{o_gblp_def}) obeys the partial differential equation
\begin{equation}
\bigg[\frac{\partial}{\partial y}
-\frac{1}{4N}\mathrm{Tr}(V^{-1})^{\mathrm{T}}\frac{\partial}{\partial A}
(V^{-1})^{\mathrm{T}}\frac{\partial}{\partial A}\bigg]\widetilde{Z}_N(y;\{g_i\};U;V)=0,
\label{o_gblp_heat}
\end{equation}
where $A$ is a matrix such that
\begin{align}
U=A+A^{\mathrm{T}}.
\nonumber
\end{align}
\end{proposition}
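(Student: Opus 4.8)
The plan is to establish (\ref{o_gblp_heat}) directly as a diffusion (heat) identity for the Gaussian average, rather than routing through a Virasoro/scaling constraint as in the proof of Proposition \ref{prop:master_heat}; the scaling route is available but, as explained below, is inconvenient in the symmetric case. Write $\Phi = y^{1/2}V^{-1}M + U$, so that $\widetilde{Z}_N(y;\{g_i\};U;V) = \langle \exp[N\sum_i g_i\mathrm{Tr}(\Phi^i V^{-1})]\rangle_N^{\widetilde{\mathrm{G}}}$ is the real symmetric Gaussian average of a function of the single matrix $\Phi$ together with the fixed matrix $V^{-1}$. All of the $y$- and $U$-dependence enters through $\Phi$, and $U$ enters $\widetilde{Z}_N$ only via $U = A + A^{\mathrm{T}}$.

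First I would record two elementary identities valid for any function $f(\Phi)$. First, differentiating in the Gaussian variable reproduces a $U$-derivative, $\partial_{M_{ef}}f(\Phi) = y^{1/2}\big((V^{-1})^{\mathrm{T}}\tfrac{\partial}{\partial U}\big)_{ef}f(\Phi)$, since $\partial_{M_{ef}}\Phi_{pq} = y^{1/2}(V^{-1})_{pe}\delta_{qf}$ while $\partial_{U_{ef}}\Phi_{pq} = \delta_{pe}\delta_{qf}$. Second, differentiating in $y$ gives $\partial_y f(\Phi) = \tfrac12 y^{-1/2}\sum_{c,b}M_{cb}\big((V^{-1})^{\mathrm{T}}\tfrac{\partial}{\partial U}\big)_{cb}f(\Phi)$, i.e. the $y$-derivative produces exactly one power of $M$ dressed by a $U$-derivative.

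Then I would apply these inside the average. Taking $\partial_y$ of $\widetilde{Z}_N$ and using the second identity leaves a single matrix element $M_{cb}$ multiplying a first-order $U$-operator acting on the integrand. The real symmetric (GOE) Wick contraction (\ref{Wick_contr_non}) supplies the integration-by-parts rule $\langle M_{cb}\,\mathcal{H}(M)\rangle_N^{\widetilde{\mathrm{G}}} = \tfrac1N\langle(\partial_{M_{cb}}+\partial_{M_{bc}})\mathcal{H}\rangle_N^{\widetilde{\mathrm{G}}}$, whose two terms are the untwisted and twisted bands. Using the first identity once more converts each $M$-derivative back into a dressed $U$-derivative, so that $\partial_y\widetilde{Z}_N$ becomes a second-order operator in $U$ applied to $\widetilde{Z}_N$. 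The two symmetrizations then match up: the two-term contraction $\partial_{M_{cb}}+\partial_{M_{bc}}$ together with the relation $\partial_{A_{pq}} = \partial_{U_{pq}}+\partial_{U_{qp}}$ coming from $U = A + A^{\mathrm{T}}$ assemble precisely into $\mathrm{Tr}\big[(V^{-1})^{\mathrm{T}}\tfrac{\partial}{\partial A}(V^{-1})^{\mathrm{T}}\tfrac{\partial}{\partial A}\big]$, which is the reason the auxiliary matrix $A$ is introduced. Collecting the prefactors $\tfrac12$ (from $\partial_y$) and $\tfrac1N$ (from the contraction) against the doubling produced by the two symmetrizations with the $\tfrac14 M^2$ weight yields the coefficient $\tfrac1{4N}$, giving (\ref{o_gblp_heat}).

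I expect the crux to be the bookkeeping in the last step: verifying that the two independent symmetrizations (the twisted/untwisted contraction and the $A\mapsto A+A^{\mathrm{T}}$ splitting) combine into the single trace operator with the correct constant $\tfrac1{4N}$ rather than $\tfrac1{2N}$. This is also where the symmetric case genuinely differs from Proposition \ref{prop:master_heat}: if one instead imitates that proof by the shift $M = X - y^{-1/2}VU$ and an $X\to(1+\epsilon)X$ Ward identity, the shift is not symmetric (since $V$ and $U$ need not commute), and spurious terms such as $\mathrm{Tr}(MVMV^{-1})$ are generated; the integration-by-parts route above organizes the computation so that these never appear. A secondary point to check is that the identity $\partial_{M_{ef}}f(\Phi) = y^{1/2}\big((V^{-1})^{\mathrm{T}}\tfrac{\partial}{\partial U}\big)_{ef}f(\Phi)$ and the GOE contraction are compatible with the symmetric-matrix convention for $\partial/\partial M$, which only affects diagonal versus off-diagonal entries and does not alter the final operator.
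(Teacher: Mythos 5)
Your first two steps are sound and in fact run close to the paper's own argument: Appendix \ref{app:heat_no} does not use the scaling Ward identity of Proposition \ref{prop:master_heat} either, but rather differentiates twice in $A$ and compares with $\partial_y$; you go in the opposite direction ($\partial_y$ first, then Gaussian integration by parts), which is a legitimate reorganization. The chain--rule identities and the GOE integration-by-parts rule $\langle M_{cb}\,\mathcal{H}\rangle=\tfrac1N\langle(\partial_{M_{cb}}+\partial_{M_{bc}})\mathcal{H}\rangle$ are correct, and they yield (writing $W=V^{-1}=W^{\mathrm T}$ and $\Phi=y^{1/2}WM+U$)
\begin{equation}
\frac{\partial}{\partial y}\widetilde Z_N=\frac{1}{2N}\Big\langle\Big[\textstyle\sum_{p,q,a,c}W_{pc}W_{aq}\,\partial_{\Phi_{pq}}\partial_{\Phi_{ac}}+\sum_{p,q,a}(W^2)_{pa}\,\partial_{\Phi_{pq}}\partial_{\Phi_{aq}}\Big]F\Big\rangle .
\nonumber
\end{equation}

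The gap is in the final assembly. Expanding $\tfrac14\mathrm{Tr}\big[W\tfrac{\partial}{\partial A}W\tfrac{\partial}{\partial A}\big]$ via $\partial_{A_{pq}}=\partial_{U_{pq}}+\partial_{U_{qp}}$ gives $\tfrac12\big[\sum W_{pc}W_{aq}\partial_{U_{pq}}\partial_{U_{ac}}+\sum W_{\alpha\beta}W_{\gamma\delta}\partial_{U_{\beta\gamma}}\partial_{U_{\alpha\delta}}\big]$. The planar pieces match, but the twisted Wick contraction produces $\sum_q (W^2)_{pa}\partial_{U_{pq}}\partial_{U_{aq}}$ --- both factors of $V^{-1}$ contracted into $V^{-2}$ between the \emph{first} indices, with the second indices identified by a Kronecker delta --- whereas the cross terms of the $\partial_A$-operator carry one $V^{-1}$ between the first indices and one $V^{-1}$ between the second indices. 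These are different second-order operators; they coincide only when $V\propto I_N$ (so your argument does close for $\widetilde Z_N^{\mathrm P}$), and already on $h=\mathrm{Tr}(U^3W)$ they differ by $2\,\mathrm{Tr}(W^3U^{\mathrm T})-2\,\mathrm{Tr}(W^3U)$, which vanishes here only because $U$ is symmetric; for higher powers even that rescue fails term by term. Their equality when applied to $\widetilde Z_N$ is true but is \emph{not} index bookkeeping: it requires the invariance of the real-symmetric Gaussian average of the trace words under reversal (equivalently, the invariance of the extension of $\widetilde Z_N$ to non-symmetric $U$ under $U\mapsto U^{\mathrm T}$, whose Hessian then satisfies only the simultaneous symmetry $H_{pq,ab}=H_{qp,ba}$ --- still not enough by itself, one must use the explicit form $\sum_j\Phi^jW\Phi^{i-1-j}$ of the derivatives of the potential as the paper does). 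So the step you flag as "the crux" and assert "assembles precisely" is exactly where a real argument is missing; supplying the reversal-symmetry lemma would complete your proof.
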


From this proposition and by the specializations (\ref{sp_bp_no}), (\ref{sp_bl_sp_no}), and (\ref{sp_blp_no})
we find partial differential equations for 
the corresponding matrix integrals.
For the specialization (\ref{sp_bl_no}), because of $U=0$ (and thus $A=0$),
the partial differential equation (\ref{o_gblp_heat}) cannot be reduced to a partial differential equation.

\begin{corollary}\label{cor:point_heat_no}
The matrix integral $\widetilde{Z}_N^{\mathrm{P}}(y;\{s_i\};\{r_i\})$ in (\ref{partition_pt_non}), $\widetilde{Z}_N^{\mathrm{L}}(y;s;\{q_i\})$ in (\ref{Z_bdy_length3sp}) and $\widetilde{\mathcal{Z}}_N(y;\{s_i\};\{u_{\ii}\})$ in (\ref{blp_model_non}) obey partial differential equations
\begin{align}
\begin{split}
&
\bigg[\frac{\partial}{\partial y}
-\frac{1}{4N}\mathrm{Tr}\frac{\partial^2}{\partial \Lambda_{\mathrm{P}}^2}
\bigg]\widetilde{Z}_N^{\mathrm{P}}(y;\{s_i\};\{r_i\})=0,
\\
&
\bigg[\frac{\partial}{\partial y}
-\frac{1}{4N}\mathrm{Tr}\frac{\partial^2}{\partial \Lambda_{\mathrm{L}}^2}
\bigg]\widetilde{Z}_N^{\mathrm{L}}(y;s;\{q_i\})=0,
\\
&
\bigg[\frac{\partial}{\partial y}
-\frac{1}{4N}\mathrm{Tr}
(\Omega_{\mathrm{L}}^{-1})^{\mathrm{T}}\frac{\partial}{\partial \Lambda_{\mathrm{P}}}
(\Omega_{\mathrm{L}}^{-1})^{\mathrm{T}}\frac{\partial}{\partial \Lambda_{\mathrm{P}}}\bigg]
\widetilde{\mathcal{Z}}_N(y;\{s_i\};\{u_{\ii}\})=0,
\end{split}
\end{align}
where $\Lambda_{\mathrm{P}}$ and $\Lambda_{\mathrm{L}}$ are matrices satisfying
\begin{align}
\Omega_{\mathrm{P}}=\Lambda_{\mathrm{P}}+\Lambda_{\mathrm{P}}^{\mathrm{T}},
\qquad
\Omega_{\mathrm{L}}=\Lambda_{\mathrm{L}}+\Lambda_{\mathrm{L}}^{\mathrm{T}}.
\nonumber
\end{align}
\end{corollary}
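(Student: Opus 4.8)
The plan is to obtain all three partial differential equations as direct specializations of the master equation (\ref{o_gblp_heat}) in Proposition \ref{prop:master_heat_r}, which we may assume. The only work is to substitute the appropriate external matrices $U$ and $V$ from the three specializations (\ref{sp_bp_no}), (\ref{sp_bl_sp_no}), and (\ref{sp_blp_no}), and to keep careful track of the factors $(V^{-1})^{\mathrm{T}}$ and of the auxiliary matrix $A$ defined by $U = A + A^{\mathrm{T}}$.

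First, for $\widetilde{Z}_N^{\mathrm{P}}(y;\{s_i\};\{r_i\})$ I would use (\ref{sp_bp_no}), which sets $V = I_N$. Then $(V^{-1})^{\mathrm{T}} = I_N$, so the two-derivative trace operator in (\ref{o_gblp_heat}) collapses to $\frac{1}{4N}\mathrm{Tr}\,\partial^2/\partial A^2$. Writing $A = \Lambda_{\mathrm{P}}$ with $\Omega_{\mathrm{P}} = \Lambda_{\mathrm{P}} + \Lambda_{\mathrm{P}}^{\mathrm{T}}$ reproduces the first displayed equation verbatim. Next, for $\widetilde{Z}_N^{\mathrm{L}}(y;s;\{q_i\})$ I would use (\ref{sp_bl_sp_no}), which sets $V = -I_N$. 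The two factors $(V^{-1})^{\mathrm{T}} = -I_N$ multiply to $+1$, so the operator again reduces to $\frac{1}{4N}\mathrm{Tr}\,\partial^2/\partial A^2$, now with $A = \Lambda_{\mathrm{L}}$ and $\Omega_{\mathrm{L}} = \Lambda_{\mathrm{L}} + \Lambda_{\mathrm{L}}^{\mathrm{T}}$, giving the second equation. Here I would also note that the specialization (\ref{sp_bl_no}) is excluded precisely because there $U = 0$ forces $A = 0$, so no $\partial/\partial A$ operator survives; this is why only the reduced model $s_i = s$ admits a differential equation in the length-spectrum case. Finally, for $\widetilde{\mathcal{Z}}_N(y;\{s_i\};\{u_{\ii}\})$ I would use (\ref{sp_blp_no}), keeping $V = \Omega_{\mathrm{L}}$ and $U = \Omega_{\mathrm{P}}$; here $(V^{-1})^{\mathrm{T}} = (\Omega_{\mathrm{L}}^{-1})^{\mathrm{T}}$ remains inside the operator, and with $A = \Lambda_{\mathrm{P}}$ the third equation follows directly.

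There is essentially no obstacle beyond bookkeeping. The one conceptual point to emphasize is the meaning of the derivative $\partial/\partial A$: in the real symmetric setting $U$ is symmetric, so one does not differentiate with respect to a symmetric matrix directly but instead introduces an unconstrained matrix $A$ with $U = A + A^{\mathrm{T}}$ and differentiates with respect to $A$. Since Proposition \ref{prop:master_heat_r} is formulated in exactly these variables, each of the three cases is read off by substitution; the identification of $A$ with $\Lambda_{\mathrm{P}}$ (respectively $\Lambda_{\mathrm{L}}$) and the correct placement of the $(V^{-1})^{\mathrm{T}}$ factors are the only things to verify, and the sign check $(-I_N)^2 = I_N$ in the reduced length-spectrum case is what guarantees that the $V = -I_N$ choice produces the same clean Laplacian as $V = I_N$.
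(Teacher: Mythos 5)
Your proposal is correct and follows exactly the paper's route: the corollary is obtained by substituting the specializations (\ref{sp_bp_no}), (\ref{sp_bl_sp_no}), and (\ref{sp_blp_no}) into Proposition \ref{prop:master_heat_r}, identifying $A$ with $\Lambda_{\mathrm{P}}$ or $\Lambda_{\mathrm{L}}$ via $U=A+A^{\mathrm{T}}$, and noting that the case $U=0$ of (\ref{sp_bl_no}) must be excluded. Your additional sign check $((-I_N)^{-1})^{\mathrm{T}}((-I_N)^{-1})^{\mathrm{T}}=I_N$ for the reduced length-spectrum model is a detail the paper leaves implicit but is exactly what is needed.
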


From this corollary we obtain non-oriented analogues of cut-and-join equations, by rewriting the derivatives with respect to the external matrices $\Lambda_{\mathrm{P}}$ and $\Lambda_{\mathrm{L}}$ in Corollary \ref{cor:point_heat_no} in terms of Miwa times $r_i$ in (\ref{reverse2}), 
$q_i$ in (\ref{Miwa2_non}), and $u_{\ii}$ in (\ref{blp_time_non}) as follows.

\begin{lemma}\label{lem:ext_Miwa_non}
Let $L_1$, $K_1$, $M_1$, and $M_2^{\vee}$ denote differential operators
\begin{align}
L_1&=\frac{1}{2}\sum_{i\ge 1}i(i+1)r_{i}\frac{\partial}{\partial
 r_{i+2}},
\label{L1_pt}\\
K_1&=\frac{1}{2}\sum_{i\ge 3}(i-2)(i-1)q_{i}\frac{\partial}{\partial
 q_{i-2}},
\label{K1}\\
\begin{split}
M_1&=\frac{1}{2}\sum_{K\ge
1}\sum_{\{i_1,\ldots,i_K\}}\sum_{1\le I\ne M\le
 K}\sum_{\ell=0}^{i_I-1}\sum_{m=0}^{i_M-1}
\\
&\qquad\quad
u_{(m,i_{M-1},i_{M-2},\ldots,i_{I+1},i_I-\ell-1
,i_M-m-1,i_{M+1},\ldots,i_{I-1},\ell)}
\frac{\partial}{\partial
u_{(i_1,\ldots,i_K)}}
\\
&\ \
+\sum_{K\ge 1}\sum_{\{i_1,\ldots,i_K\}}
\sum_{L=1}^K\sum_{\ell+m\le i_I-2}
u_{(\ell,i_I-\ell-m-2,m,i_{I+1},\ldots,i_{I-1})}
\frac{\partial}{\partial
u_{(i_1,\ldots,i_K)}},
\label{M1_non}
\end{split}
\end{align}
and
\begin{align}
\begin{split}
M_2^{\vee}
&=\frac{1}{2}\sum_{K,L\ge 1}\sum_{\{i_1,\ldots,i_K\}}
\sum_{\{j_1,\ldots,j_L\}}
\sum_{I=1}^K\sum_{J=1}^{L}\sum_{\ell=0}^{i_I-1}\sum_{m=0}^{j_J-1}
\\
&\ \ \ \
u_{(\ell,i_{I-1},\ldots,i_{I+1},i_{I}-\ell-1,j_J-m-1,j_{J+1},\ldots,j_{J-1},m)}
\frac{\partial^2}{\partial u_{(i_1,\ldots,i_K)}\partial u_{(j_1,\ldots,j_{L})}}.
\label{M2_non}
\end{split}
\end{align}
Then the derivatives with respect to $\Lambda_{\mathrm{P}}$ and $\Lambda_{\mathrm{L}}$ in Corollary \ref{cor:point_heat_no} are rewritten as
\begin{align}
\begin{split}
&
\frac{1}{4N}\mathrm{Tr}\frac{\partial^2}{\partial \Lambda_{\mathrm{P}}^2}f(\{r_i\})
=\left(L_0+\frac{1}{N}L_1+\frac{2}{N^2}L_2\right)f(\{r_i\}),
\\
&
\frac{1}{4N}\mathrm{Tr}\frac{\partial^2}{\partial \Lambda_{\mathrm{L}}^2}g(\{q_i\})
=\left(K_0+\frac{1}{N}K_1+\frac{2}{N^2}K_2\right)g(\{q_i\}),
\\
&
\frac{1}{4N}\mathrm{Tr}\left[
(\Omega_{\mathrm{L}}^{-1})^{\mathrm{T}}
\frac{\partial}{\partial\Lambda_{\mathrm{P}}}
(\Omega_{\mathrm{L}}^{-1})^{\mathrm{T}}
\frac{\partial}{\partial\Lambda_{\mathrm{P}}}\right]h(\{u_{\ii}\})
\\
&\qquad\qquad
=
\left(M_0+\frac{1}{N}M_1+\frac{1}{N^2}\big(M_2+M_2^{\vee}\big)\right)
h(\{u_{\ii}\}),
\end{split}
\end{align}
where $f(\{r_i\})$, $g(\{q_i\})$, and $h(\{u_{\ii}\})$ are functions of Miwa times $r_i$, $q_i$, and $u_{\ii}$, respectively. Here $L_{0,2}$, $K_{0,2}$, and $M_{0,2}$ are defined in (\ref{L02_pt}), (\ref{K02}), and (\ref{M02}), respectively.
\end{lemma}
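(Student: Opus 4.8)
The plan is to prove all three identities by the same chain rule used in the orientable Lemmas \ref{lem:miwa_deriv}, \ref{lem:heat_Miwa_length} and \ref{lem:M0_M2}; the only structural novelty is that the external data now enters through the symmetric combinations $\Omega_{\mathrm{P}}=\Lambda_{\mathrm{P}}+\Lambda_{\mathrm{P}}^{\mathrm{T}}$ and $\Omega_{\mathrm{L}}=\Lambda_{\mathrm{L}}+\Lambda_{\mathrm{L}}^{\mathrm{T}}$, so that a single derivative $\partial/\partial\Lambda_{\mathrm{P}\,ab}$ differentiates $\Omega_{\mathrm{P}}$ through \emph{both} $E_{ab}$ and $E_{ba}$. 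First I would record the elementary derivatives. Since $\partial\Omega_{\mathrm{P}}/\partial\Lambda_{\mathrm{P}\,ab}=E_{ab}+E_{ba}$ and the powers $\Omega_{\mathrm{P}}^{m}$ are symmetric, one gets $\partial r_i/\partial\Lambda_{\mathrm{P}\,ab}=\tfrac{2i}{N}(\Omega_{\mathrm{P}}^{i-1})_{ab}$, and likewise $\partial q_i/\partial\Lambda_{\mathrm{L}\,ab}=-\tfrac{2i}{N}(\Omega_{\mathrm{L}}^{-i-1})_{ab}$; differentiating the generalized Miwa time $u_{\ii}$ in (\ref{blp_time_non}) inserts $E_{ab}+E_{ba}$ at each of the $i_I$ slots inside every block $\Omega_{\mathrm{P}}^{i_I}$. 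The extra factor $2$, together with the Gaussian normalization $M^2/4$ (hence the overall prefactor $\tfrac{1}{4N}$), is precisely what converts the orientable prefactor $\tfrac{1}{2N}$ into the coefficients displayed in the lemma.

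As in the orientable case the computation splits into a \emph{self}-contraction piece, in which both derivatives hit the same Miwa time (the first term of each chain rule), and a \emph{cross} piece, in which they hit two different Miwa times. The key point is that each contraction now carries two index patterns, coming from the second term $\tfrac{1}{N}\delta_{\alpha\gamma}\delta_{\beta\epsilon}$ of the real-symmetric Wick rule (\ref{Wick_contr_non}). The orientation-preserving insertion $E_{ab}$ reproduces verbatim the orientable operators: the leading ($N^{1}$) part of the self-contraction of $r_i$ yields $L_0$, the cross piece yields the $M_2$-type term, and so on. The orientation-reversing insertion $E_{ba}$ is genuinely new: placed inside a trace it reverses the cyclic sub-word lying between the two insertion points, which is exactly why the new operators $L_1$, $K_1$, $M_1$, $M_2^{\vee}$ in (\ref{L1_pt})--(\ref{M2_non}) carry Miwa times with the reversed index strings (e.g. the block $i_{I-1},\ldots,i_{I+1}$ in $M_2^{\vee}$ against $i_{I+1},\ldots,i_{I-1}$ in $M_2$).

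Concretely I would first dispose of the two single-matrix cases. For a function $f(\{r_i\})$ the self-contraction gives $\mathrm{Tr}\,\partial^2 r_i/\partial\Lambda_{\mathrm{P}}^2=2iN\sum_{k=0}^{i-2}r_k r_{i-k-2}+2i(i-1)r_{i-2}$: the $N^{1}$ term produces $L_0$, while the symmetry-induced $N^{0}$ term, after the shift $i\mapsto i+2$, assembles into $\tfrac{1}{N}L_1$; the cross piece equals $\tfrac{4ij}{N}r_{i+j-2}$ and hence contributes $\tfrac{2}{N^2}L_2$. The length case is identical after replacing $r_i$ by $q_i$ and $\Omega_{\mathrm{P}}$ by $\Omega_{\mathrm{L}}^{-1}$ (keeping track of the sign of the inverse-matrix derivative), giving $K_0+\tfrac{1}{N}K_1+\tfrac{2}{N^2}K_2$.

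For the mixed operator the same dichotomy applies to $h(\{u_{\ii}\})$, but now each derivative must be expanded block by block as in the proof of Lemma \ref{lem:M0_M2}, retaining for every derivative both the preserving and the reversing insertion. The preserving--preserving terms rebuild $M_0$ (self-contraction) and $M_2$ (cross), while the terms carrying a reversing insertion close the trace with a reversed sub-word and produce $M_1$ at order $1/N$ and $M_2^{\vee}$ at order $1/N^2$; the contraction of the free indices $a,b$ through the two factors $(\Omega_{\mathrm{L}}^{-1})^{\mathrm{T}}$ is what glues the two half-words into the single long index string of $M_2^{\vee}$ and the reversed string of $M_1$. I expect the main obstacle to be purely the combinatorial bookkeeping of this last step: verifying that the reversing insertions generate exactly the reversed patterns $i_{I-1},\ldots,i_{I+1}$ (rather than some other permutation), that the summation ranges of $\ell,m$ and of the block labels $I,M,J$ match (\ref{M1_non})--(\ref{M2_non}) modulo $K$ and $L$, and that the degenerate blocks (the single-index time $u_{(i)}$ and the constant $r_0$) are handled correctly. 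Once this index combinatorics is settled, collecting the powers of $N$ yields the three stated identities.
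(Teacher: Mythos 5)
Your proposal follows essentially the same route as the paper's own proof in Appendix \ref{app:miwa_no}: the elementary derivatives $\partial r_i/\partial\Lambda_{\mathrm{P}\beta\alpha}=\tfrac{2i}{N}\Omega_{\mathrm{P}\alpha\beta}^{i-1}$ (and their analogues for $q_i$ and $u_{\ii}$) arising from $\Omega=\Lambda+\Lambda^{\mathrm{T}}$, the chain-rule split into self- and cross-contraction pieces organized by powers of $N$, and the observation that the orientation-reversing index pattern reverses the cyclic sub-word between insertion points, which is exactly the origin of $L_1$, $K_1$, $M_1$, and $M_2^{\vee}$. The index bookkeeping you defer in the boundary length and point case is precisely what the paper's appendix carries out term by term, and your description of the mechanism producing the reversed strings matches it.
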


The proof of this lemma is given in Appendix \ref{app:miwa_no}. By combining Corollary \ref{cor:point_heat_no} with Lemma \ref{lem:ext_Miwa_non}
one arrives at the following theorem.

\begin{theorem}\label{Thm:cut_join_non}
The matrix integrals $\widetilde{Z}_N^{\mathrm{P}}(y;\{s_i\};\{r_i\})$ in (\ref{partition_pt_non}), $\widetilde{Z}_N^{\mathrm{L}}(y;s;\{q_i\})$ in (\ref{Z_bdy_length3sp}), and $\widetilde{\mathcal{Z}}_N(y;\{s_i\};\{u_{\ii}\})$ in (\ref{blp_model_non}) obey the cut-and-join equations
\begin{align}
\begin{split}
&
\frac{\partial}{\partial y}\widetilde{Z}_N^{\mathrm{P}}(y;\{s_i\};\{r_i\})
=\widetilde{\mathcal L}\widetilde{Z}_N^{\mathrm{P}}(y;\{s_i\};\{r_i\}),
\\
&
\frac{\partial}{\partial y}\widetilde{Z}_N^{\mathrm{L}}(y;s;\{q_i\})
=\widetilde{\mathcal K}\widetilde{Z}_N^{\mathrm{L}}(y;s;\{q_i\}),
\\
&
\frac{\partial}{\partial y} \widetilde{\mathcal Z}_N(y;\{s_i\};\{u_{{\ii}}\})
=\widetilde{\mathcal M}\widetilde{\mathcal Z}_N(y;\{s_i\};\{u_{{\ii}}\}),
\label{cut_join_blp_non}
\end{split}
\end{align}
where
\begin{align}
&
\widetilde{\mathcal L}=L_0+\frac{1}{N}L_1+\frac{2}{N^2}L_2,
\nonumber
\\
&
\widetilde{\mathcal K}=K_0+\frac{1}{N}K_1+\frac{2}{N^2}K_2,
\nonumber
\\
&
\widetilde{\mathcal M}
=M_0+\frac{1}{N}M_1+\frac{1}{N^2}\big(M_2+M_2^{\vee}\big).
\nonumber
\end{align}
Assuming certain initial conditions at $y=0$, one can iteratively determine the above matrix integrals by solving the cut-and-join equations
\begin{align}
\begin{split}
&
\widetilde{Z}_N^{\mathrm{P}}(y;\{s_i\};\{r_i\})=\mathrm{e}^{y\widetilde{\mathcal L}}\widetilde{Z}_N^{\mathrm{P}}(y=0;\{s_i\};\{r_i\})
=\mathrm{e}^{y\widetilde{\mathcal L}}\mathrm{e}^{N^2\sum_{i\ge 0}s_ir_i},
\\
&
\widetilde{Z}_N^{\mathrm{L}}(y;s;\{q_i\})=\mathrm{e}^{y\widetilde{\mathcal K}}\widetilde{Z}_N^{\mathrm{L}}(y=0,s;\{q_i\})
=\mathrm{e}^{y\widetilde{\mathcal K}}\mathrm{e}^{N^2sq_1},
\\
&
\widetilde{\mathcal Z}_N(y;\{s_i\};\{u_{{\ii}}\})=
\mathrm{e}^{y\widetilde{\mathcal M}}\widetilde{\mathcal Z}_N(y=0;\{s_i\};\{u_{{\ii}}\})
=\mathrm{e}^{y\widetilde{\mathcal M}}\mathrm{e}^{-N^2\sum_{i\ge 0}s_iu_{(i)}}.
\end{split}
\end{align}
\end{theorem}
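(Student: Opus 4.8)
The plan is to obtain all three cut-and-join equations as a direct consequence of Corollary \ref{cor:point_heat_no} and Lemma \ref{lem:ext_Miwa_non}, in exactly the manner by which Theorems \ref{thm:cut_and_join_pt_ori}, \ref{thm:cut_join_length}, and \ref{thm:cut_join_blp} follow in the orientable case. The first step is to record that, by Theorems \ref{thm:matrix_point_non}, \ref{thm:matrix_length_non}, and \ref{thm:blp_matrix_non}, each of the matrix integrals $\widetilde{Z}_N^{\mathrm{P}}$, $\widetilde{Z}_N^{\mathrm{L}}$, and $\widetilde{\mathcal Z}_N$ is a generating function in, and therefore depends on the external matrices only through, the respective Miwa times $\{r_i\}$, $\{q_i\}$, and $\{u_{\ii}\}$. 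This dependence is precisely what licenses the application of Lemma \ref{lem:ext_Miwa_non}, whose three identities are stated for functions of these Miwa variables.

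Next I would combine the two inputs term by term. Corollary \ref{cor:point_heat_no} provides the heat-type equations $\partial_y\widetilde{Z}=\tfrac{1}{4N}(\cdots)\widetilde{Z}$, where $(\cdots)$ is the second-order trace operator in the matrix $\Lambda_{\mathrm{P}}$ or $\Lambda_{\mathrm{L}}$ (with $\Omega_{\mathrm{P}}=\Lambda_{\mathrm{P}}+\Lambda_{\mathrm{P}}^{\mathrm{T}}$ and $\Omega_{\mathrm{L}}=\Lambda_{\mathrm{L}}+\Lambda_{\mathrm{L}}^{\mathrm{T}}$). Applying Lemma \ref{lem:ext_Miwa_non} rewrites each such operator, acting on a function of Miwa times, as $\widetilde{\mathcal L}=L_0+\tfrac{1}{N}L_1+\tfrac{2}{N^2}L_2$, as $\widetilde{\mathcal K}=K_0+\tfrac{1}{N}K_1+\tfrac{2}{N^2}K_2$, and as $\widetilde{\mathcal M}=M_0+\tfrac{1}{N}M_1+\tfrac{1}{N^2}(M_2+M_2^{\vee})$, respectively. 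Substituting these rewritings into the heat equations yields the three cut-and-join equations (\ref{cut_join_blp_non}) verbatim.

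For the formal solutions, I would observe that each of $\widetilde{\mathcal L}$, $\widetilde{\mathcal K}$, and $\widetilde{\mathcal M}$ is independent of $y$, so the evolution equation integrates to $\widetilde Z(y)=\mathrm{e}^{y\widetilde{\mathcal L}}\widetilde Z(0)$ and analogously for the other two. The initial conditions at $y=0$ are then read off directly from the matrix integrals (\ref{partition_pt_non}), (\ref{Z_bdy_length3sp}), and (\ref{blp_model_non}): setting $y=0$ makes the $M$-dependence of the potential term disappear — for instance $(y^{1/2}\Omega_{\mathrm{L}}^{-1}M+\Omega_{\mathrm{P}})^i\Omega_{\mathrm{L}}^{-1}$ collapses to $\Omega_{\mathrm{P}}^i\Omega_{\mathrm{L}}^{-1}$ — so that the only remaining $M$-dependence is the Gaussian weight $\mathrm{e}^{-N\mathrm{Tr}M^2/4}$, whose normalized integral equals $1$. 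The surviving $M$-independent factor is exactly the exponential of the relevant Miwa times, giving $\mathrm{e}^{N^2\sum_i s_ir_i}$, $\mathrm{e}^{N^2sq_1}$, and the stated expression in $u_{(i)}$.

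Since Lemma \ref{lem:ext_Miwa_non} is established separately (in Appendix \ref{app:miwa_no}) and Corollary \ref{cor:point_heat_no} follows from Proposition \ref{prop:master_heat_r}, the present theorem introduces no new combinatorics; its only delicate point is the first step, namely confirming that the integrals depend on the external matrices solely through the Miwa times, so that the chain-rule computations packaged in Lemma \ref{lem:ext_Miwa_non} apply. The genuine bookkeeping obstacle — matching the $1/N$-graded coefficients of the operators with the orientable building blocks $L_{0,2}$, $K_{0,2}$, $M_{0,2}$ together with the new pieces $L_1$, $K_1$, $M_1$, and $M_2^{\vee}$ that arise from the twisted-band term $\tfrac{1}{N}\delta_{\alpha\gamma}\delta_{\beta\epsilon}$ in (\ref{Wick_contr_non}) — is entirely absorbed into that lemma, so here it suffices to invoke it.
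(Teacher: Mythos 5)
Your proposal is correct and follows exactly the route the paper itself takes: Theorem \ref{Thm:cut_join_non} is obtained there precisely by combining the heat-type equations of Corollary \ref{cor:point_heat_no} with the Miwa-time rewriting of Lemma \ref{lem:ext_Miwa_non}, and then integrating the $y$-independent evolution operators against the $y=0$ initial data read off from the matrix integrals. The only remark worth adding is that your computation of the initial condition for $\widetilde{\mathcal Z}_N$ gives $\mathrm{e}^{+N^2\sum_{i\ge 0}s_iu_{(i)}}$, which indicates the minus sign in the last displayed initial condition of the theorem is a typographical slip rather than a discrepancy in your argument.
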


The cut-and-join equations (\ref{cut_join_blp_non}) agree with those of \cite{AAPZ, Andersen_new}. 
Finally, from Theorem \ref{Thm:cut_join_non} we find non-oriented analogues of cut-and-join equations for 1-backbone generating functions.

\begin{corollary}
The 1-backbone generating function $\widetilde{\mathcal{F}}_1(x,y;\{s_i\};\{u_{\ii}\})$ obtained by picking up the $\mathcal{O}(s_i^{1})$ term in $\widetilde{\mathcal{Z}}_N(y;\{s_i\};\{u_{\ii}\})$ is given by the following matrix integral
\begin{align}
\begin{split}
&
\widetilde{\mathcal{F}}_1(N^{-1},y;\{s_i\};\{u_{\ii}\}) =
\\
&
=\frac{1}{{\mathrm{Vol}}_N({\IR})}\int_{{\mathcal H}_N(\mathbb{R})} dM\;\mathrm{e}^{-N\mathrm{Tr}\frac{M^2}{4}}
N\sum_{i\ge 0}s_i\mathrm{Tr}
(y^{1/2}\Omega_{\mathrm{L}}^{-1}M+\Omega_{\mathrm{P}})^i
\Omega_{\mathrm{L}}^{-1},
\end{split}
\end{align}
and it obeys the cut-and-join equation
\begin{align}
\frac{\partial}{\partial y}\widetilde{\mathcal{F}}_1(x,y;\{s_i\};\{u_{\ii}\})
=\widetilde{{\mathcal M}}\widetilde{\mathcal{F}}_1(x,y;\{s_i\};\{u_{\ii}\}),
\end{align}
where $\widetilde{{\mathcal M}}=M_0+xM_1+x^2\big(M_2+M_2^{\vee}\big)$. The solution is iteratively determined by
\begin{align}
\widetilde{\mathcal{F}}_1(x,y;\{s_i\};\{u_{\ii}\})
=\mathrm{e}^{y\widetilde{\mathcal{M}}}\widetilde{\mathcal{F}}_1(x,y=0;\{s_i\};\{u_{\ii}\})
=\mathrm{e}^{y\widetilde{\mathcal{M}}}\bigg(x^{-2}\sum_{i\ge 0}s_iu_{(i)}\bigg).
\end{align}
\end{corollary}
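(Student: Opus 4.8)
The plan is to obtain this corollary as a direct restriction of the full cut-and-join equation of Theorem \ref{Thm:cut_join_non} to the one-backbone sector. The crucial structural observation is that the operator $\widetilde{\mathcal M}=M_0+\frac1N M_1+\frac{1}{N^2}(M_2+M_2^{\vee})$ is built purely from the variables $u_{\ii}$ and the derivatives $\partial/\partial u_{\ii}$, as is manifest from the formulas (\ref{M02}), (\ref{M1_non}), (\ref{M2_non}); the backbone-counting variables $\{s_i\}$ never enter it. Consequently both $\widetilde{\mathcal M}$ and $\partial/\partial y$ commute with the operation of extracting the homogeneous part of a given total degree in $\{s_i\}$, so the cut-and-join flow is block-triangular, indeed block-diagonal, with respect to this degree.

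First I would record that the number of backbones $b=\sum_i b_i$ equals the total degree in $\{s_i\}$, since each backbone contributes one factor through $\prod_i s_i^{b_i}$. Hence the one-backbone generating function $\widetilde{\mathcal F}_1$ is exactly the part of $\widetilde{\mathcal F}$ that is linear in $\{s_i\}$. Because $\widetilde{\mathcal F}=\sum_{b\ge 1}\widetilde{\mathcal F}_b$ has no $s$-independent term, the linear-in-$s$ parts of $\widetilde{\mathcal F}$ and of $\widetilde{\mathcal Z}_N=\exp[\widetilde{\mathcal F}]|_{x=N^{-1}}$ coincide (the quadratic and higher terms of the exponential are at least quadratic in $s$). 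Therefore $\widetilde{\mathcal F}_1$ equals the $\mathcal{O}(s^1)$ term of $\widetilde{\mathcal Z}_N$, and expanding the exponential in the integrand of (\ref{blp_model_non}) to first order in $\{s_i\}$ produces precisely the claimed matrix-integral formula for $\widetilde{\mathcal F}_1$.

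Next I would extract the linear-in-$s$ part of the cut-and-join equation $\partial_y\widetilde{\mathcal Z}_N=\widetilde{\mathcal M}\widetilde{\mathcal Z}_N$ of Theorem \ref{Thm:cut_join_non}. Since both operators preserve the $s$-degree, this restriction is legitimate and yields $\partial_y\widetilde{\mathcal F}_1=\widetilde{\mathcal M}\widetilde{\mathcal F}_1$, where upon setting $x=N^{-1}$ the operator reads $\widetilde{\mathcal M}=M_0+xM_1+x^2(M_2+M_2^{\vee})$, matching the statement. Regarding this as an evolution equation in $y$, the formal solution is $\widetilde{\mathcal F}_1=\mathrm{e}^{y\widetilde{\mathcal M}}\,\widetilde{\mathcal F}_1|_{y=0}$. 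To identify the initial datum I would set $y=0$ in the matrix integral: the integrand then becomes $M$-independent apart from the Gaussian weight, which integrates to the normalization $1$, leaving $N\sum_{i\ge 0}s_i\mathrm{Tr}(\Omega_{\mathrm P}^i\Omega_{\mathrm L}^{-1})=N^2\sum_{i\ge 0}s_i u_{(i)}=x^{-2}\sum_{i\ge 0}s_i u_{(i)}$, using (\ref{blp_time_non}) with $K=1$.

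Since every ingredient is an immediate specialization of results already established, there is no serious obstacle here. The only point requiring care is the bookkeeping that identifies ``one backbone'' with ``degree one in $\{s_i\}$'', together with the verification that $\widetilde{\mathcal M}$ differentiates only the $u_{\ii}$ variables; this is exactly what guarantees that the one-backbone sector is closed under the cut-and-join flow and hence that restricting the full equation is valid.
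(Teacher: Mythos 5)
Your argument is correct and is exactly the route the paper intends: the paper states this corollary as an immediate consequence of Theorem \ref{Thm:cut_join_non} without further proof, and your write-up simply makes explicit the two facts being used, namely that $\widetilde{\mathcal M}$ acts only on the $u_{\ii}$ variables and hence preserves the total degree in $\{s_i\}$ (so the one-backbone, i.e.\ linear-in-$s$, sector is closed under the flow), and that the $\mathcal{O}(s^1)$ term of $\exp[\widetilde{\mathcal F}]$ is $\widetilde{\mathcal F}_1$ itself, with the initial datum at $y=0$ read off from the Gaussian normalization and $u_{(i)}=\frac1N\mathrm{Tr}(\Omega_{\mathrm P}^i\Omega_{\mathrm L}^{-1})$. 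No gaps; this matches the paper's (implicit) proof.
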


%*****************************************************************************
%*****************************************************************************

\newpage
\appendix
%%%%%%%%%%%%%%%%%%%%%%%%%%%%%%%%%%%%%%%%%
%  Appendix A                          %
%%%%%%%%%%%%%%%%%%%%%%%%%%%%%%%%%%%%%%%%%
\section{\label{app:heat_no}Proof of Proposition \ref{prop:master_heat_r}}
%%%%%%%%%%%%%%%%%%%%%%%%%%%%%%%%%%%%%%%%%
%%%%%%%%%%%%%%%%%%%%%%%%%%%%%%%%%%%%%%%%%

In this appendix we prove the Proposition \ref{prop:master_heat_r}, which states that the matrix integral
\begin{align}
\begin{split}
&
\widetilde{Z}_N(y;\{g_i\};U;V) =
\\
&
=\frac{1}{\mathrm{Vol}_N({\IR})}\int_{\mathcal{H}_N({\IR})} dM\;
\exp\bigg[-N\mathrm{Tr}\bigg(\frac14M^2-\sum_{i\in {\IZ}}g_i(y^{1/2}V^{-1}M+U)^iV^{-1}\bigg)\bigg],
\nonumber
\end{split}
\end{align}
obeys the partial differential equation
\begin{equation}
\bigg[\frac{\partial}{\partial y}
-\frac{1}{4N}\mathrm{Tr}(V^{-1})^{\mathrm{T}}\frac{\partial}{\partial A}
(V^{-1})^{\mathrm{T}}\frac{\partial}{\partial A}\bigg]\widetilde{Z}_N(y;\{g_i\};U;V)=0,
\label{o_gblp_heat_ap}
\end{equation}
where $A$ is a matrix that satisfies $U=A+A^{\mathrm{T}}$.

\begin{proof}
In order to differentiate the matrix integral $\widetilde{Z}_N(y;\{g_i\};U;V)$ with respect to $A$ we use the identities
\begin{align}
\begin{split}
&
\frac{\partial U_{\alpha\beta}}{\partial A_{\gamma\epsilon}}=\delta_{\alpha\gamma}\delta_{\beta\epsilon}+\delta_{\alpha\epsilon}\delta_{\beta\gamma},\\
&
\frac{\partial (y^{1/2}V^{-1}X)^{-1}_{\alpha\beta}}{\partial A_{\gamma\epsilon}}=
-(y^{1/2}V^{-1}X)^{-1}_{\alpha\gamma}(y^{1/2}V^{-1}X)^{-1}_{\beta\epsilon}
-(y^{1/2}V^{-1}X)^{-1}_{\alpha\epsilon}(y^{1/2}V^{-1}X)^{-1}_{\beta\gamma},
\nonumber
\end{split}
\end{align}
where $X=M+y^{-1/2}VU$. Using this shifted variable $X$ one obtains
\begin{align}
\frac{1}{2N}\frac{\partial}{\partial A_{\alpha\beta}}\widetilde{Z}_N(y;\{g_i\};U;V)&=
\bigg<\sum_{i=0}^{\infty}y^{(i-1)/2}g_i\sum_{j=0}^{i-1}\big((V^{-1}X)^jV^{-1}(V^{-1}X)^{i-j-1}\big)_{\alpha\beta}
\nonumber\\
&
-\sum_{i=1}^{\infty}y^{-(i+1)/2}g_{-i}\sum_{j=0}^{i-1}\big((V^{-1}X)^{-j-1}V^{-1}(V^{-1}X)^{-i+j}\big)_{\alpha\beta}\bigg>_{\IR},
\nonumber
\end{align}
where $\left<\cdots\right>_{\IR}$ denotes the unnormalized average
\begin{equation}
\left<\mathcal{O}(X)\right>_{\IR}=
\int_{\widetilde{\mathcal{H}}_N({\IR})} dX\; \mathcal{O}(X)
\exp\bigg[-N\mathrm{Tr}\bigg(\frac14(X-y^{-1/2}VU)^2-\sum_{i\in {\IZ}}y^{i/2}g_i(V^{-1}X)^iV^{-1}\bigg)\bigg].
\nonumber
\end{equation}
Here $\widetilde{\mathcal{H}}_N({\IR})$ is the space of shifted matrices $X=M+y^{-1/2}VU$ with $M \in \mathcal{H}_N({\IR})$. It follows that
\begin{align}
&
\frac{1}{2N^2}\mathrm{Tr}(V^{-1})^{\mathrm{T}}\frac{\partial}{\partial A}
(V^{-1})^{\mathrm{T}}\frac{\partial}{\partial A}\widetilde{Z}_N(y;\{g_i\};U;V) =
\nonumber\\
&
=\bigg<\sum_{i=0}^{\infty}y^{-1/2}g_i\sum_{j=0}^{i-1}
\mathrm{Tr}(X-y^{-1/2}VU)V^{-1}(y^{1/2}V^{-1}X)^jV^{-1}(y^{1/2}V^{-1}X)^{i-j-1}
\nonumber\\
&\ \ 
-\sum_{i=1}^{\infty}y^{-1/2}g_{-i}\sum_{j=0}^{i-1}
\mathrm{Tr}(X-y^{-1/2}VU)V^{-1}(y^{1/2}V^{-1}X)^{-j-1}V^{-1}(y^{1/2}V^{-1}X)^{-i+j}\bigg>_{\IR}.
\label{o_vava_p}
\end{align}
On the other hand, by differentiating the matrix integral 
$\widetilde{Z}_N(y;\{g_i\};U;V)$ with respect to $y$ one obtains the same expression as (\ref{o_vava_p}) times $N/2$, from which the partial differential equation (\ref{o_gblp_heat_ap}) is obtained.
\end{proof}

%%%%%%%%%%%%%%%%%%%%%%%%%%%%%%%%%%%%%%%%%
%  Appendix B                          %
%%%%%%%%%%%%%%%%%%%%%%%%%%%%%%%%%%%%%%%%%
\section{\label{app:miwa_no}Proof of Lemma \ref{lem:ext_Miwa_non}}
%%%%%%%%%%%%%%%%%%%%%%%%%%%%%%%%%%%%%%%%%
%%%%%%%%%%%%%%%%%%%%%%%%%%%%%%%%%%%%%%%%%

In this appendix we prove the Lemma \ref{lem:ext_Miwa_non}, which states that for functions $f(\{r_i\})$, $g(\{q_i\})$, and $h(\{u_{\ii}\})$ of Miwa times $r_i$ in (\ref{reverse2}), $q_i$ in (\ref{Miwa2_non}), and $u_{\ii}$ in $(\ref{blp_time_non})$, we find
\begin{align}
&
\frac{1}{4N}\mathrm{Tr}\frac{\partial^2}{\partial \Lambda_{\mathrm{P}}^2}f(\{r_i\})
=\left(L_0+\frac{1}{N}L_1+\frac{2}{N^2}L_2\right)f(\{r_i\}),
\label{app_m_1}
\\
&
\frac{1}{4N}\mathrm{Tr}\frac{\partial^2}{\partial \Lambda_{\mathrm{L}}^2}g(\{q_i\})
=\left(K_0+\frac{1}{N}K_1+\frac{2}{N^2}K_2\right)g(\{q_i\}),
\label{app_m_2}
\\
\begin{split}
&
\frac{1}{4N}\mathrm{Tr}\left[
(\Omega_{\mathrm{L}}^{-1})^{\mathrm{T}}
\frac{\partial}{\partial\Lambda_{\mathrm{P}}}
(\Omega_{\mathrm{L}}^{-1})^{\mathrm{T}}
\frac{\partial}{\partial\Lambda_{\mathrm{P}}}\right]h(\{u_{\ii}\})
\\
&\qquad\qquad
=
\left(M_0+\frac{1}{N}M_1+\frac{1}{N^2}\big(M_2+M_2^{\vee}\big)\right)
h(\{u_{\ii}\}),
\label{app_m_3}
\end{split}
\end{align}
where $L_{0,1,2}$, $K_{0,1,2}$, $M_{0,1,2}$, and $M_2^{\vee}$ are defined in (\ref{L02_pt}), (\ref{K02}), (\ref{M02}), and in Lemma \ref{lem:ext_Miwa_non}. Here the matrices $\Lambda_{\mathrm{P}}$ and $\Lambda_{\mathrm{L}}$ satisfy $\Omega_{\mathrm{P}}=\Lambda_{\mathrm{P}}+\Lambda_{\mathrm{P}}^{\mathrm{T}}$ and
$\Omega_{\mathrm{L}}=\Lambda_{\mathrm{L}}+\Lambda_{\mathrm{L}}^{\mathrm{T}}$.

\begin{proof}
First we prove (\ref{app_m_1}). Consider the derivative $\partial/\partial\Lambda_{\mathrm{P}}$ of the reverse Miwa time $r_i$
\begin{align}
\frac{\partial r_i}{\partial \Lambda_{\mathrm{P}\beta\alpha}}
=\frac{2i}{N}\Omega_{\mathrm{P}\alpha\beta}^{i-1}, \qquad
\mathrm{Tr}\frac{\partial^2r_i}{\partial\Lambda_{\mathrm{P}}^2}
=2Ni\sum_{j=1}^{i-1}r_{j-1}r_{i-j-1}+2i(i-1)r_{i-2}.
\nonumber
\end{align}
Using these relations the left hand side of (\ref{app_m_1}) is rewritten as
\begin{align}
\begin{split}
\frac{1}{4N}\mathrm{Tr}\frac{\partial^2}{\partial\Lambda_{\mathrm{P}}^2}f(\{r_i\})
&=
\frac{1}{4N}\sum_{i\ge 1}\mathrm{Tr}\frac{\partial^2 r_i}{\partial
\Lambda_{\mathrm{P}}^2}\frac{\partial f(\{r_i\})}{\partial r_i}
+
\frac{1}{4N}\sum_{i,j\ge 1}\mathrm{Tr}
\frac{\partial r_i}{\partial \Lambda_{\mathrm{P}}}\frac{\partial r_j}{\partial
\Lambda_{\mathrm{P}}}
\frac{\partial^2f(\{r_i\})}{\partial r_i\partial r_j}
\\
&=
\frac{1}{2}\sum_{i\ge
2}\sum_{j=1}^{i-1}ir_{j-1}r_{i-j-1}\frac{\partial f(\{r_i\})}{\partial r_i}
+\frac{1}{2N}\sum_{i\ge 2}i(i-1)r_{i-2}\frac{\partial
f(\{r_i\})}{\partial r_i}
\\
&\ \ \
+\frac{1}{N^2}\sum_{i,j\ge 1}ijr_{i+j-2}\frac{\partial}{\partial r_i}\frac{\partial}{\partial r_j}f(\{r_i\}).
\nonumber
\end{split}
\end{align}
This agrees with the right hand side of (\ref{app_m_1}).

Second, we prove (\ref{app_m_2}). Using the identity
\begin{align}
\frac{\partial \Omega^{-1}_{\mathrm{L}\gamma\epsilon}}{\partial \Lambda_{\mathrm{L}\alpha\beta}}
=-\Omega^{-1}_{\mathrm{L}\beta\epsilon}\Omega^{-1}_{\mathrm{L}\gamma\alpha}-\Omega^{-1}_{\mathrm{L}\alpha\epsilon}\Omega^{-1}_{\mathrm{L}\gamma\beta},
\nonumber 
\end{align}
one finds that
\begin{align}
\frac{\partial q_i}{\partial
\Lambda_{\mathrm{L}\alpha\beta}}=-2\frac{i}{N}\Omega^{-i-1}_{\mathrm{L}\beta\alpha},
\qquad
\mathrm{Tr}\frac{\partial^2}{\partial \Lambda_{\mathrm{L}}^2}q_i
=2i(i+1)q_{i+2}+2iN\sum_{j=0}^iq_{i+1}q_{j+1}.
\nonumber
\end{align}
Then the left hand side of (\ref{app_m_2}) yields
\begin{align}
\frac{1}{4N}\mathrm{Tr}\frac{\partial^2}{\partial\Lambda_{\mathrm{L}}^2}g(\{q_i\})
&=
\frac{1}{2}\sum_{i\ge 1}\sum_{j=0}^iiq_{i+1}q_{j+1}\frac{\partial
g(\{q_i\})}{\partial q_i}
+
\frac{1}{2N}\sum_{i\ge 1}i(i+1)q_{i+2}\frac{\partial g(\{q_i\})}{\partial q_i}
\nonumber\\
&\ \ \
+\frac{1}{N^2}\sum_{i,j\ge 1}ijq_{i+j+2}
\frac{\partial^2g(\{q_i\})}{\partial q_i\partial q_j}.
\nonumber
\end{align}
This agrees with the right hand side of (\ref{app_m_2}).

Finally we prove (\ref{app_m_3}). Using the chain rule, the derivative action on the left hand side of (\ref{app_m_3}) is written as
\begin{align}
&
{\mathrm{Tr}}\left[
(\Omega_{\mathrm{L}}^{-1})^{\mathrm{T}}\frac{\partial}{\partial\Lambda_{\mathrm{P}}}(\Omega_{\mathrm{L}}^{-1})^{\mathrm{T}}\frac{\partial}{\partial\Lambda_{\mathrm{P}}}\right]h(\{u_{\ii}\}) =
\nonumber \\
&
=\sum_{K\ge 1}\sum_{\{i_1,\ldots,i_K\}}
{\mathrm{Tr}}\left[
(\Omega_{\mathrm{L}}^{-1})^{\mathrm{T}}\frac{\partial}{\partial\Lambda_{\mathrm{P}}}(\Omega_{\mathrm{L}}^{-1})^{\mathrm{T}}\frac{\partial}{\partial\Lambda_{\mathrm{P}}}
u_{(i_1,\ldots,i_K)}\right]\frac{\partial}{\partial
 u_{(i_1,\ldots,i_K)}}h(\{u_{\ii}\})
\nonumber \\
&\quad
+\sum_{K,L\ge 1}\sum_{\{i_1,\ldots,i_K\}}\sum_{\{j_1,\ldots,j_L\}}
\mathrm{Tr}\left[
(\Omega_{\mathrm{L}}^{-1})^{\mathrm{T}}\frac{\partial}{\partial\Lambda_{\mathrm{P}}}u_{(i_1,\ldots,i_K)}
(\Omega_{\mathrm{L}}^{-1})^{\mathrm{T}}\frac{\partial}{\partial\Lambda_{\mathrm{P}}}
u_{(j_1,\ldots,j_L)}\right]
\nonumber \\
&\quad\quad\quad\quad\quad\quad\quad\quad\quad\quad\quad\quad
\times\frac{\partial^2}{\partial u_{(i_1,\ldots,i_K)}\partial u_{(j_1,\ldots,j_{L})}}h(\{u_{\ii}\}).
\nonumber
\end{align}
Each of the coefficients yields
\begin{align}
&
{\mathrm{Tr}}\left[
(\Omega_{\mathrm{L}}^{-1})^{\mathrm{T}}\frac{\partial}{\partial\Lambda_{\mathrm{P}}}(\Omega_{\mathrm{L}}^{-1})^{\mathrm{T}}\frac{\partial}{\partial\Lambda_{\mathrm{P}}}
u_{(i_1,\ldots,i_K)}\right] =
\nonumber \\
&
=2\sum_{1\le I\ne M\le K}^K\sum_{\ell=0}^{i_I-1}\sum_{m=0}^{i_M-1}
\frac{1}{N}\Bigl(
\mathrm{Tr}(
\Omega_{\mathrm{P}}^{i_I-\ell-1}\Omega_{\mathrm{L}}^{-1}\Omega_{\mathrm{P}}^{i_{I+1}}\Omega_{\mathrm{L}}^{-1}\cdots
\Omega_{\mathrm{P}}^{i_{M-1}}\Omega_{\mathrm{L}}^{-1}\Omega_{\mathrm{P}}^{m}\Omega_{\mathrm{L}}^{-1})
\nonumber \\
&\qquad\qquad\qquad\qquad\qquad\qquad
\times\mathrm{Tr}(
\Omega_{\mathrm{P}}^{i_M-m-1}\Omega_{\mathrm{L}}^{-1}\Omega_{\mathrm{P}}^{i_{M+1}}\Omega_{\mathrm{L}}^{-1}\cdots
\Omega_{\mathrm{P}}^{i_{I-1}}\Omega_{\mathrm{L}}^{-1}\Omega_{\mathrm{P}}^{\ell}\Omega_{\mathrm{L}}^{-1})
\nonumber \\
&\qquad\qquad\qquad\qquad\qquad
+\mathrm{Tr}(\Omega_{\mathrm{P}}^{m}\Omega_{\mathrm{L}}^{-1}\Omega_{\mathrm{P}}^{i_{M-1}}\Omega_{\mathrm{L}}^{-1}\Omega_{\mathrm{P}}^{i_{M-2}}\Omega_{\mathrm{L}}^{-1}\cdots
\Omega_{\mathrm{P}}^{i_{I+1}}\Omega_{\mathrm{L}}^{-1}\Omega_{\mathrm{P}}^{i_I-\ell-1}\Omega_{\mathrm{L}}^{-1})
\nonumber \\
&\qquad\qquad\qquad\qquad\qquad\qquad\quad
\cdot
\Omega_{\mathrm{P}}^{i_M-m-1}\Omega_{\mathrm{L}}^{-1}\Omega_{\mathrm{P}}^{i_{M+1}}\Omega_{\mathrm{L}}^{-1}\Omega_{\mathrm{P}}^{i_{M+2}}\Omega_{\mathrm{L}}^{-1}\cdots
\Omega_{\mathrm{P}}^{i_{I-1}}\Omega_{\mathrm{L}}^{-1}\Omega_{\mathrm{P}}^{\ell}\Omega_{\mathrm{L}}^{-1})
\Bigr)
\nonumber \\
&\ \ \
+4\sum_{I=1}^K\sum_{\ell+m\le i_I-2}\Bigl(
\frac{1}{N}
\mathrm{Tr}(\Omega_{\mathrm{P}}^{\ell}\Omega_{\mathrm{L}}^{-1}\Omega_{\mathrm{P}}^{m}\Omega_{\mathrm{L}}^{-1}\Omega_{\mathrm{P}}^{i_{I+1}}\Omega_{\mathrm{L}}^{-1}\cdots\Omega_{\mathrm{P}}^{i_{I-1}}\Omega_{\mathrm{L}}^{-1})
\mathrm{Tr}(\Omega_{\mathrm{P}}^{i_I-\ell-m-2}\Omega_{\mathrm{L}}^{-1})
\nonumber \\
&\qquad\qquad\qquad\qquad\quad
+\mathrm{Tr}(\Omega_{\mathrm{P}}^{\ell}\Omega_{\mathrm{L}}^{-1}\Omega_{\mathrm{P}}^{m}\Omega_{\mathrm{L}}^{-1}\Omega_{\mathrm{P}}^{i_I-\ell-m-2}\Omega_{\mathrm{L}}^{-1}\Omega_{\mathrm{P}}^{i_{I+1}}\Omega_{\mathrm{L}}^{-1}\cdots\Omega_{\mathrm{P}}^{i_{I-1}}\Omega_{\mathrm{L}}^{-1})\Bigr)
\nonumber \\
&
=2\sum_{1\le I\ne M\le K}
\sum_{\ell=0}^{i_I-1}\sum_{m=0}^{i_M-1}
\Big(Nu_{(i_I-\ell-1,i_{I+1},\ldots,i_{M-1},m)}
u_{(i_M-m-1,i_{M+1},\ldots,i_{I-1},\ell)}
\nonumber \\
&\qquad\qquad\qquad\qquad\qquad\quad
+u_{(m,i_{M-1},i_{M-2}\ldots,i_{I+1},i_I-\ell-1,
i_M-m-1,i_{M+1},\ldots,i_{I-1},\ell)}\Big)
\nonumber \\
&\ \ \
+4\sum_{I=0}^K\sum_{\ell+m\le i_I-2}
\Big(Nu_{(\ell,m,i_{I+1},\ldots,i_{I-1})}u_{(i_I-\ell-m-2)}
+u_{(\ell,i_I-\ell-m-2,m,i_{I+1},\ldots,i_{I-1})}\Big),
\nonumber 
\end{align}
and
\begin{align}
&
\mathrm{Tr}\left[
(\Omega_{\mathrm{L}}^{-1})^{\mathrm{T}}\frac{\partial}{\partial\Lambda_{\mathrm{P}}}u_{(i_1,\ldots,i_K)}
(\Omega_{\mathrm{L}}^{-1})^{\mathrm{T}}\frac{\partial}{\partial\Lambda_{\mathrm{P}}}
u_{(j_1,\ldots,j_L)}\right] =
\nonumber \\
&
=\sum_{I=1}^K\sum_{J=1}^L\sum_{\ell=0}^{i_I-1}\sum_{m=0}^{j_J-1}
\frac{2}{N^2}\biggl(
{\mathrm{Tr}}(\Omega_{\mathrm{P}}^{i_I-\ell-1}\Omega_{\mathrm{L}}^{-1}\Omega_{\mathrm{P}}^{i_{I+1}}\Omega_{\mathrm{L}}^{-1}\cdots\Omega^{i_{I-1}}\Omega_{\mathrm{L}}^{-1}\Omega_{\mathrm{P}}^{\ell}\Omega_{\mathrm{L}}^{-1}
\nonumber \\
&\qquad\qquad\qquad\qquad\qquad\qquad
\cdot
\Omega_{\mathrm{P}}^{j_J-m-1}\Omega_{\mathrm{L}}^{-1}\Omega_{\mathrm{P}}^{j_{J+1}}\Omega_{\mathrm{L}}^{-1}\cdots\Omega_{\mathrm{P}}^{j_{J-1}}\Omega_{\mathrm{L}}^{-1}\Omega_{\mathrm{P}}^m\Omega_{\mathrm{L}}^{-1})
\nonumber \\
&\qquad\qquad\qquad\qquad\qquad\quad
+
{\mathrm{Tr}}(\Omega_{\mathrm{P}}^{\ell}\Omega_{\mathrm{L}}^{-1}\Omega_{\mathrm{P}}^{i_{I-1}}\Omega_{\mathrm{L}}^{-1}\cdots\Omega^{i_{I+1}}\Omega_{\mathrm{L}}^{-1}\Omega_{\mathrm{P}}^{i_I-\ell-1}\Omega_{\mathrm{L}}^{-1}
\nonumber \\
&\qquad\qquad\qquad\qquad\qquad\qquad\qquad
\cdot
\Omega_{\mathrm{P}}^{j_J-m-1}\Omega_{\mathrm{L}}^{-1}\Omega_{\mathrm{P}}^{j_{J+1}}\Omega_{\mathrm{L}}^{-1}\cdots\Omega_{\mathrm{P}}^{j_{J-1}}\Omega_{\mathrm{L}}^{-1}\Omega_{\mathrm{P}}^m\Omega_{\mathrm{L}}^{-1})
\biggr)
\nonumber \\
&
=\frac{2}{N}\sum_{I=1}^K\sum_{J=1}^L\sum_{\ell=0}^{i_I-1}\sum_{m=0}^{j_J-1}
\biggl(
u_{(i_I-\ell-1,i_{I+1},\ldots,i_{I-1},\ell,j_J-m-1,j_{J+1},\ldots,j_{J-1},m)}
\nonumber \\
&\qquad\qquad\qquad\qquad\qquad\quad
+u_{(\ell,i_{I-1},\ldots,i_{I+1},i_I-\ell-1,j_J-m-1,j_{J+1},\ldots,j_{J-1},m)}
\biggr).
\nonumber
\end{align}
Then one obtains the right hand side of (\ref{app_m_3}).
\end{proof}

%%%%%%%%%%%%%%%%%%%%%%%%%%%%%%%%%%%%%%%%%%%%%%%%%%%%%%%%%%%%%%%%%%%%%%%
%%%%%%%%%%%%%%%%%%%%%%%%%%%%%%%%%%%%%%%%%%%%%%%%%%%%%%%%%%%%%%%%%%%%
%%%%%          The bibliography
%%%%%%%%%%%%%%%%%%%%%%%%%%%%%%%%%%%%%%%%%%%%%%%%%%%%%%%%%%%%%%%%%%%%%

\end{document}